\documentclass[11pt]{article}

\usepackage[T1]{fontenc}
\usepackage[utf8]{inputenc}
\usepackage[a4paper,margin=1in]{geometry}
\usepackage{lmodern}
\usepackage{microtype}
\usepackage{etoolbox}
\usepackage{amsmath,amssymb,amsthm}
\usepackage{bm}
\usepackage{dsfont}
\usepackage{graphicx}
\usepackage{adjustbox}
\usepackage{booktabs}
\usepackage{multirow}
\usepackage{array}
\usepackage{threeparttable}
\usepackage{longtable}
\usepackage{pdflscape}
\usepackage{float}
\usepackage{algorithm}
\usepackage{algpseudocode}
\usepackage{tikz}
\usepackage{subcaption}
\usepackage{caption}
\usepackage{siunitx}
\usepackage{ragged2e}
\usepackage{setspace}
\usepackage{url}
\usepackage{natbib}
\usepackage[dvipsnames,svgnames,x11names]{xcolor}
\usepackage[colorlinks=true,linkcolor=blue,citecolor=blue,urlcolor=blue]{hyperref}

\DeclareUnicodeCharacter{2011}{\nobreakdash-}
\DeclareUnicodeCharacter{2013}{--}
\DeclareUnicodeCharacter{2019}{'}

\newtheorem{theorem}{Theorem}
\newtheorem{lemma}{Lemma}

\newtheorem{corollary}{Corollary}
\theoremstyle{definition}
\newtheorem{definition}{Definition}
\newtheorem{assumption}{Assumption}

\theoremstyle{remark}

\newcommand{\ind}{\mathds{1}}
\newcommand{\bma}{\mathbf{a}}

\newcommand{\bmA}{\mathbf{A}}
\newcommand{\bmx}{\mathbf{x}}

\newcommand{\bmX}{\mathbf{X}}

\newcommand{\bmU}{\mathbf{U}}
\newcommand{\bmu}{\mathbf{u}}
\newcommand{\bmV}{\mathbf{V}}

\newcommand{\bmF}{\mathbf{F}}
\newcommand{\bmv}{\mathbf{v}}
\newcommand{\bmg}{\mathbf{g}}
\newcommand{\bmzeta}{\bm{\zeta}}

\newcommand{\cmH}{\mathcal{H}}
\newcommand{\bmPA}{\mathbf{PA}}
\newcommand{\bmpa}{\mathbf{pa}}
\newcommand{\bmw}{\mathbf{w}}
\newcommand{\bmW}{\mathbf{W}}
\newcommand{\bmT}{\mathbf{T}}

\newcommand{\bmxi}{\bm{\xi}}
\newcommand{\bmtheta}{\bm{\theta}}
\newcommand{\bmB}{\mathcal{B}}
\newcommand{\bmI}{\mathcal{I}}
\newcommand{\prob}{\mathbb{P}}
\newcommand{\texto}{\text{O}}
\newcommand{\nm}[1]{\left\|#1\right\|}
\DeclareMathOperator{\EL}{EL}
\DeclareMathOperator{\UL}{UL}

\newcommand{\zc}[1]{#1}
\newcommand{\smalleq}[2][0.85]{\scalebox{#1}{$\displaystyle #2$}}
\newcommand{\up}{\rule{0pt}{2.6ex}}
\newcommand{\down}{\rule[-1.2ex]{0pt}{1.2ex}}
\newcommand{\TABLE}[3]{\centering\caption{#1}\begin{adjustbox}{max width=\textwidth}#2\end{adjustbox}\ifstrempty{#3}{}{\par\smallskip\noindent\begin{minipage}{0.96\textwidth}\footnotesize #3\end{minipage}}}
\newcommand{\FIGURE}[3]{\centering#1\caption{#2}\ifstrempty{#3}{}{\par\smallskip\noindent\begin{minipage}{0.96\textwidth}\footnotesize #3\end{minipage}}}

\usetikzlibrary{arrows,shapes.arrows,shapes.geometric,shapes.multipart,decorations.pathmorphing,positioning}
\tikzstyle{startstop} = [rectangle,rounded corners,minimum width=3cm,minimum height=1cm,text centered,draw=black]
\tikzstyle{process} = [rectangle,minimum width=3cm,minimum height=1cm,text centered,draw=black]
\tikzstyle{decision} = [diamond,minimum width=3cm,minimum height=1cm,text centered,draw=black,aspect=2]
\tikzstyle{arrow} = [thick,->,>=stealth]

\title{Effort-Centric Fairness in Lending Decisions}
\author{
Shiqi Fang \quad
Zexun Chen\thanks{Corresponding author. Email: \texttt{Zexun.Chen@ed.ac.uk}} \quad
Jake Ansell\\
University of Edinburgh Business School, UK
}
\date{}

\begin{document}
\maketitle

\begin{abstract}
Algorithmic credit scoring must satisfy fairness and explanation requirements, yet prevailing predictive-parity criteria assess only outcomes at the decision point. They can therefore overlook whether rejected applicants face unequal burdens in reaching future approval, a phenomenon we call masked inequality. We develop an effort-centric framework that measures an applicant's effort as the minimum weighted cost of feasible changes required to cross the approval boundary. The framework distinguishes feature-independent actions from additive structural shifts that propagate through a causal model and defines parity by comparing average minimum effort across protected groups. We derive tractable local expressions for general differentiable classifiers and exact expressions for logistic regression, embed them in an in-processing fairness objective, and bound changes in portfolio credit risk. The same optimisation yields actionable pathways to approval. Using mortgage data with continuous and discrete features, we find that rejected female applicants require greater effort even when standard predictive-parity criteria are satisfied. Feature-independent regularisation reduces the effort gap by more than 50\% with modest predictive changes. Causal regularisation yields reductions above 90\% at the tested positive penalty weights, but with larger predictive and risk-return trade-offs. Expected and unexpected losses remain broadly stable under feature-independent regularisation and increase under causal regularisation; RAROC declines but remains positive. These results show that effort parity complements predictive fairness by revealing and mitigating hidden barriers to future credit access while making the associated operational trade-offs explicit.
\end{abstract}

\noindent\textbf{Keywords:} Credit Scoring, Algorithmic Fairness, Explanation, Recourse

\section{Introduction}
\label{sec:introduction}

Machine learning (ML) is increasingly central to credit scoring, enabling lenders to evaluate default likelihood and make more informed credit decisions. However, growing concerns about fairness have accompanied these advances.
The European Union's Artificial Intelligence Act classifies creditworthiness assessment as a high-risk application, requiring that automated systems undergo rigorous fairness evaluation before deployment and throughout their operational lifecycle \citep{act2024eu}. Similar regulatory pressures are emerging globally; the U.S. Consumer Financial Protection Bureau scrutinises algorithmic lending for disparate impact, whilst the UK Financial Conduct Authority has issued guidance on fair treatment in automated decisions \citep{uk2021guidance}. These developments reflect a fundamental tension in modern credit markets; ML that optimises predictive accuracy may systematically disadvantage protected groups, raising both legal liability and reputational risk for financial institutions.

A parallel and increasingly intertwined regulatory concern is the explainability of automated credit decisions. In the United States, the Equal Credit Opportunity Act (ECOA) requires creditors who deny an application to furnish an adverse action notice stating the specific, principal reasons for the denial \citep{ecoa1974, regb}. The European Union's General Data Protection Regulation (GDPR) imposes a comparable duty, granting data subjects a right to meaningful information about the logic of automated decisions that produce legal or similarly significant effects \citep{gdpr2016}, a provision that has since motivated a substantial legal and technical literature on explanation \citep{ballegeer2025evaluating, tu2025inherently}.

These two regulatory commitments, to fair outcomes and to explainable decisions, do not automatically guarantee one another.
The prevailing approach to algorithmic fairness in credit scoring focuses on \emph{predictive parity}, ensuring that model predictions or outcomes are equitably distributed across demographic groups.
Prominent criteria include statistical parity \citep{kamishima2012fairness} and equalised odds \citep{hardt_equality_2016}. These metrics have been widely adopted in both academic research \citep{kozodoi_fairness_2022, hurlin2024fairness} and industry practice, forming the basis for regulatory compliance assessments.
Recent work in management science has examined the strategic implications of such fairness constraints, demonstrating that these fairness requirements can reduce firms' incentives to invest in model accuracy and ultimately can harm the groups they are intended to protect \citep{fu2022fair}.
Yet predictive parity captures only a static notion of fairness: whether loan acceptances are equitably distributed at the moment of decision.
It neglects a prospective question that the disclosure requirements themselves implicitly raise. By obliging lenders to state the principal reasons for a denial, those rules presuppose that a rejected applicant can identify what would need to change to obtain approval; they are silent on whether that route back into approval is comparably burdensome across groups.
This prospective dimension can give rise to a subtle form of disparate impact: a facially neutral scoring rule, one that makes no explicit use of protected characteristics and may even satisfy predictive parity, can nonetheless impose a systematically more burdensome recourse process on a protected group. Under Regulation B, the implementing regulation of the ECOA, such practices may be challenged under the effects test when the resulting burden falls disproportionately on a prohibited basis and cannot be justified by business necessity, particularly where a less discriminatory alternative exists \citep{regb}. The framework developed in this paper supplies the operational counterpart of this principle by measuring recourse burden across groups and directly searching for less discriminatory models.

\zc{In lending, rejection is often followed by reapplication after applicants adjust financial attributes such as income, leverage, or debt ratios. Consider two otherwise comparable rejected applicants: if one must increase income by \$200 to reach approval while the other must increase it by \$400, a systematic association between that burden and protected-group membership creates what we term \emph{masked inequality}. Credit decisions are also recurrent. Applicants facing larger burdens are less likely to reverse a denial, and denial may restrict the liquidity needed to improve creditworthiness. Because lending decisions affect the observations available for subsequent model development, effort gaps can persist or compound while remaining invisible to static predictive-parity audits.}

Given these dynamics, masked inequality has significant implications for lenders, regulators, and consumers. For \textbf{lenders}, effort disparities create reputational exposure and may suppress conversion among potentially creditworthy borrowers who are discouraged by infeasible improvement requirements. For \textbf{regulators}, a model that passes predictive-parity audits but systematically requires protected groups to exert greater effort to reverse an adverse decision remains, by construction, undetected by the very audits designed to certify its fairness; the group-level effort gap constitutes a distinct dimension of disparate impact that current compliance frameworks are not equipped to measure. For \textbf{consumers}, particularly those from protected groups who already face structural disadvantages in labour and housing markets, disproportionate effort requirements compound existing inequalities and may transform temporary denial into permanent exclusion.

\zc{We propose an \emph{effort-centric} fairness framework that measures the minimum weighted cost of feasible changes required for each rejected applicant to obtain approval. It distinguishes feature-independent changes from causal actions and compares their burdens across protected groups. Because these applicant-level costs depend on the classifier's evolving decision boundary, direct integration into training produces a computationally prohibitive bi-level problem. We derive closed-form local surrogates for general differentiable classifiers and exact weighted-distance expressions for logistic regression, reducing the problem to a tractable single-level objective. The same optimisation identifies concrete pathways that rejected applicants can follow towards approval.}

Using real mortgage data with a mix of continuous and discrete features, our empirical analysis yields four main findings.
First, substantial effort disparities persist even when models satisfy standard predictive parity criteria; rejected female applicants are systematically farther from the approval boundary and therefore face higher weighted effort.
Second, feature-independent effort regularisation reduces disparities by more than 50\% with modest predictive changes, while causal effort regularisation produces larger fairness gains with a clearer predictive trade-off.
Third, the financial effects are finite and explicitly measured: feature-independent regularisation leaves loss measures broadly stable, whereas causal regularisation raises them relative to baseline; RAROC declines but remains positive in both cases.
Fourth, optimising for effort fairness often improves predictive-parity metrics as a by-product, suggesting that effort-based objectives can complement traditional fairness criteria.

\zc{Our contributions are fourfold. First, we formalise feature-independent and causal effort parity as complements to predictive parity and connect group-level diagnostics to applicant-level pathways. Second, we develop a tractable in-processing method based on local analytical surrogates, with an exact logistic-regression specialisation. Third, we derive finite bounds on changes in the considered credit-risk measures under the stated regularity conditions. Fourth, using mortgage data, we document masked inequality under predictive parity and quantify the fairness, predictive, and risk-return consequences of mitigating it.}

The remainder of this paper is organised as follows. Section~\ref{sec:literature} reviews related work on algorithmic fairness and credit risk modelling. Section~\ref{sec:framework} introduces the effort-centric fairness framework, defining effort-based measures of inequality. Section~\ref{sec:approximate_learning} develops tractable approximations that embed effort parity constraints into gradient-based model training. Section~\ref{sec:risk_bounds} demonstrates the controllability of the associated credit risk, and Section~\ref{sec:discrete_recourse} shows how the framework yields actionable explanations for rejected borrowers. Section~\ref{sec:experiment} reports the empirical analysis. Section~\ref{sec:discussions} discusses managerial and regulatory implications, and Section~\ref{sec:conclusions} concludes.

\section{Related Literature}
\label{sec:literature}

Our work intersects three streams of research: algorithmic fairness in management science, ML for credit scoring, and counterfactual explanations for recourse.

\subsection{Algorithmic Fairness in Management Science}

The growing deployment of ML in consequential decisions has generated substantial interest in algorithmic fairness across management and operations research. \citet{de2022algorithmic} provides a comprehensive review oriented toward business analytics, identifying legal compliance, social responsibility, and organisational performance as key motivations for fairness and charting research directions for the field. They emphasise that unfair systems threaten not only societal welfare but also firms' competitiveness and survival.

Within management science, a series of papers examines the strategic implications of fairness constraints. \citet{fu2022fair} demonstrate that ``fair'' ML algorithms requiring impact parity can paradoxically make everyone worse off, including the protected group, by reducing firms' incentives to invest in model accuracy. \citet{cohen2022price} integrates fairness notions into explicit economic models of credit markets to study interest rate setting and to characterise the welfare implications of different fairness constraints in a competitive credit market. Building on this foundation, \citet{cohen2025dynamic} extends the analysis to a dynamic setting in which loan demand and the composition of the applicant pool evolve over time in response to lending policies, showing how static fairness constraints can have unintended long-run distributional consequences.

Other work addresses fairness in specific operational contexts. \citet{ganju2020role} demonstrates that decision support systems can attenuate racial biases in healthcare delivery. \citet{rea2021unequal} develops operational allocation models incorporating distributive justice, showing that unequal but fair allocations can be optimal under explicit equity considerations. \citet{lambrecht2019algorithmic} provides empirical evidence on gender-based discrimination in algorithmic advertising, illustrating how neutral optimisation can produce disparate outcomes.

A complementary literature examines fairness from data and methodological perspectives. \citet{zhang2018assessing} addresses the practical challenge that race and ethnicity are typically unobserved in mortgage data, evaluating proxy methods such as Bayesian Improved Surname Geocoding (BISG) and demonstrating that proxy choice can substantially influence the estimation of pricing disparities. \citet{kallus2022assessing} generalises this problem, deriving sharp bounds on fairness measures when protected attributes are unobserved in the primary dataset but available in auxiliary data. Beyond measurement, \citet{hu2025human} investigates how algorithmic systems trained on historical human decisions can simultaneously mitigate and propagate bias, showing that outcomes depend critically on market conditions and the feedback loop between algorithmic outputs and future training data. Together, these papers highlight that fairness assessment requires attention not only to the choice of criterion but also to data provenance, proxy quality, and human-algorithm dynamics.

\subsection{Machine Learning in Credit Scoring}

Credit scoring has long served as a testbed for predictive modelling, with ML methods offering improved accuracy over traditional scorecards \citep{lessmann2015benchmarking, baesens2003benchmarking}. \citet{desai1996comparison} shows that multilayer neural networks deliver modest but systematic gains in discriminatory power over linear scoring models in credit union settings, whilst  \citet{west2000neural} compares several neural architectures and finds that carefully tuned networks generally outperform logistic regression. Synthesising subsequent developments, \citet{dastile2020statistical} reviews statistical and ML approaches and concludes that ensemble-based classifiers tend to dominate traditional techniques when combined with appropriate preprocessing.

These predictive advances intensify regulatory concerns around fairness.
\citet{fuster2022predictably} embed ML credit risk models in an equilibrium mortgage market and show that, whilst ML improves prediction and expands credit overall, it can simultaneously widen interest rate disparities between racial groups.
\citet{kozodoi_fairness_2022} benchmark fairness notions and mitigation strategies on credit data, quantifying the frontier among accuracy, profit, and fairness and showing that different fairness definitions induce markedly different profitability and inclusion outcomes.
\citet{hurlin2024fairness} formalise statistical tests for disparate treatment and disparate impact, introducing Fairness Partial Dependence Plots to identify variables driving detected unfairness.

Interpretability has also received increasing attention. \citet{de2024explainable} surveys explainable AI for operational research, covering inherently interpretable models, transparency-enhancing feature engineering, and visual analytics. \citet{bucker2022transparency} provides a comprehensive framework, integrating SHAP, LIME, BreakDown, and PDPs, tailoring global and local explanations to different stages (e.g., application, behavioural scoring, collections) of the credit scoring process and stakeholder (clients, loan officers, regulators) needs.

However, this literature focuses exclusively on predictive parity, that is, equalising outcomes or error rates at the point of decision, leaving unexplored whether rejected applicants face equitable opportunities for \emph{future} approval. A model satisfying statistical parity may impose systematically higher improvement burdens on one group, a disparity invisible to existing fairness diagnostics.

\subsection{Counterfactual Explanations and Algorithmic Recourse}

Counterfactual explanations identify minimal changes to input features that would alter a model's prediction, providing ``what-if'' scenarios for rejected applicants \citep{wachter2018counterfactual}. Recent work formalises their generation as tractable optimisation problems \citep{kurtz2025counterfactual, de2024model, bogetoft2024counterfactual, dastile2022model} and surveys the resulting methods \citep{verma2024counterfactual,karimi2022survey}. In credit scoring, \citet{dastile2022model} proposes a model-agnostic evolutionary approach that yields compact counterfactuals for both loan rejections and approvals, while \citet{carrizosa2024generating} develops collective counterfactuals, aligning recommendations with batch decision workflows in lending \citep{carrizosa2024mathematical}.

A key development is the distinction between counterfactual explanations and algorithmic recourse. \citet{karimi2021algorithmic} argue that counterfactual explanations tell individuals where they need to reach but not how to get there, and that causal reasoning is essential for generating recommendations that achieve the desired outcome when acted upon. Building on this insight, \citet{karimi2020algorithmic} develops methods for causal recourse under imperfect causal knowledge using a probabilistic approach.
Several papers connect recourse to fairness considerations. \citet{ustun2019actionable} provide a canonical formulation of actionable recourse in linear classification, \citet{heidari2019long} formalise effort unfairness and its long-run consequences, and \citet{hu2019disparate} show that groups can face different costs of strategic adaptation. More directly, \citet{gupta2019equalizing} train classifiers with recourse-equalisation constraints, \citet{ross2021learning} learn models that support actionable recourse, and \citet{von2022fairness} introduce fairness criteria for causal algorithmic recourse. These studies establish that recourse can be embedded in model development rather than treated only as a post-hoc explanation problem. What remains comparatively underdeveloped is an in-processing framework tailored to credit scoring that combines effort-parity training with mixed credit features, a tractable approximation to the applicant-level recourse problem, and an explicit analysis of credit risk and profitability.

\subsection{Positioning and Contribution}

Our work addresses this credit-scoring gap. First, we formalise feature-independent and causal effort parity for rejected credit applicants, linking recourse fairness to the operational setting in which lenders must evaluate mutable financial attributes, immutable applicant characteristics, and dependent credit features. The formalisation elegantly delivers both fairness and explainability. Second, we derive local closed-form surrogates for general differentiable classifiers and exact weighted-distance expressions for our model, collapsing the applicant-level bi-level recourse problem into a single-level formulation. Third, we embed the applicable closed forms in model training and provide empirical evidence on mortgage credit data, documenting both the existence of masked inequality under predictive parity and the effectiveness of our mitigation approach. Fourth, we connect effort-parity training to credit-risk management by deriving risk bounds and reporting expected loss, unexpected loss, interest revenue, and RAROC. Thus, relative to prior recourse-fairness work, the contribution is the integration of effort parity, tractable closed-form training objectives, and risk-return analysis in credit scoring.

\section{The Effort-Centric Fairness Framework}
\label{sec:framework}

This section develops our effort-centric approach to fairness in credit scoring. We first establish notation and then demonstrate why predictive parity criteria, whilst valuable, fail to ensure equitable opportunities for future loan approval. We then formalise effort quantification and define effort parity.

\subsection{Setup and Notation}
\label{subsec:setup}
Let $\bmX = (X_1, \ldots, X_d)$ denote a $d$-dimensional random vector of observed features, taking values $\bmx = (x_1, \ldots, x_d) \in \mathcal{X} \subseteq \mathbb{R}^d$. These features typically include applicant characteristics such as income and debt ratio. Let $S \in \{0,1\}$ denote a binary protected attribute representing group membership (e.g., gender), where $S=0$ indicates membership in a protected group. We write $\bmV = (\bmX, S)$ for the full feature vector including the protected attribute, with realisations $\bmv = (\bmx, s)$. Let $Y \in \{0,1\}$ represent the binary outcome, where $Y=1$ indicates loan repayment and $Y=0$ indicates default.

We consider a training dataset $\mathcal{D} = \{(\bmv^i, y^i)\}_{i=1}^N$ consisting of $N$ observations. Our objective is to learn a scoring function $\hat{h}: \mathcal{X} \to [0,1]$ parametrised by $\bmtheta$, which outputs a continuous score reflecting the predicted probability of repayment. A binary decision is obtained by thresholding: $\hat{y} = \ind\{\hat{h}(\bmx; \bmtheta) \geq \tau\}$, where $\tau \in (0,1)$ is the decision threshold and $\ind\{\cdot\}$ denotes the indicator function. Applicants with $\hat{h}(\bmx; \bmtheta) \geq \tau$ are approved; those below are rejected.

In practice, features are not independent; changing one feature (e.g., income) may causally affect others (e.g., debt ratio). We model these dependencies using a linear structural causal model (SCM), which encodes the data-generating process through a system of structural equations. Let $\bmA \in \mathbb{R}^{d \times d}$ denote the \emph{structural matrix}, where entry $A_{ij}$ quantifies the direct causal effect of $X_j$ on $X_i$, with $A_{ij} = 0$ if $X_j$ does not directly cause $X_i$. The propagation matrix $(\mathbb{I}_d - \bmA)^{-1}$ captures the total effect of an additive shift to a structural equation, including all downstream pathways; its $(i,j)$-th entry is the total change in $X_i$ induced by a unit additive shift to the equation for $X_j$. The formal definition of SCMs and the derivation of causal effects are provided in Appendix~\ref{sec:appendix_scm}.

\subsection{Effort Quantification}
\label{subsec:effort_quantification}

Standard fairness criteria in credit scoring, such as statistical parity (equal approval rates across groups), equalised odds (equal error rates conditional on a positive outcome), and positive predictive value parity (equal precision amongst those approved), focus on equalising predictions or outcomes at the point of decision. As argued in Section~\ref{sec:introduction}, satisfying these criteria does not preclude systematic disparities in the effort required for rejected applicants to achieve future approval. We therefore develop an effort-based framework that directly addresses this gap. For completeness, formal definitions of these standard criteria used as benchmarks in our empirical analysis in Section~\ref{subsec:predictive_fairness_limitations} are provided in Appendix~\ref{sec:sm_pp_effort}.

For effort evaluation, we require a consistent basis for comparing the burden faced by different groups. The appropriate benchmark is the \emph{minimal effort} required to achieve approval, as this provides a fair basis for cross-group comparison; if one group requires systematically higher minimal effort than another, this reveals inequality in access to credit.
Let $\mathcal{D}^- = \{\bmv^i \mid (\bmv^i, y^i) \in \mathcal{D}, \; \hat{h}(\bmx^i; \bmtheta) < \tau\}$ denote the set of rejected applicants. For a rejected applicant with features $\bmx^{\texto}$, we seek the minimum-cost set of feasible changes that would result in approval.

\subsubsection{Feature-Independent Effort}

Let $\bm{\delta} \in \mathbb{R}^d$ denote a vector of changes applied to the factual features $\bmx^{\texto}$. The \emph{minimal feature-independent effort} is defined as:
\begin{align}\label{eq:feature_effort}
c^*(\bmx^{\texto}) = & \min_{\bm{\delta} \in \mathcal{F}(\bmx^{\texto})} \; \mathrm{cost}(\bm{\delta}; \bmx^{\texto}) \\
\text{s.t.} & \quad \hat{h}(\bmx^{\texto} \oplus \bm{\delta}; \bmtheta) \geq \tau, \nonumber
\end{align}
where $\mathrm{cost}(\cdot): \mathbb{R}^d \to \mathbb{R}^{+}$ measures the effort associated with implementing changes $\bm{\delta}$, and $\mathcal{F}(\bmx^{\texto})$ denotes the feasible set incorporating mutability constraints (e.g., age cannot change), boundedness (features have realistic ranges), and discreteness (some features take integer values).
We adopt a weighted Euclidean norm $\mathrm{cost}(\bm{\delta}; \bmx) = \|\bm{\delta}\|_{\bmW} = \|\bmW^{1/2}\bm{\delta}\|_2$, where the positive definite weight matrix $\bmW$ encodes feature-specific costs. Immutable features receive prohibitively large weights.
The operator $\bmx^{\texto} \oplus \bm{\delta}$ applies addition to continuous and ordinal features, with $\delta_j\in\mathbb Z$ for an ordinal feature, and value replacement to categorical features. This formulation assumes that features can be modified independently. In reality, interventions often propagate through causal relationships.

\subsubsection{Causal Effort}

When an applicant takes action to change certain features, downstream features may change as a consequence. These downstream changes may reinforce, leave unchanged, or offset the direct effect of an intervention on the approval score. To account for such dependencies, we define effort in terms of \emph{causal interventions} rather than direct feature changes.

Let $\bmxi_{\bmX} \in \mathbb{R}^d$ denote a vector of \emph{additive structural shifts} to the equations for the non-protected features, with zero entries for features on which no direct action is taken. Writing $\mathbf{P}:=(\mathbb{I}_d-\bmA)^{-1}$, the factual and shifted systems are $\bmX=\bmA\bmX+\mathbf B$ and $\bmX^{\bmxi}=\bmA\bmX^{\bmxi}+\mathbf B+\bmxi_{\bmX}$, respectively. Hence the induced total feature change is $\bm{\delta}=\bmX^{\bmxi}-\bmX=\mathbf P\bmxi_{\bmX}$. This is a soft, shift intervention: it modifies the relevant structural mechanisms but does not delete their incoming edges. It is therefore distinct from a hard intervention $do(X_j=x_j')$, whose propagation operator generally depends on the set of intervened nodes. For a factual profile $\bmv^{\texto}=(\bmx^{\texto},s^{\texto})$, define the corresponding action-based counterfactual profile as $\bmv_{\bmxi}^{\texto}:=(\bmx^{\texto}+\mathbf P\bmxi_{\bmX},s^{\texto})$. Appendix~\ref{sec:appendix_scm} gives the formal distinction and derivation.

The \emph{minimal causal effort} is therefore:
\begin{align}\label{eq:causal_effort}
r^*(\bmv^{\texto}) = & \min_{\bmxi_{\bmX} \in \mathcal{F}(\bmx^{\texto})} \; \mathrm{cost}(\bmxi_{\bmX}; \bmx^{\texto}) \\
\text{s.t.} & \quad \hat{h}\!\left(\bmx^{\texto}+\mathbf P\bmxi_{\bmX};\bmtheta\right) \geq \tau. \nonumber
\end{align}
Here $\mathcal F(\bmx^{\texto})$ constrains the direct shifts and the resulting profile, including mutability, direction, and range restrictions.

The key distinction between Eqs.~\eqref{eq:feature_effort} and \eqref{eq:causal_effort} is that causal effort charges only for the direct structural shifts $\bmxi_{\bmX}$, recognising that components of $\mathbf P\bmxi_{\bmX}$ occur as downstream consequences. Since those consequences may help or hinder movement towards approval, causal and feature-independent effort do not have an unconditional ordering.

\subsection{Effort Parity Definitions}
\label{subsec:effort_parity}

Having formalised effort quantification, we now define fairness criteria based on equalising effort. We develop two complementary notions: feature-independent effort parity and causal effort parity.

\subsubsection{Feature-Independent Effort Parity}

For each protected group $s \in \{0,1\}$, define $\Omega_s^- = \{\bmv^i \in \mathcal{D}^- \mid s^i = s\}$ as the subset of rejected applicants belonging to group $s$. The \emph{average feature-independent effort} for group $s$ is:
\begin{equation}
\bar{c}^*(\Omega_s^-) = \frac{1}{|\Omega_s^-|} \sum_{\bmv^i \in \Omega_s^-} c^*(\bmx^i).
\end{equation}

\begin{definition}[Feature-Independent Effort Parity]
\label{def:fi_parity}
A classifier satisfies \emph{feature-independent effort parity} if the average minimal efforts are equal across protected groups:
\begin{equation}
\bar{c}^*(\Omega_0^-) = \bar{c}^*(\Omega_1^-).
\end{equation}
\end{definition}

Violations of this criterion indicate that one group faces systematically higher barriers to achieving approval. We quantify such violations via the \emph{feature-independent effort disparity}:
\begin{equation}\label{eq:fi_disparity}
\Delta^{\text{FI}}(\bmtheta) = \left| \bar{c}^*(\Omega_0^-) - \bar{c}^*(\Omega_1^-) \right|.
\end{equation}

\subsubsection{Causal Effort Parity}

Analogously, the \emph{average causal effort} for group $s$ is:
\begin{equation}
\bar{r}^*(\Omega_s^-) = \frac{1}{|\Omega_s^-|} \sum_{\bmv^i \in \Omega_s^-} r^*(\bmv^i).
\end{equation}

\begin{definition}[Causal Effort Parity]
\label{def:causal_group_parity}
A classifier satisfies \emph{causal effort parity} if the average minimal causal efforts are equal across protected groups:
\begin{equation}
\bar{r}^*(\Omega_0^-) = \bar{r}^*(\Omega_1^-).
\end{equation}
\end{definition}

The corresponding \emph{causal effort disparity} is:
\begin{equation}\label{eq:gc_disparity}
\Delta^{\text{GC}}(\bmtheta) = \left| \bar{r}^*(\Omega_0^-) - \bar{r}^*(\Omega_1^-) \right|.
\end{equation}

\subsubsection{Why Equalise Effort?}
An effort gap is not by itself evidence of unfair treatment: rejected applicants in one group may lie farther from a risk-based boundary because of repayment-relevant characteristics. We therefore treat effort disparity as a diagnostic, asking whether the burden can be reduced without materially weakening legitimate underwriting objectives. Effort parity complements, rather than replaces, risk-based assessment and should therefore be evaluated jointly with predictive performance and portfolio risk.

\subsubsection{Integration into Learning}

The disparity metrics defined above serve dual purposes; regulators can use them to audit deployed models for masked inequality, whilst lenders use them as training objectives to mitigate disparities during model development. We focus on the latter.

The fairness literature distinguishes three approaches to unfairness mitigation \citep{kozodoi_fairness_2022}; pre-processing methods modify training data before model fitting \citep{calders_building_2009, kamiran_classifying_2009}; post-processing methods adjust predictions after model training \citep{hardt_equality_2016, pleiss_fairness_2017}; and in-processing methods embed fairness constraints directly into the learning objective \citep{kamishima2012fairness, celis_classification_2019}. We adopt an in-processing approach for two reasons. First, effort disparity depends jointly on the decision boundary and feature distributions near that boundary; pre-processing methods that transform data without knowledge of the eventual classifier cannot directly target this joint dependence. Second, post-processing methods adjust only predictions, leaving the underlying decision boundary unchanged; since effort is determined by the distance to this boundary, post-hoc prediction adjustments cannot reduce effort disparities without retraining the model.

Accordingly, we augment the standard predictive loss with a fairness regularisation term:
\begin{equation}\label{eq:regularised_learning}
\min_{\bmtheta} \; \mathcal{L}_{\text{acc}}(\bmtheta) + \lambda \cdot \mathcal{L}_{\text{fair}}(\bmtheta),
\end{equation}
where $\mathcal{L}_{\text{acc}}(\bm{\theta}) = \mathbb{E}_{(\bmv, y) \sim
\mathcal{D}}[\ell(y, \hat{h}(\bmx; \bm{\theta}))] + \mu \|\bm{\theta}\|^2 /2$
is the ridge-regularised predictive loss (e.g., cross-entropy with weight
decay), with $\mu>0$ controlling the strength of $\ell_2$ regularisation. When the data-fit term is convex and the ridge penalty applies to all components of $\bmtheta$, $\mathcal L_{\mathrm{acc}}$ is $\mu$-strongly convex. $\lambda > 0$ controls the trade-off between predictive performance and fairness, and the fairness loss $\mathcal{L}_{\text{fair}}(\bmtheta)$ is instantiated using one of the disparity metrics:
\begin{equation}
\mathcal{L}_{\text{fair}}(\bmtheta) \in \left\{ \Delta^{\text{FI}}(\bmtheta), \; \Delta^{\text{GC}}(\bmtheta) \right\}.
\end{equation}

A key challenge in Eq.~\eqref{eq:regularised_learning} is that the effort terms $c^*(\cdot)$ and $r^*(\cdot)$ are themselves solutions to optimisation problems that depend on $\bmtheta$. This creates a computationally intractable bi-level structure.

\section{Efficient Learning with Effort Parity Constraints}
\label{sec:approximate_learning}

This section develops local analytical surrogates, together with exact expressions for logistic regression, that collapse the bi-level structure into a tractable single-level formulation.

\subsection{Computational Challenge}
\label{subsec:computational_challenge}

To illustrate the computational challenge, consider the fairness loss based on feature-independent effort disparity. Substituting the definition of $\Delta^{\text{FI}}(\bmtheta)$ from Eq.~\eqref{eq:fi_disparity} into the learning objective yields a bi-level optimisation problem:
\begin{align}\label{eq:bilevel}
\min_{\bmtheta} \quad & \mathcal{L}_{\text{acc}}(\bmtheta) + \lambda \left| \bar{c}^*(\Omega_0^-) - \bar{c}^*(\Omega_1^-) \right| \\
\text{where} \quad & c^*(\bmx^i) = \min_{\bm{\delta} \in \mathcal{F}(\bmx^i)} \|\bm{\delta}\|_{\bmW} \nonumber \\
\text{s.t.} &\quad \hat{h}(\bmx^i \oplus \bm{\delta}; \bmtheta) \geq \tau, \quad \forall \, \bmv^i \in \mathcal{D}^-. \nonumber
\end{align}
The causal-effort formulation is analogous, with the corresponding causal effort and group aggregation.

The outer level optimises the model parameters $\bmtheta$, whilst the inner level solves individual-specific recourse problems for every rejected applicant. Two features make this problem computationally prohibitive. First, the rejection set $\mathcal{D}^-$ depends on $\bmtheta$, as the decision boundary shifts during training and applicants move between approved and rejected status. Second, the inner optimisation problems are coupled to the outer problem through $\bmtheta$, which determines both the decision boundary and the gradient landscape.
Solving this problem naively, by recomputing optimal interventions at each gradient step, is infeasible for practical training.

\subsection{Local Surrogates and Exact Effort}
\label{subsec:analytical_approximation}

We formalise the effort through three assumptions.

\begin{assumption}[Local Linearity]
\label{ass:local_linearity}
For each considered rejected applicant, $\hat h(\cdot;\bmtheta)$ is differentiable near $\bmx^i$ and satisfies $\hat h(\bmx^i+\bm\delta;\bmtheta)=\hat h(\bmx^i;\bmtheta)+\nabla_{\bmx}\hat h(\bmx^i;\bmtheta)^\top\bm\delta+o(\|\bm\delta\|_2)$.
\end{assumption}

The local surrogate drops the $o(\|\bm\delta\|_2)$ term. It is exact for affine scores; logistic regression has a smooth input gradient and admits the same local expansion.

\begin{assumption}[Feasible Local Actions]
\label{ass:boundary_concentration}
Each relevant exact effort problem attains a minimum. Its feasible action set is star-shaped about zero, and the unconstrained linearised minimiser used below is feasible and lies in the neighbourhood of Assumption~\ref{ass:local_linearity}.
\end{assumption}

\begin{assumption}[Continuous Action Space]
\label{ass:continuous}
For the local closed forms, directly changed features are continuous and $\bmW\succ0$. Discrete and categorical actions are addressed separately in Section~\ref{sec:discrete_recourse}.
\end{assumption}

We first establish an exact property of the original constrained problem and then solve its unconstrained local linearisation.

\begin{lemma}[Optimality of Boundary Crossing]
\label{lem:boundary_crossing}
Under Assumptions~\ref{ass:local_linearity}--\ref{ass:continuous}, the optimal intervention $\bm{\delta}^*$ for a rejected applicant $\bmx^{\texto}$ satisfies the decision boundary constraint with equality, $\hat{h}(\bmx^{\texto}+\bm{\delta}^*;\bmtheta)=\tau$.
\end{lemma}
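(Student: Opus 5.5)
The plan is a contradiction argument that uses continuity of $\hat h$ along the ray from the origin to $\bm\delta^*$, together with star-shapedness of the feasible set. This is the standard ``scale back'' argument for minimum-norm recourse.

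By Assumption~\ref{ass:boundary_concentration}, a minimiser $\bm\delta^*\in\mathcal F(\bmx^{\texto})$ of the exact problem exists. Feasibility gives $\hat h(\bmx^{\texto}+\bm\delta^*;\bmtheta)\ge\tau$. Since the applicant is rejected, $\hat h(\bmx^{\texto};\bmtheta)<\tau$, so $\bm\delta^*\neq\mathbf 0$ and hence $\|\bm\delta^*\|_{\bmW}>0$, because $\bmW\succ 0$ by Assumption~\ref{ass:continuous}. Suppose, for contradiction, that $\hat h(\bmx^{\texto}+\bm\delta^*;\bmtheta)>\tau$.

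Define $g(t)=\hat h(\bmx^{\texto}+t\bm\delta^*;\bmtheta)$ for $t\in[0,1]$. Then $g(0)<\tau<g(1)$. If $g$ is continuous on $[0,1]$, the intermediate value theorem yields some $t^\dagger\in(0,1)$ with $g(t^\dagger)=\tau$.

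Star-shapedness of the feasible action set about zero gives $t^\dagger\bm\delta^*\in\mathcal F(\bmx^{\texto})$. Because all directly changed features are continuous, no integrality constraint is violated. So $t^\dagger\bm\delta^*$ is feasible for the exact problem, and its cost is
\[
\|t^\dagger\bm\delta^*\|_{\bmW}=t^\dagger\|\bm\delta^*\|_{\bmW}<\|\bm\delta^*\|_{\bmW}.
\]
This contradicts the optimality of $\bm\delta^*$, so equality $\hat h(\bmx^{\texto}+\bm\delta^*;\bmtheta)=\tau$ must hold. A cleaner variant avoids needing a root: take $t^\dagger=\inf\{t\in[0,1]: g(t)\ge\tau\}$. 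Continuity then gives $g(t^\dagger)=\tau$ and $t^\dagger<1$ whenever $g(1)>\tau$.

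The main obstacle is justifying continuity of $g$ on the whole segment $[0,1]$. Assumption~\ref{ass:local_linearity} only gives differentiability near $\bmx^{\texto}$, while $\bm\delta^*$ could in principle lie far away. I would handle this in two steps.
\begin{itemize}
\item First, appeal to Assumption~\ref{ass:boundary_concentration}: the relevant actions lie in the neighbourhood where Assumption~\ref{ass:local_linearity} holds. Differentiability there implies continuity along the segment.
\item Failing that, note that the scores actually used (logistic regression, smooth networks) are continuous on all of $\mathcal X$, and state continuity of $\hat h$ along feasible segments as the operative requirement.
\end{itemize}
I would also remark that the same scaling argument applies verbatim to the causal problem~\eqref{eq:causal_effort}. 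There the ray is $t\mapsto\bmx^{\texto}+t\mathbf P\bmxi_{\bmX}^*$, which is linear in $t$, and the cost $\|t\bmxi_{\bmX}^*\|_{\bmW}$ scales by $t$.
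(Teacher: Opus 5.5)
Your argument is correct and is essentially the paper's proof. Both suppose the constraint is slack at $\bm\delta^*$, use continuity of $t\mapsto\hat h(\bmx^{\texto}+t\bm\delta^*;\bmtheta)$ and the intermediate value theorem to find a boundary point $t\bm\delta^*$ with $t\in(0,1)$, get feasibility of $t\bm\delta^*$ from star-shapedness, and contradict optimality via positive homogeneity of $\|\cdot\|_{\bmW}$. Your explicit check that $\bm\delta^*\neq\mathbf 0$ supplies a detail the paper leaves implicit.

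On continuity, the paper simply takes it for granted, so your second bullet (requiring continuity of the score along feasible segments) is the right way to make it explicit. Your first bullet does not work as stated. The locality clause of Assumption~\ref{ass:boundary_concentration} concerns only the unconstrained linearised minimiser, not the exact optimiser $\bm\delta^*$ or the segment leading to it.
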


The proof, provided in Appendix~\ref{sec:appendix_proofs}, uses continuity, star-shaped feasibility, and positive homogeneity of the weighted norm; it does not require the classifier to be globally monotone.

\begin{theorem}[Feature-Independent Local Effort]
\label{thm:fi_approx}
Under Assumptions~\ref{ass:local_linearity} and~\ref{ass:continuous}, for a rejected applicant let $q^i:=\tau-\hat h(\bmx^i;\bmtheta)>0$ and $\bmg^i:=\nabla_{\bmx}\hat h(\bmx^i;\bmtheta)\neq\mathbf0$. The unconstrained first-order surrogate of Eq.~\eqref{eq:feature_effort} has value
\begin{equation}\label{eq:fi_effort_approx}
\tilde c(\bmx^i;\bmtheta)
=\frac{q^i}{\|\bmW^{-1/2}\bmg^i\|_2},
\end{equation}
and its unique minimum-norm action is
\begin{equation}\label{eq:fi_direction}
\tilde{\bm\delta}^{\,i}
=\tilde c(\bmx^i;\bmtheta)
\frac{\bmW^{-1}\bmg^i}{\|\bmW^{-1/2}\bmg^i\|_2}.
\end{equation}
When $\tilde{\bm\delta}^{\,i}$ is feasible and local as required by Assumption~\ref{ass:boundary_concentration}, Eqs.~\eqref{eq:fi_effort_approx}--\eqref{eq:fi_direction} approximate $c^*(\bmx^i)$ and its optimiser.
\end{theorem}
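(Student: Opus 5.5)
The plan is to write the unconstrained first-order surrogate of Eq.~\eqref{eq:feature_effort} explicitly and solve it as a minimum-norm problem over a half-space. Dropping the $o(\|\bm\delta\|_2)$ term in Assumption~\ref{ass:local_linearity} and ignoring $\mathcal F$ gives
\begin{equation*}
\min_{\bm\delta\in\mathbb R^d}\ \|\bmW^{1/2}\bm\delta\|_2 \quad\text{s.t.}\quad (\bmg^i)^\top\bm\delta\ \ge\ q^i .
\end{equation*}
By Assumption~\ref{ass:continuous}, $\bmW\succ0$, so $\bmW^{\pm1/2}$ exist and are symmetric positive definite. Substituting $\bmu=\bmW^{1/2}\bm\delta$ turns the constraint into $(\bmW^{-1/2}\bmg^i)^\top\bmu\ge q^i$ and the objective into $\|\bmu\|_2$. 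We are then asking for the Euclidean distance from the origin to the half-space $\{\bmu:\bma^\top\bmu\ge q^i\}$, where $\bma:=\bmW^{-1/2}\bmg^i\neq\mathbf 0$ because $\bmg^i\neq\mathbf0$ and $\bmW^{-1/2}$ is invertible.

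\textbf{Lower bound.} For any feasible $\bmu$, Cauchy--Schwarz gives $q^i\le\bma^\top\bmu\le\|\bma\|_2\|\bmu\|_2$. Since $q^i>0$, this yields $\|\bmu\|_2\ge q^i/\|\bma\|_2$, which is exactly the value in Eq.~\eqref{eq:fi_effort_approx}.

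\textbf{Attainment and uniqueness.} The point $\bmu^*=q^i\bma/\|\bma\|_2^2$ is feasible with equality and attains this bound, so the bound is the optimal value. Uniqueness follows from the equality case of Cauchy--Schwarz. Any optimal $\bmu$ must make both inequalities tight, so $\bmu$ is a nonnegative multiple of $\bma$ and $\bma^\top\bmu=q^i$, which pins down $\bmu^*$. Alternatively, uniqueness follows from strict convexity of $\|\cdot\|_2^2$ over a convex feasible set.

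\textbf{Mapping back.} Set $\tilde{\bm\delta}^{\,i}=\bmW^{-1/2}\bmu^*=q^i\bmW^{-1}\bmg^i/\|\bmW^{-1/2}\bmg^i\|_2^2$. Factoring this as $\tilde c\cdot\bmW^{-1}\bmg^i/\|\bmW^{-1/2}\bmg^i\|_2$ gives Eq.~\eqref{eq:fi_direction}. Because the change of variables is a bijection, uniqueness carries over to $\bm\delta$. Optionally, the same result can be confirmed via KKT conditions: the stationarity condition $\bmW\bm\delta=\nu\bmg^i$ with the constraint active, which is consistent with Lemma~\ref{lem:boundary_crossing}.

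\textbf{Approximation claim.} The final sentence of the statement is the only non-algebraic step, and it is the main obstacle because it is qualitative. The plan is to argue as follows.
\begin{itemize}
\item Under Assumption~\ref{ass:boundary_concentration}, $\tilde{\bm\delta}^{\,i}$ lies in $\mathcal F(\bmx^i)$ and in the neighbourhood where the expansion of Assumption~\ref{ass:local_linearity} holds.
\item At $\tilde{\bm\delta}^{\,i}$, the true score differs from $\tau$ only by $o(\|\tilde{\bm\delta}^{\,i}\|_2)$.
\item Lemma~\ref{lem:boundary_crossing} places the exact minimiser on the boundary $\hat h=\tau$ inside the same neighbourhood.
\item Hence the true and linearised constraint sets agree up to first order, and $c^*(\bmx^i)=\tilde c(\bmx^i;\bmtheta)+o(q^i)$ as $q^i\to0$.
\end{itemize}
I would state this as a first-order approximation rather than attempt explicit error constants. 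It is exact when $\hat h$ is affine, since the remainder then vanishes identically.
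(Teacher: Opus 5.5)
Your derivation of the surrogate value and its unique minimiser is correct and is essentially the paper's argument. The substitution $\bmu=\bmW^{1/2}\bm\delta$ followed by Cauchy--Schwarz is exactly the weighted Cauchy--Schwarz chain $q^i\le{\bmg^i}^{\top}\bm\delta\le\|\bmW^{-1/2}\bmg^i\|_2\|\bm\delta\|_{\bmW}$ that the paper uses, and you read off uniqueness from the equality case just as it does.

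Where you go beyond the paper is the final bullet list, and that part does not hold up as written. The paper proves no quantitative approximation. It treats the theorem's last sentence as interpretive: Assumption~\ref{ass:boundary_concentration} merely delineates when the unconstrained local solution is admissible. Your claim $c^*(\bmx^i)=\tilde c(\bmx^i;\bmtheta)+o(q^i)$ is not supported by the stated hypotheses, for three reasons.
\begin{enumerate}
\item Lemma~\ref{lem:boundary_crossing} places the exact minimiser on the level set $\hat h=\tau$, but it does not place it inside the neighbourhood of Assumption~\ref{ass:local_linearity}. Assumption~\ref{ass:boundary_concentration} localises only the linearised minimiser.
\item For a fixed applicant, $q^i$ is a fixed number, so ``$q^i\to0$'' is not meaningful. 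You would need a family of applicants or parameters, together with a remainder that is uniform near a boundary point (e.g.\ $\hat h$ of class $C^1$ there with nonvanishing gradient). Assumption~\ref{ass:local_linearity} gives only a pointwise expansion at $\bmx^i$.
\item The action $\tilde{\bm\delta}^{\,i}$ may fall short of $\tau$ by the remainder, and star-shapedness only permits shrinking actions, not enlarging them. So even the one-sided bound $c^*\le\tilde c$ (up to the remainder) needs a separate argument, unless ``feasible'' in Assumption~\ref{ass:boundary_concentration} is read to include the exact approval constraint.
\end{enumerate}
Either keep the final sentence qualitative, as the paper does, or state these additional hypotheses explicitly before asserting an $o(q^i)$ rate.
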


The proof is provided in Appendix~\ref{sec:appendix_proofs}. The local surrogate is proportional to the score gap and inversely proportional to the weighted gradient magnitude. Importantly, it is the exact solution of the stated linearised problem, not an assertion that the original constrained problem always has this closed form.

We next extend the local surrogate to causal effort, which accounts for downstream propagation of additive structural shifts.

\begin{theorem}[Causal Local Effort]
\label{thm:causal_approx}
Under the notation and conditions of Theorem~\ref{thm:fi_approx}, let $\mathbf P=(\mathbb I_d-\bmA)^{-1}$. The unconstrained first-order surrogate of Eq.~\eqref{eq:causal_effort} has value
\begin{equation}\label{eq:causal_effort_approx}
\tilde r(\bmv^i;\bmtheta)
=\frac{q^i}{\|\bmW^{-1/2}\mathbf P^{\top}\bmg^i\|_2},
\end{equation}
and its unique minimum-norm direct shift is
\begin{equation}\label{eq:causal_direction}
\tilde{\bmxi}_{\bmX}^{\,i}
=\tilde r(\bmv^i;\bmtheta)
\frac{\bmW^{-1}\mathbf P^{\top}\bmg^i}
{\|\bmW^{-1/2}\mathbf P^{\top}\bmg^i\|_2}.
\end{equation}
When this direct shift is feasible and local as required by Assumption~\ref{ass:boundary_concentration}, these expressions approximate $r^*(\bmv^i)$ and its optimiser.
\end{theorem}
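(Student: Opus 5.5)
The plan is to reduce Theorem~\ref{thm:causal_approx} to the feature-independent case of Theorem~\ref{thm:fi_approx} by a change of variables. Under Assumption~\ref{ass:local_linearity}, linearising the constraint of Eq.~\eqref{eq:causal_effort} about $\bmx^i$ gives $\hat h(\bmx^i;\bmtheta)+(\bmg^i)^\top\mathbf P\bmxi_{\bmX}\ge\tau$. Setting $q^i=\tau-\hat h(\bmx^i;\bmtheta)$, this becomes the single linear inequality $(\mathbf P^\top\bmg^i)^\top\bmxi_{\bmX}\ge q^i$. The cost $\|\bmxi_{\bmX}\|_{\bmW}$ charges only the direct shift. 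The unconstrained first-order surrogate is therefore
\begin{equation*}
\min_{\bmxi}\;\|\bmW^{1/2}\bmxi\|_2\quad\text{s.t.}\quad \bma^\top\bmxi\ge q^i,\qquad \bma:=\mathbf P^\top\bmg^i .
\end{equation*}
This is exactly the linearised problem of Theorem~\ref{thm:fi_approx} with $\bmg^i$ replaced by $\bma$.

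The key steps follow in order. First, note that $\mathbf P$ is invertible as the inverse of $\mathbb I_d-\bmA$, so $\bmg^i\neq\mathbf0$ implies $\bma\neq\mathbf0$. The problem is therefore feasible, and the denominator $\|\bmW^{-1/2}\bma\|_2$ is positive.

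Second, substitute $\bmu=\bmW^{1/2}\bmxi$, which is legitimate because $\bmW\succ0$ by Assumption~\ref{ass:continuous}. The problem becomes minimising $\|\bmu\|_2$ subject to $(\bmW^{-1/2}\bma)^\top\bmu\ge q^i$. Cauchy--Schwarz gives $q^i\le\|\bmW^{-1/2}\bma\|_2\|\bmu\|_2$, which is the lower bound $q^i/\|\bmW^{-1/2}\mathbf P^\top\bmg^i\|_2$. The bound is attained by the scaled vector $\bmu^*=q^i\,\bmW^{-1/2}\bma/\|\bmW^{-1/2}\bma\|_2^2$.

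Third, uniqueness follows from the equality case of Cauchy--Schwarz: equality forces $\bmu$ to be parallel to $\bmW^{-1/2}\bma$, and the positive multiple is pinned down by the active constraint. The active constraint itself is the linearised counterpart of Lemma~\ref{lem:boundary_crossing}. Alternatively, uniqueness follows from strict convexity of the squared norm over a convex set.

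Fourth, map back via $\bmxi=\bmW^{-1/2}\bmu^*$. This gives $\tilde r\,\bmW^{-1}\mathbf P^\top\bmg^i/\|\bmW^{-1/2}\mathbf P^\top\bmg^i\|_2$, which is Eq.~\eqref{eq:causal_direction}. Its $\bmW$-norm equals Eq.~\eqref{eq:causal_effort_approx}.

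For the approximation claim, assume the computed shift is feasible and local in the sense of Assumption~\ref{ass:boundary_concentration}. Then the induced change $\mathbf P\tilde{\bmxi}$ is small, because $\mathbf P$ is a fixed bounded operator. The $o(\|\mathbf P\bmxi\|_2)=o(\|\bmxi\|_2)$ remainder from Assumption~\ref{ass:local_linearity} therefore controls the gap between the surrogate and the exact $r^*$.

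I expect the main obstacle to be this last interpretive step rather than the algebra. The statement only says ``approximate,'' so I will argue it at the same level as Theorem~\ref{thm:fi_approx}: the surrogate is the exact optimum of the linearised problem. I will also make the composition with $\mathbf P$ explicit, so that locality in $\bmxi$ transfers to locality in $\bmx$. Finally, I will note that $\mathcal F$ constrains the direct shifts, so feasibility must be checked for $\tilde{\bmxi}$ itself, not for $\mathbf P\tilde{\bmxi}$.
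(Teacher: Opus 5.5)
Your argument is correct and is essentially the paper's proof. The paper substitutes $\bm\delta=\mathbf P\bmxi_{\bmX}$ (its additive-shift lemma), linearises to ${\bmg^i}^{\top}\mathbf P\bmxi_{\bmX}\ge q^i$, and reruns the weighted Cauchy--Schwarz argument of Theorem~\ref{thm:fi_approx} with $\mathbf P^{\top}\bmg^i$ in place of $\bmg^i$; your explicit checks that $\mathbf P^{\top}\bmg^i\neq\mathbf 0$, of uniqueness, and of locality carrying over through $\mathbf P$ fill in details the paper leaves implicit. One small correction: the paper's $\mathcal F(\bmx^{\texto})$ constrains both the direct shift and the resulting profile $\bmx^{\texto}+\mathbf P\tilde{\bmxi}_{\bmX}^{\,i}$, so feasibility must be checked for both, not for $\tilde{\bmxi}_{\bmX}^{\,i}$ alone.
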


The proof is provided in Appendix~\ref{sec:appendix_proofs}. The key difference is the transformed gradient $\mathbf P^{\top}\bmg^i$, which measures the score effect of a unit direct structural shift after downstream propagation. When $\bmA=\mathbf0$, $\mathbf P=\mathbb I_d$ and the two local problems coincide.

\begin{corollary}[Causal Amplification]
\label{cor:causal_amplification}
Under the notation and conditions of Theorem~\ref{thm:causal_approx}, define the causal amplification factor
\begin{equation}
\label{eq:causal_amplification}
\zc{\gamma^i}=
\frac{\|\bmW^{-1/2}\mathbf{P}^{\top}\bmg^i\|_2}
{\|\bmW^{-1/2}\bmg^i\|_2}.
\end{equation}
Then $\tilde{r}(\bmv^i;\bmtheta)=\tilde{c}(\bmx^i;\bmtheta)/\gamma^i$. Consequently, causal propagation lowers, preserves, or raises approximate effort according as $\gamma^i>1$, $\gamma^i=1$, or $\gamma^i<1$. Equivalently,
\begin{equation}
\begin{aligned}
&\tilde{r}(\bmv^i;\bmtheta)\leq\tilde{c}(\bmx^i;\bmtheta)\\[-2pt]
&\quad\Longleftrightarrow\quad {\bmg^i}^{\top}\!\left(
\mathbf{P}\bmW^{-1}\mathbf{P}^{\top}-\bmW^{-1}
\right)\!\bmg^i \geq0.
\end{aligned}
\end{equation}
\end{corollary}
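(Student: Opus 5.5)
The plan is to read the corollary straight off the two closed forms in Theorems~\ref{thm:fi_approx} and~\ref{thm:causal_approx}; no new optimisation argument is needed. Both surrogates share the numerator $q^i>0$. Write $a:=\|\bmW^{-1/2}\bmg^i\|_2$ and $b:=\|\bmW^{-1/2}\mathbf P^{\top}\bmg^i\|_2$.

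First, check that both denominators are strictly positive. Since $\bmW\succ0$ by Assumption~\ref{ass:continuous}, $\bmW^{-1/2}$ is nonsingular. Together with $\bmg^i\neq\mathbf0$, this gives $a>0$. For $b$, note that $\mathbf P=(\mathbb I_d-\bmA)^{-1}$ is invertible, so $\mathbf P^{\top}$ is invertible and $\mathbf P^{\top}\bmg^i\neq\mathbf0$; hence $b>0$. Therefore $\gamma^i=b/a$ is well defined and strictly positive.

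The ratio identity then follows by direct computation:
\[
\tilde r(\bmv^i;\bmtheta)=\frac{q^i}{b}=\frac{q^i}{a}\cdot\frac{a}{b}=\frac{\tilde c(\bmx^i;\bmtheta)}{\gamma^i}.
\]
Because $\tilde c>0$, the map $\gamma\mapsto\tilde c/\gamma$ is strictly decreasing on $(0,\infty)$. This gives the trichotomy: $\tilde r<\tilde c$, $\tilde r=\tilde c$, or $\tilde r>\tilde c$ according as $\gamma^i>1$, $\gamma^i=1$, or $\gamma^i<1$.

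For the quadratic-form equivalence, the chain is
\[
\tilde r\le\tilde c \iff \gamma^i\ge1 \iff b^2\ge a^2,
\]
where squaring is legitimate because both $a$ and $b$ are positive. Expanding the squared norms gives
\[
b^2={\bmg^i}^{\top}\mathbf P\bmW^{-1}\mathbf P^{\top}\bmg^i,\qquad a^2={\bmg^i}^{\top}\bmW^{-1}\bmg^i,
\]
using $(\bmW^{-1/2})^{\top}\bmW^{-1/2}=\bmW^{-1}$, which holds because $\bmW^{-1/2}$ is the symmetric square root. Subtracting the second expression from the first yields the stated condition ${\bmg^i}^{\top}\bigl(\mathbf P\bmW^{-1}\mathbf P^{\top}-\bmW^{-1}\bigr)\bmg^i\ge0$.

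There is no real obstacle here. The only points that need care are the nonvanishing of $\mathbf P^{\top}\bmg^i$, which relies on the invertibility of $\mathbb I_d-\bmA$ built into the definition of $\mathbf P$, and the symmetry of $\bmW^{-1/2}$ used when expanding the norms.
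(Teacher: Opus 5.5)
Your proposal is correct and follows essentially the same route as the paper's proof. Divide the two closed forms to obtain $\tilde r/\tilde c = 1/\gamma^i$, use $\bmW\succ0$, invertibility of $\mathbf P$, and $\bmg^i\neq\mathbf 0$ to ensure both norms are positive, then write the difference of squared norms as ${\bmg^i}^{\top}(\mathbf P\bmW^{-1}\mathbf P^{\top}-\bmW^{-1})\bmg^i$; your explicit justification of the squaring step and of the symmetry of $\bmW^{-1/2}$ just spells out details the paper leaves implicit.
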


\paragraph{Exact logistic specialisation.}
For the logistic model $\hat h(\bmx;\bmtheta)=\sigma(\bmw^\top\bmx+b)$, let $z^i:=\bmw^\top\bmx^i+b$, $\tau_{\mathrm{logit}}:=\log\{\tau/(1-\tau)\}$, and $[x]_+:=\max\{x,0\}$. Since the logistic link is strictly increasing, approval is equivalent to $\bmw^\top\bmx+b\ge\tau_{\mathrm{logit}}$. Thus, when the following unconstrained solutions satisfy the applicable feasible sets, the feature-independent and additive-shift causal efforts are exactly
\begin{equation}
\label{eq:exact_logistic_efforts}
c_{\mathrm{log}}^i
=\frac{[\tau_{\mathrm{logit}}-z^i]_+}{\|\bmW^{-1/2}\bmw\|_2},
\qquad
r_{\mathrm{log}}^i
=\frac{[\tau_{\mathrm{logit}}-z^i]_+}{\|\bmW^{-1/2}\mathbf P^\top\bmw\|_2}.
\end{equation}
Their minimum-norm actions are
\begin{align}
\label{eq:exact_logistic_directions}
\bm\delta_{\mathrm{log}}^i
&=\frac{[\tau_{\mathrm{logit}}-z^i]_+}{\|\bmW^{-1/2}\bmw\|_2^2}\bmW^{-1}\bmw,\\
\bmxi_{\bmX,\mathrm{log}}^i
&=\frac{[\tau_{\mathrm{logit}}-z^i]_+}{\|\bmW^{-1/2}\mathbf P^\top\bmw\|_2^2}
\bmW^{-1}\mathbf P^\top\bmw.
\end{align}
With common $\bmA$ and $\bmW$, define $\gamma(\bmw):=\|\bmW^{-1/2}\mathbf P^\top\bmw\|_2/\|\bmW^{-1/2}\bmw\|_2$. Let $\Delta_{\mathrm{log}}^{\mathrm{FI}}$ and $\Delta_{\mathrm{log}}^{\mathrm{GC}}$ denote Eqs.~\eqref{eq:fi_disparity} and~\eqref{eq:gc_disparity}, respectively, evaluated using $c_{\mathrm{log}}^i$ and $r_{\mathrm{log}}^i$. The exact logistic efforts and their group disparities then satisfy
\begin{equation}
\label{eq:logistic_effort_scaling}
r_{\mathrm{log}}^i=\frac{c_{\mathrm{log}}^i}{\gamma(\bmw)},
\qquad
\Delta_{\mathrm{log}}^{\mathrm{GC}}(\bmtheta)
=\frac{\Delta_{\mathrm{log}}^{\mathrm{FI}}(\bmtheta)}{\gamma(\bmw)}.
\end{equation}
Accordingly, the empirical logistic model can use Eqs.~\eqref{eq:exact_logistic_efforts}--\eqref{eq:exact_logistic_directions} without a Taylor approximation. If a feasibility constraint binds, these expressions remain unconstrained benchmarks and the corresponding constrained convex programme must be solved instead.

\subsection{Integration into Model Training}
\label{subsec:integration_training}

The local surrogates in Theorems~\ref{thm:fi_approx} and~\ref{thm:causal_approx}, and the exact logistic expressions in Eq.~\eqref{eq:exact_logistic_efforts}, are locally Lipschitz in $\bmtheta$ wherever their denominator norms are nonzero. They therefore yield objectives that are differentiable almost everywhere and can be handled by standard first-order methods.

\subsubsection{Soft Disparity Formulation}

Let $e_{\mathrm{FI}}^i(\bmtheta)$ and $e_{\mathrm{GC}}^i(\bmtheta)$ denote the effort quantities used for training. For a general differentiable classifier they are the local surrogates $\tilde c(\bmx^i;\bmtheta)$ and $\tilde r(\bmv^i;\bmtheta)$. For the logistic model used in our empirical analysis, they are instead the exact unconstrained quantities $c_{\mathrm{log}}^i$ and $r_{\mathrm{log}}^i$ from Eq.~\eqref{eq:exact_logistic_efforts}. For example, feature-independent group effort disparity becomes
\begin{equation}\label{eq:diff_fi_disparity}
\smalleq{\tilde{\Delta}^{\text{FI}}(\bmtheta) = \left| \frac{1}{|\Omega_0^-|} \sum_{\bmv^i \in \Omega_0^-} e_{\mathrm{FI}}^i(\bmtheta) - \frac{1}{|\Omega_1^-|} \sum_{\bmv^i \in \Omega_1^-} e_{\mathrm{FI}}^i(\bmtheta) \right|},
\end{equation}
with the analogous substitution of $e_{\mathrm{GC}}^i$ for causal effort disparity.

To avoid discontinuities with respect to the rejection set, we employ a soft rejection indicator. The challenge arises since the rejection set $\mathcal{D}^-$ is defined by a hard threshold; applicant $i$ is rejected if and only if $\hat{h}(\bmx^i; \bmtheta) < \tau$. This creates two problems for gradient-based optimisation. First, the indicator function $\ind\{\hat{h}(\bmx^i; \bmtheta) < \tau\}$ is non-differentiable at the decision boundary, with an undefined gradient. Second, as model parameters $\bmtheta$ evolve during training, applicants may cross the threshold and discretely enter or exit the rejection set, causing discontinuous jumps in the disparity.

\zc{We replace the hard indicator with the soft rejection weight
$\omega^i(\bmtheta):=\sigma\bigl(-\kappa[\hat h(\bmx^i;\bmtheta)-\tau]\bigr)$, where $\kappa>0$ controls sharpness. It approaches one for clear rejection and zero for clear approval. For finite $\kappa$, the resulting disparity is locally Lipschitz and differentiable almost everywhere:}
\begin{equation}\label{eq:soft_disparity}
\smalleq{\tilde{\Delta}^{\text{FI}}(\bmtheta) = \left| \frac{\sum_{i: s^i=0} \omega^i e_{\mathrm{FI}}^i(\bmtheta)}{\sum_{i: s^i=0} \omega^i} - \frac{\sum_{i: s^i=1} \omega^i e_{\mathrm{FI}}^i(\bmtheta)}{\sum_{i: s^i=1} \omega^i} \right|}.
\end{equation}
The causal version, obtained by replacing $e_{\mathrm{FI}}^i$ with $e_{\mathrm{GC}}^i$, is reported in Appendix~\ref{sec:appendix_loss}.

\subsubsection{Training Procedure}

Algorithm~\ref{alg:training} summarises the complete training procedure. The algorithm integrates effort-based fairness into standard mini-batch gradient descent; at each iteration, it computes both the predictive loss and the soft effort disparity, then updates parameters to minimise their weighted combination.
For a locally Lipschitz function $f$, $\partial_C f(\bmtheta)$ denotes its Clarke subdifferential: the convex hull of limiting gradients at differentiability points approaching $\bmtheta$.

\begin{algorithm}[t]
\caption{Effort-Fair Classifier Training}
\label{alg:training}
\begin{algorithmic}[1]
\Require Training data $\mathcal{D}$, regularisation weight $\lambda$, learning rate $\eta$, sharpness parameter $\kappa$, threshold $\tau$, weight matrix $\bmW$, causal matrix $\bmA$ (optional)
\Ensure Trained parameters $\bmtheta$
\State Initialise $\bmtheta$ randomly
\For{each epoch}
    \For{each mini-batch $\cmH \subseteq \mathcal{D}$}
        \State Compute predictive loss: $\mathcal{L}_{\text{acc}} \gets \frac{1}{|\cmH|} \sum_{(\bmv^i, y^i) \in \cmH} \ell(y^i, \hat{h}(\bmx^i; \bmtheta))$
        \State Compute soft rejection weights: $\omega^i \gets \sigma(-\kappa[\hat{h}(\bmx^i; \bmtheta) - \tau])$ for all $i$
        \State Compute efforts: $e_{\mathrm{FI}}^i(\bmtheta)$ or $e_{\mathrm{GC}}^i(\bmtheta)$ for all $i$
        \State Compute soft disparity: $\tilde{\Delta}(\bmtheta)$ via Eq.~\eqref{eq:soft_disparity}
        \State Compute total loss: $\mathcal{L} \gets \mathcal{L}_{\text{acc}} + \lambda \cdot \tilde{\Delta}(\bmtheta)$
        \State Select $\bmzeta \in \partial_C\mathcal{L}(\bmtheta)$ and update $\bmtheta \gets \bmtheta - \eta\bmzeta$
    \EndFor
\EndFor
\State \Return $\bmtheta$
\end{algorithmic}
\end{algorithm}

\paragraph{Computational Complexity.}
The overhead relative to standard training is modest. General differentiable classifiers require the input gradient already available through automatic differentiation. Logistic regression uses $\bmw$ directly in the exact expressions. Causal effort additionally multiplies by $\mathbf P^\top$, which costs $O(d^2)$ and can be precomputed with respect to the fixed causal structure.

The regularisation weight $\lambda$ controls the trade-off between accuracy and effort fairness. We recommend selecting $\lambda$ via cross-validation, monitoring both predictive metrics (AUC, accuracy, F1) and fairness metrics (effort disparity). The sharpness parameter $\kappa$ for the soft rejection indicator should be set large enough to approximate hard thresholding (we use $\kappa = 10$ in experiments) whilst maintaining numerical stability. The decision threshold $\tau$ is typically determined by business considerations (e.g., target approval rate, risk tolerance) and held fixed during training.

The effort quantification framework developed above serves a dual purpose beyond fairness-constrained training. The same optimisation machinery that computes minimal effort for regularisation also identifies the specific feature changes required for a rejected applicant to gain approval. This connection is not coincidental; effort parity ensures that the burden of improvement is equitably distributed, whilst the optimal intervention vector $\bm{\delta}^*$
or $\bmxi_{\bmX}^* $ specifies precisely what that improvement entails.

\section{Risk Controllability under Fairness Regularisation}
\label{sec:risk_bounds}

A natural concern when augmenting the training objective with a fairness penalty is whether doing so materially degrades portfolio credit risk.

\subsection{Credit Risk Measures}
\label{subsec:risk_measures}

Let $\text{LGD} \in (0,1]$ denote the loss given default and $\text{EAD}^i > 0$ the exposure at default for loan $i$. Conditional on the model parameters $\bmtheta$, the probability of
default for applicant $i$ is $\text{PD}^i(\bmtheta) = 1 - \hat{h}(\bmx^i;\bmtheta)$, and the loan-level loss is $L^i = \text{LGD} \cdot \text{EAD}^i \cdot D^i$, where
$D^i \sim \text{Bernoulli}(\text{PD}^i(\bmtheta))$.

\paragraph{Expected Loss.}
\zc{Define the complementary soft approval weight
$\bar\omega^i(\bmtheta):=1-\omega^i(\bmtheta)=
\sigma(\kappa[\hat{h}(\bmx^i;\bmtheta)-\tau])$. Let $N$ denote the number of loans. The portfolio expected loss is}
\begin{equation}\label{eq:EL}
  \EL(\bmtheta)
  = \text{LGD} \sum_{i=1}^{N}
    \bar\omega^i(\bmtheta)\cdot\text{PD}^i(\bmtheta)\cdot\text{EAD}^i.
\end{equation}

\paragraph{Unexpected Loss.}
Under the Basel ASRF single-factor model with asset-return correlation
$\rho \in (0,1)$, the regulatory capital charge for approved loan $i$ is
\begin{equation}
  \smalleq{K^i(\bmtheta)
  = \Phi\!\left[\frac{
      \Phi^{-1}(\text{PD}^i(\bmtheta)) + \sqrt{\rho}\,\Phi^{-1}(0.999)
    }{\sqrt{1-\rho}}\right] - \text{PD}^i(\bmtheta)}
\end{equation}
where $\Phi$ denotes the standard normal CDF. The portfolio unexpected
loss is

\begin{equation}\label{eq:UL}
  \UL(\bmtheta)
  = (
      \sum_{i=1}^{N} \bigl[\bar\omega^i(\bmtheta)\cdot\text{LGD}\cdot\text{EAD}^i\cdot K^i(\bmtheta)\bigr]^2
    )^{\!1/2},
\end{equation}
\zc{where $\bar\omega^i(\bmtheta)$ is the soft approval weight introduced above.}
The soft weights make the vector inside the Euclidean norm smooth; hence
$\UL$ is locally Lipschitz (and differentiable wherever that vector is
nonzero). The hard-indicator version reported in our empirical analysis is
recovered in the limit $\kappa \to \infty$.

\subsection{Bounding the Impact of Fairness Regularisation}
\label{subsec:risk_bounds}

Recall from Eq.~\eqref{eq:regularised_learning} that the fairness-regularised
objective is $\mathcal{L}_{\text{acc}}(\bmtheta) + \lambda\,\mathcal{L}_{\text{fair}}(\bmtheta)$.
We denote its minimiser by $\bmtheta_\lambda$ and the minimiser of the
fairness-unregularised predictive objective $\mathcal{L}_{\text{acc}}$ (equivalently, $\lambda=0$) by $\bmtheta_0$.
Under the convex data-fit setting specified above, $\mathcal{L}_{\text{acc}}$ is $\mu$-strongly convex.
All risk bounds flow from the following displacement lemma, which
controls how far $\bmtheta_\lambda$ can deviate from $\bmtheta_0$.

\begin{lemma}[Parameter Displacement]
\label{lem:displacement}
Let $\mathcal{L}_{\textup{acc}}$ be continuously differentiable and
$\mu$-strongly convex, and let $\mathcal{L}_{\textup{fair}}$ be locally
Lipschitz near a local minimiser $\bmtheta_\lambda$ of the regularised
objective. Then there exists
$\bmzeta_\lambda\in\partial_C\mathcal{L}_{\textup{fair}}(\bmtheta_\lambda)$
such that
\begin{equation}\label{eq:displacement}
  \nm{\bmtheta_\lambda - \bmtheta_0}
  \;\leq\;
  \frac{\lambda}{\mu}\,\nm{\bmzeta_\lambda}.
\end{equation}
\end{lemma}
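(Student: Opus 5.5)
The plan combines a first-order optimality condition at $\bmtheta_\lambda$ with the strong monotonicity of $\nabla\mathcal{L}_{\textup{acc}}$.

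\textbf{Step 1: stationarity at $\bmtheta_\lambda$.} Since $\bmtheta_\lambda$ is a local minimiser of $\mathcal{L}_{\textup{acc}}+\lambda\mathcal{L}_{\textup{fair}}$, which is locally Lipschitz near $\bmtheta_\lambda$, Fermat's rule for the Clarke subdifferential gives
\[
\mathbf 0\in\partial_C\bigl(\mathcal{L}_{\textup{acc}}+\lambda\mathcal{L}_{\textup{fair}}\bigr)(\bmtheta_\lambda).
\]
Next apply the Clarke sum rule. Because $\mathcal{L}_{\textup{acc}}$ is continuously differentiable (strictly differentiable), the sum rule holds with equality:
\[
\partial_C\bigl(\mathcal{L}_{\textup{acc}}+\lambda\mathcal{L}_{\textup{fair}}\bigr)(\bmtheta_\lambda)=\nabla\mathcal{L}_{\textup{acc}}(\bmtheta_\lambda)+\lambda\,\partial_C\mathcal{L}_{\textup{fair}}(\bmtheta_\lambda),
\]
using $\lambda>0$ for the scaling. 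Hence there is some $\bmzeta_\lambda\in\partial_C\mathcal{L}_{\textup{fair}}(\bmtheta_\lambda)$ with
\[
\nabla\mathcal{L}_{\textup{acc}}(\bmtheta_\lambda)=-\lambda\bmzeta_\lambda .
\]
This is the step I expect to need the most care. One must check that the equality form of the sum rule is justified by the $C^1$ assumption, and not merely the inclusion form, although the inclusion alone already suffices for existence.

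\textbf{Step 2: stationarity at $\bmtheta_0$.} Since $\bmtheta_0$ minimises the differentiable, strongly convex function $\mathcal{L}_{\textup{acc}}$, we have $\nabla\mathcal{L}_{\textup{acc}}(\bmtheta_0)=\mathbf 0$.

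\textbf{Step 3: strong monotonicity.} $\mu$-strong convexity of a differentiable function implies
\[
\bigl(\nabla\mathcal{L}_{\textup{acc}}(\bmtheta_\lambda)-\nabla\mathcal{L}_{\textup{acc}}(\bmtheta_0)\bigr)^\top(\bmtheta_\lambda-\bmtheta_0)\ge\mu\nm{\bmtheta_\lambda-\bmtheta_0}^2 .
\]
To see this, add the two first-order strong-convexity inequalities written at each point.

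\textbf{Step 4: conclude.} Substituting Steps 1 and 2 into the left-hand side gives $-\lambda\bmzeta_\lambda^\top(\bmtheta_\lambda-\bmtheta_0)$. By Cauchy--Schwarz this is at most $\lambda\nm{\bmzeta_\lambda}\nm{\bmtheta_\lambda-\bmtheta_0}$. Dividing by $\mu\nm{\bmtheta_\lambda-\bmtheta_0}$ yields Eq.~\eqref{eq:displacement}; the case $\bmtheta_\lambda=\bmtheta_0$ is trivial. The norm is the Euclidean one, matching the inner product used in Step 3.
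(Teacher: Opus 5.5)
Your proposal is correct and matches the paper's proof essentially step for step. Both use Clarke's first-order condition at $\bmtheta_\lambda$ to get $\nabla\mathcal{L}_{\textup{acc}}(\bmtheta_\lambda)+\lambda\bmzeta_\lambda=\mathbf 0$, subtract $\nabla\mathcal{L}_{\textup{acc}}(\bmtheta_0)=\mathbf 0$, and then apply strong monotonicity and Cauchy--Schwarz; your explicit justification of the Clarke sum rule via strict differentiability of $\mathcal{L}_{\textup{acc}}$ fills in a detail the paper leaves implicit.
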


The bound \eqref{eq:displacement} is informative when the Clarke
subgradients of $\mathcal{L}_{\textup{fair}}$ at $\bmtheta_\lambda$ are
finite. The following lemma supplies an explicit bound.

\begin{lemma}[Fairness Gradient Bound]
\label{lem:kfair}
Consider the logistic model and the soft fairness loss in Eq.~\eqref{eq:soft_disparity}, evaluated with the exact effort in Eq.~\eqref{eq:exact_logistic_efforts}. Let $\tilde{\bmx}^{\,i}:=(\bmx^i,1)$ and $\tilde R:=\max_i\|\tilde{\bmx}^{\,i}\|_2$. Suppose $\bmW\succ0$, $\kappa>0$, $\tau\in(0,1)$, and, writing $\bmtheta_\lambda=(\bmw_\lambda,b_\lambda)$, suppose that $c_0\leq\|\bmw_\lambda\|_2$ and $\|\bmtheta_\lambda\|_2\leq C_0$ for some $c_0,C_0>0$. For $s\in\{0,1\}$, let $n_s^-:=\#\{i:s^i=s,\ \hat h(\bmx^i;\bmtheta_\lambda)<\tau\}$ and assume $n_s^-\geq1$.

Set $M:=C_0\tilde R$, $u_{\max}:=|\tau_{\mathrm{logit}}|+M$, and $d_{\min}:=\sigma_{\min}(\bmW^{-1/2})c_0>0$. For each applicant $i$, set $\bar B^i:=\sigma_{\max}(\bmW^{-1/2})C_0\|\tilde{\bmx}^{\,i}\|_2$ and $\bar G:=\sigma_{\max}(\bmW^{-1/2})$. For each group $s\in\{0,1\}$, define
\begin{equation}
\label{eq:Phis}
\Psi_s:=\frac{2}{n_s^-}\sum_{i:\,s^i=s}
\left[
\frac{\kappa u_{\max}}{8d_{\min}}\|\tilde{\bmx}^{\,i}\|_2
+\frac{\bar B^i+u_{\max}\bar G}{d_{\min}^2}
\right].
\end{equation}
Then every $\bmzeta\in\partial_C\mathcal L_{\mathrm{fair}}(\bmtheta_\lambda)$ satisfies
\begin{equation}
\label{eq:kfair}
\|\bmzeta\|_2\leq K_{\mathrm{fair}}:=\Psi_0+\Psi_1<\infty.
\end{equation}

For causal effort, replace $\bmW^{-1/2}$ throughout by $\bmT:=\bmW^{-1/2}(\mathbb I_d-\bmA)^{-\top}$; equivalently, use $d_{\min}^c=\sigma_{\min}(\bmT)c_0$ and replace $\sigma_{\max}(\bmW^{-1/2})$ in $\bar B^i$ and $\bar G$ by $\sigma_{\max}(\bmT)$.
\end{lemma}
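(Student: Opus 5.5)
The plan is to write $\mathcal L_{\mathrm{fair}}=|A_0(\bmtheta)-A_1(\bmtheta)|$ with $A_s(\bmtheta):=\sum_{i:s^i=s}\omega^i e^i(\bmtheta)\big/\sum_{i:s^i=s}\omega^i$. I would first bound the norm of every gradient of $A_s$ at differentiability points in a small neighbourhood $U$ of $\bmtheta_\lambda$. I would then pass to the Clarke subdifferential using three standard calculus rules:
\begin{itemize}
\item $\partial_C|g|\subseteq\{t\bmzeta:\ t\in[-1,1],\ \bmzeta\in\partial_C g\}$;
\item $\partial_C(A_0-A_1)\subseteq\partial_C A_0-\partial_C A_1$;
\item $\partial_C f(\bmtheta_\lambda)$ is the convex hull of limits of gradients at nearby differentiability points.
\end{itemize}
The upshot is that a uniform gradient bound $\Psi_s$ for $A_s$ on $U$ gives $\|\bmzeta\|_2\le\Psi_0+\Psi_1$. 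The only nonsmoothness of $A_s$ comes from $[\cdot]_+$, which is Lipschitz. At its kink the one-sided derivatives of the numerator are $0$ or $-\tilde{\bmx}^{\,i}$, so the same bounds cover it.

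The preliminary estimates, valid on $U$, are as follows.
\begin{enumerate}
\item By Cauchy--Schwarz, $|z^i|\le\|\bmtheta\|_2\|\tilde{\bmx}^{\,i}\|_2\le M$. Hence the numerator $N^i:=[\tau_{\mathrm{logit}}-z^i]_+$ satisfies $0\le N^i\le u_{\max}$.
\item The denominator $D(\bmw):=\|\bmW^{-1/2}\bmw\|_2\ge\sigma_{\min}(\bmW^{-1/2})\|\bmw\|_2\ge d_{\min}$. Together with step~1 this gives $0\le e^i\le u_{\max}/d_{\min}$, and hence $|e^i-A_s|\le 2u_{\max}/d_{\min}$, since $A_s$ is a convex combination of the $e^j$.
\item For the denominator of $A_s$: every rejected applicant in group $s$ has $\hat h<\tau$, so $\omega^i>1/2$, and $\omega^j>0$ for all others. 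Therefore $\sum_{i:s^i=s}\omega^i\ge n_s^-/2$, which is the source of the factor $2/n_s^-$ in Eq.~\eqref{eq:Phis}.
\end{enumerate}
On $U$ these bounds hold up to an arbitrarily small slack, because strict rejection is an open condition and the norms are continuous. That slack is removed in the limit defining $\partial_C$.

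Next I would differentiate the ratio:
\[
\nabla A_s=\frac{\sum_{i:s^i=s}\bigl[\nabla\omega^i\,(e^i-A_s)+\omega^i\nabla e^i\bigr]}{\sum_{i:s^i=s}\omega^i}.
\]
For the logistic score, $\nabla_{\bmtheta}\omega^i=-\kappa\,\omega^i(1-\omega^i)\,\hat h^i(1-\hat h^i)\,\tilde{\bmx}^{\,i}$. Both products are at most $1/4$, so $\|\nabla\omega^i\|\le\kappa\|\tilde{\bmx}^{\,i}\|_2/16$. Multiplying by $2u_{\max}/d_{\min}$ gives exactly the first bracketed term $\kappa u_{\max}\|\tilde{\bmx}^{\,i}\|_2/(8d_{\min})$.

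For $e^i=N^i/D$ the quotient rule gives $\|\nabla e^i\|\le\|\tilde{\bmx}^{\,i}\|_2/D+N^i\|\nabla D\|/D^2$. Here $\|\nabla_{\bmw}D\|=\|\bmW^{-1}\bmw\|/D\le\sigma_{\max}(\bmW^{-1/2})=\bar G$. For the first piece I would write $\|\tilde{\bmx}^{\,i}\|_2/D=D\|\tilde{\bmx}^{\,i}\|_2/D^2$ and use $D\le\sigma_{\max}(\bmW^{-1/2})C_0$, so that $\|\tilde{\bmx}^{\,i}\|_2/D\le\bar B^i/d_{\min}^2$. With $\omega^i\le1$ this yields the second bracketed term $(\bar B^i+u_{\max}\bar G)/d_{\min}^2$. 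Summing over all $i$ with $s^i=s$ and dividing by $n_s^-/2$ gives $\Psi_s$.

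The causal case is identical with $D$ replaced by $\|\bmT^{\top}\bmw\|_2$, where $\bmT^{\top}=\mathbf P^{\top}$-weighted, i.e.\ $\|\bmW^{-1/2}\mathbf P^{\top}\bmw\|_2$. Singular values of $\bmT$ replace those of $\bmW^{-1/2}$, which gives $d_{\min}^c$ and the modified $\bar B^i,\bar G$.

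I expect the main obstacle to be the nonsmooth bookkeeping rather than the algebra. Specifically, I must justify that the neighbourhood bounds transfer to the full Clarke subdifferential of the composite absolute-value–ratio–positive-part map. I also need the lower bounds $D\ge d_{\min}$ and $\sum\omega\ge n_s^-/2$ to persist on a neighbourhood, not only at $\bmtheta_\lambda$. I would handle both by a continuity-plus-$\varepsilon$ argument and then take limits.
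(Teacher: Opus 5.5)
Your proposal is correct and follows the paper's proof essentially step for step. It uses the same bound $0\le e^i\le u_{\max}/d_{\min}$, the same quotient estimate $(\bar B^i+u_{\max}\bar G)/d_{\min}^2$, the same weight-gradient bound $\kappa\|\tilde{\bmx}^{\,i}\|_2/16$ with $\sum_{i:s^i=s}\omega^i>n_s^-/2$, and the same absolute-value/sum step, giving exactly $\Psi_0+\Psi_1$. The only difference is the nonsmooth bookkeeping. The paper applies Clarke's quotient, product, sum and chain rules pointwise at $\bmtheta_\lambda$, so it needs no neighbourhood-plus-slack argument, but your limiting-gradient route is equally valid. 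One small point: the chain-rule inclusion for $|\cdot|$ holds only up to a convex hull, which does not affect the norm bound.
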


Lemmas~\ref{lem:displacement} and~\ref{lem:kfair} together establish that $\nm{\bmtheta_\lambda - \bmtheta_0} \leq (\lambda/\mu)\,K_{\text{fair}}$. A Lipschitz bound for a portfolio-risk functional on the segment joining $\bmtheta_0$ and $\bmtheta_\lambda$ then converts this displacement into a bound on its change.

\begin{theorem}[Risk Bounds under Fairness Regularisation]
\label{thm:risk_bounds}
Let
\[
\Theta_\lambda
:=
\left\{
\bmtheta_0+t(\bmtheta_\lambda-\bmtheta_0):t\in[0,1]
\right\}.
\]
For $R\in\{\EL,\UL\}$, suppose that $R$ is locally Lipschitz on a
neighbourhood of $\Theta_\lambda$, and define
\[
K_R
:=
\sup_{\bmtheta\in\Theta_\lambda}
\sup_{\bm\varphi\in\partial_C R(\bmtheta)}
\|\bm\varphi\|_2
<\infty.
\]
Then the following bounds hold simultaneously:
\begin{equation}\label{eq:bound_el}
  \bigl|\EL(\bmtheta_\lambda) - \EL(\bmtheta_0)\bigr|
  \;\leq\; K_{\EL}\cdot\frac{\lambda}{\mu}\cdot K_{\textup{fair}},
\end{equation}
\begin{equation}\label{eq:bound_ul}
  \bigl|\UL(\bmtheta_\lambda) - \UL(\bmtheta_0)\bigr|
  \;\leq\; K_{\UL}\cdot\frac{\lambda}{\mu}\cdot K_{\textup{fair}},
\end{equation}

\end{theorem}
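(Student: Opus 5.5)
The argument combines a nonsmooth mean-value inequality along the segment $\Theta_\lambda$ with the displacement bound from Lemmas~\ref{lem:displacement} and~\ref{lem:kfair}. Fix $R\in\{\EL,\UL\}$. Since $R$ is locally Lipschitz on a neighbourhood of the compact segment $\Theta_\lambda$, Lebourg's mean value theorem for Clarke subdifferentials applies. It gives some $t^*\in(0,1)$, with $\bar\bmtheta:=\bmtheta_0+t^*(\bmtheta_\lambda-\bmtheta_0)\in\Theta_\lambda$, and some $\bm\varphi\in\partial_C R(\bar\bmtheta)$ such that
\begin{equation*}
R(\bmtheta_\lambda)-R(\bmtheta_0)=\bm\varphi^\top(\bmtheta_\lambda-\bmtheta_0).
\end{equation*}
By Cauchy--Schwarz and the definition of $K_R$ as a supremum over $\Theta_\lambda$ and over the Clarke subgradients there, this yields $|R(\bmtheta_\lambda)-R(\bmtheta_0)|\le K_R\,\|\bmtheta_\lambda-\bmtheta_0\|_2$.

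An alternative to Lebourg's theorem avoids it altogether. The map $t\mapsto R(\bmtheta_0+t(\bmtheta_\lambda-\bmtheta_0))$ is Lipschitz on $[0,1]$, hence absolutely continuous, and its derivative exists almost everywhere. At each such $t$, the derivative equals a directional derivative of $R$. That directional derivative is bounded in absolute value by Clarke's generalized directional derivative, which in turn is at most $K_R\|\bmtheta_\lambda-\bmtheta_0\|_2$ through the support-function characterisation of $\partial_C R$. Integrating over $[0,1]$ gives the same inequality.

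Next, control the displacement. Lemma~\ref{lem:displacement} supplies some $\bmzeta_\lambda\in\partial_C\mathcal L_{\mathrm{fair}}(\bmtheta_\lambda)$ with $\|\bmtheta_\lambda-\bmtheta_0\|\le(\lambda/\mu)\|\bmzeta_\lambda\|$. Lemma~\ref{lem:kfair} bounds every Clarke subgradient of $\mathcal L_{\mathrm{fair}}$ at $\bmtheta_\lambda$ by $K_{\mathrm{fair}}$, so in particular it bounds $\bmzeta_\lambda$. Chaining the two gives $\|\bmtheta_\lambda-\bmtheta_0\|_2\le(\lambda/\mu)K_{\mathrm{fair}}$. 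Substituting this into the mean-value inequality proves \eqref{eq:bound_el} for $R=\EL$ and \eqref{eq:bound_ul} for $R=\UL$.

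The two bounds hold simultaneously because both use the same $\bmtheta_\lambda$, $\bmtheta_0$ and displacement bound. The only quantities that differ between them are the constants $K_{\EL}$ and $K_{\UL}$.

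The main obstacle is the mean-value step, because $R$ need not be differentiable. The statement assumes local Lipschitz continuity on a neighbourhood of $\Theta_\lambda$ precisely so that Lebourg's theorem, or the absolute-continuity argument, applies. Two further points need care:
\begin{itemize}
\item The norm $\|\cdot\|$ in Lemma~\ref{lem:displacement} must be read as the Euclidean norm, so that it matches the norm used in $K_R$.
\item The hypotheses of Lemma~\ref{lem:kfair}, namely $c_0\le\|\bmw_\lambda\|$, $\|\bmtheta_\lambda\|\le C_0$ and $n_s^-\ge1$, must be in force at $\bmtheta_\lambda$. These are implicitly assumed by the theorem, and I would state this explicitly.
\end{itemize}
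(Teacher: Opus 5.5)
Your proposal is correct and follows the paper's proof essentially step for step. Lebourg's mean value theorem together with the definition of $K_R$ gives $|R(\bmtheta_\lambda)-R(\bmtheta_0)|\le K_R\|\bmtheta_\lambda-\bmtheta_0\|_2$, and chaining Lemmas~\ref{lem:displacement} and~\ref{lem:kfair} bounds the displacement by $(\lambda/\mu)K_{\mathrm{fair}}$; your absolute-continuity alternative and your caveats about the Euclidean norm and the standing hypotheses of Lemma~\ref{lem:kfair} are valid but go beyond what the paper's argument needs.
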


Theorem~\ref{thm:risk_bounds} has a direct operational interpretation. For each risk measure $R\in\{\EL,\UL\}$, the bound combines the displacement factor $\lambda/\mu$, the fairness subgradient bound $K_{\mathrm{fair}}$ at the fitted model, and the risk-sensitivity constant $K_R$ along the segment joining $\bmtheta_0$ and $\bmtheta_\lambda$. Holding the two sensitivity constants fixed, reducing $\lambda$ or increasing $\mu$ tightens the bound. Since the constants depend on the fitted model and parameter segment, the result is best interpreted as a worst-case post-training certificate for each chosen fairness weight; it does not assert uniform linear growth or continuity of the fitted solution path in $\lambda$. The realised risk changes reported in Section~\ref{sec:experiment} are substantially smaller than these
worst-case guarantees.

\section{From Diagnosis to Remedy: Actionable Explanations}
\label{sec:discrete_recourse}

The effort-centric framework developed in Sections~\ref{sec:framework} and~\ref{sec:approximate_learning} provides aggregate diagnostics of fairness by comparing how much effort different groups must exert to obtain credit. In many decision contexts, however, applicants and lenders are also interested in \emph{individual-level} guidance, which concrete changes to an application would increase their credit assessment, subject to institutional and behavioural constraints.

In this section, we show how the same effort formulation used for fairness assessment can be repurposed to construct \emph{effort-aware pathways} for individual applicants. We pose a constrained optimisation problem over mutable features, embedding practical constraints on admissible changes, and interpret its solution as a recommended sequence of adjustments. These pathways complement our aggregate analysis by making explicit the concrete burdens that different applicants face and by highlighting how effort-regularised models alter the nature of the recommendations they receive.

\subsection{The Challenge of Credit Data}
\label{subsec:real_data_challenge}

The closed-form effort expressions developed in Section~\ref{sec:approximate_learning} concern unconstrained continuous actions, but real credit applications include continuous, discrete, and immutable features. This heterogeneity creates a practical challenge; a recommendation that an applicant ``increase the number of credit loans by 0.7'' is not meaningful.

We address this by partitioning features into three categories: \emph{continuous mutable features} that can take any value within a realistic range; \emph{ordinal mutable features} that take integer values and can change only by discrete amounts; and \emph{immutable features} that must remain fixed. This categorisation, determined by domain knowledge and regulatory constraints, ensures that generated pathways are feasible and interpretable.

\subsection{Illustrative Pathways to Improvement}
\label{subsec:computing_recommendations}

The computation requires three inputs: (i) the applicant's current feature values, (ii) a cost weight matrix encoding the relative difficulty of changing each feature, and (iii) feature-specific bounds and mutability constraints. The weight matrix may be set by the institution based on domain expertise, calibrated using empirical evidence on typical applicant trajectories, or specified by the applicant to reflect personal circumstances. This flexibility allows pathways to be tailored to individuals.

Given these inputs, we formulate the problem as finding the minimum weighted cost combination of hypothetical changes such that the modified profile would satisfy the approval threshold. The presence of discrete features transforms this into a mixed-integer programme. For linear classifiers, this formulation yields a mixed-integer quadratic programme that commercial solvers can handle efficiently. We develop a greedy algorithm that constructs pathways incrementally, prioritising changes with the highest impact per unit of difficulty. Technical details and pseudocode are provided in Appendix~\ref{sec:appendix_discrete}.

\subsection{Incorporating Causal Relationships}
\label{subsec:causal_recommendations}

The pathways discussed above treat features as independently adjustable. In practice, modifying one feature may induce systematic changes in others through underlying causal dependencies. Our causal effort explicitly accounts for these relationships by distinguishing between the interventions an individual undertakes and the resulting downstream feature changes.
In constructing pathways, we therefore specify interventions rather than final feature values. This allows the resulting pathways to reflect the natural propagation of changes implied by the structural model, yielding representations that are both more parsimonious and faithful to the underlying data-generating process.
Methodologically, we incorporate causal structure by replacing the direct gradient used in the feature-independent setting with the corresponding causal gradient when evaluating marginal benefits. This adjustment ensures that the greedy procedure respects the causality and correctly attributes the relative contribution of each intervention.

\subsection{Beyond Traditional Explanations}
\label{subsec:beyond_xai}

Post-hoc explanation methods such as feature-attribution techniques, e.g., SHAP \citep{lundberg2017unified} and LIME \citep{ribeiro2016should} are now routinely used to rationalise credit decisions. These tools identify which features contributed most to a particular prediction and can be helpful for auditing models and communicating high-level drivers of risk. However, they are fundamentally \emph{diagnostic}; they describe why an applicant was classified as high risk, but not which concrete changes would be sufficient to improve the credit assessment under realistic constraints.

For individual applicants and credit managers, this diagnostic focus is often inadequate. Delinquency history, for example, is crucial for the final decision but is inherently historical and cannot be improved through future effort. Answering such questions requires moving from explanations of the \emph{current} decision to recommendations about \emph{feasible future} profiles, taking behavioural, institutional, and causal constraints into account.

The effort-centric framework developed in Sections~\ref{sec:framework} and~\ref{sec:approximate_learning} provides precisely the modules needed for such recommendations. For each rejected applicant, we have defined an effort function that quantifies the minimum burden required to reach the approval region, either under a feature-independent or a causal notion of change. In this subsection, we repurpose this quantity to construct \emph{effort-aware pathways}; constrained sequences of adjustments to mutable features that improve their credit assessment.

Formally, we pose a constrained optimisation problem over the vector of mutable features. The objective minimises the effort required to reach the decision boundary of the learnt classifier, while the constraints encode immutability (e.g., age, past defaults) and institutional rules (e.g., upper bounds on loan amounts). The solution yields a recommended combination of changes, together with its associated effort, and these pathways complement feature-attribution explanations by translating model behaviour into concrete, feasible changes.

\section{Empirical Analysis}
\label{sec:experiment}
In this section, we demonstrate that, in the mortgage area, (i) substantial effort disparities persist even when predictive parity is satisfied, (ii) our in-processing method effectively reduces these disparities whilst preserving predictive performance, (iii) the financial cost of achieving effort fairness is modest, and (iv) predictive parity emerges as a by-product of our framework.

\subsection{Data and Experimental Setup}
\label{subsec:setup_experiment}

\paragraph{Dataset.}
We construct our dataset by linking two complementary sources: the \emph{Home Mortgage Disclosure Act} (HMDA) dataset and the \emph{Freddie Mac Single Family Loan-Level} dataset. HMDA, mandated by federal law, requires financial institutions to publicly disclose loan-level information on mortgage applications, including applicant demographics such as race, ethnicity, and gender, as well as loan characteristics such as amount, purpose, and geographic identifiers. However, HMDA lacks detailed credit risk variables and post-origination performance information. The Freddie Mac Single Family dataset, by contrast, provides rich loan-level origination characteristics, including interest rate, loan-to-value ratio, debt-to-income ratio, and credit score, together with monthly performance records that enable the construction of default outcomes. Crucially, however, Freddie Mac does not report borrower demographics. Linking these two sources,  following \citet{saadi2020role,kielty2023simplifying}, therefore allows us to combine the demographic information available in HMDA with the credit risk and performance data available from Freddie Mac, yielding a dataset suitable for studying fairness in mortgages. The matched sample comprises $16{,}250$ loan-level observations for the 2018 origination cohort. Restricting the fairness analysis to applicants with reported binary gender (female or male) yields a final analytic sample of $9{,}195$ observations.

The outcome variable is mortgage default, defined as the borrower becoming $90+$ days past due at any point during the observation window. We code $Y = 1$ for non-default, that is, loans for which no $90+$ day delinquency event is recorded throughout the entire observation period, and $Y = 0$ for observed default. Since the matched Freddie Mac performance data contain originated loans, our empirical sample does not include the subsequent repayment outcomes of applicants who were actually denied credit.
We follow standard practice in the credit scoring fairness literature, where models are trained and evaluated on the observed outcomes of originated loans \citep{kozodoi_fairness_2022}, and applicants predicted to carry a higher probability of default are treated as high-risk and denied credit; the recourse analysis should accordingly be read as measuring effort relative to this approval proxy rather than as observing real denied applicants.
We treat gender as the protected attribute $S$, with male applicants constituting the privileged group ($S = 1$) and female applicants the unprivileged group ($S = 0$). Table~\ref{tab:description_dataset} summarises the full set of features used in our analysis; we distinguish mutable features (e.g., income, debt-to-income ratio) from immutable ones (e.g., loan purpose), with immutable features assigned prohibitively large modification costs to prevent algorithmic manipulation.
Full dataset details are provided in Appendix~\ref{sec:appendix_data}.

\begin{table}[H]
\TABLE
{Feature summary for the dataset.\label{tab:description_dataset}}
{\begin{tabular}{lccc}
\hline
\up\down Symbol & Feature & Type & Mutable\\
\hline
\up $S$ & Gender & binary & $\times$\\
$X_1$ & Credit score & continuous & \checkmark\\
$X_2$ & Number of units & ordinal & $\times$\\
$X_3$ & Combined loan to value (CLTV) & continuous & \checkmark\\
$X_4$ & Debt to income ratio (DTI) & continuous & \checkmark\\
$X_5$ & Unpaid balance (UPB) & continuous & \checkmark\\
$X_6$ & Interest rate & continuous & $\times$\\
$X_7$ & Loan purpose & nominal & $\times$\\
$X_8$ & Number of borrowers & ordinal & \checkmark\\
$X_9$ & Income & continuous & \checkmark\\
\hline
\end{tabular}}
{}
\end{table}

\paragraph{Causal Structure.}
For causal effort calculations, we learn SCM over the feature space using a two-stage procedure combining constraint-based skeleton discovery (PC-stable algorithm) with score-based edge orientation (BIC-scored greedy search). Domain knowledge priors ensure plausible edge directions (e.g., income affects debt ratio). The resulting adjacency matrix $\bmA$ encodes direct causal effects, and the propagation matrix $(\mathbb{I}_d - \bmA)^{-1}$ is precomputed for use in Theorem~\ref{thm:causal_approx}. Full details of the learning procedure are provided in Appendix~\ref{sec:sm_causal}.

\paragraph{Classifier and Training.}
We employ logistic regression as the base classifier $\hat{h}(\bmx;\bmtheta)$, trained to minimise cross-entropy loss augmented with the effort fairness regulariser $\lambda\mathcal L_{\mathrm{fair}}(\bmtheta)$. Since the classifier is logistic, effort in the regulariser is evaluated with the exact logit-scale expressions in Eq.~\eqref{eq:exact_logistic_efforts}. Training uses the Adam optimiser (learning rate $10^{-2}$, batch size 128, up to 100 epochs with early stopping). The decision threshold is $\tau=0.5$, for which \zc{$\tau_{\mathrm{logit}}=0$}. We vary $\lambda$ to trace the fairness--accuracy trade-off.

\paragraph{Effort Weights.}
The diagonal weight matrix $\bmW$ encodes feature-specific modification costs. Immutable features receive weights $W_{jj} = 10^8$ to prevent change. For mutable features, we consider multiple weight configurations reflecting different assumptions about the relative difficulty of changing each feature; specifications are detailed in Appendix~\ref{sec:sm_weights}.

\paragraph{Evaluation Metrics.}
Predictive performance is measured by AUC, accuracy, and F1. We further evaluate each decision along two complementary financial dimensions, both computed on a one-year horizon over the set of approved loans: credit risk, measured by EL and UL as in Eqs.~\eqref{eq:EL} and~\eqref{eq:UL}, and profitability, measured by interest revenue,
\[
\zc{\text{Revenue}^i = \text{EAD}^i \times \text{IR}^i},
\]
\zc{where $\text{IR}^i$ is the annual interest rate.} We additionally report risk-adjusted return on capital,
\[
\text{RAROC} = \frac{\text{Revenue} - \text{EL}}{\text{UL}},
\]
which jointly summarises the risk-return trade-off across these two dimensions. All financial metrics are reported as means per approved applicant. Effort fairness is quantified via Eqs.~\eqref{eq:fi_disparity}--\eqref{eq:gc_disparity}. For actionable explanations, we report: (i) \emph{Feature Change Ratio} (FCR), the fraction of features requiring intervention; (ii) \emph{Effort}, $\|\bm{\delta}^*\|_{\bmW}$ or $\|\bmxi_{\bmX}^*\|_{\bmW}$, which is the weighted distance to the boundary (DtB); and (iii) \emph{Validity}, the share of rejected applicants whose action crosses the decision boundary, so $\mathrm{Val}=100\%$ by construction. All experiments are repeated over five random runs; we report means and standard errors.

\subsection{Descriptive Statistics and Group Differences}
\label{subsec:statistical_analysis}

Table \ref{tab:statistical_analysis} provides a diagnostic of group heterogeneity in both outcomes and the financial attributes that typically drive credit decisions. The outcome rate is high and does not differ significantly by group: 0.931 for female borrowers versus 0.932 for male borrowers. This suggests that the two groups may satisfy statistical parity in outcomes, while leaving open whether fairness holds among the rejected population or with respect to effort-based assessment.

Despite this similarity in outcomes, the covariate distributions exhibit several economically meaningful differences across groups. Male borrowers take out substantially larger loans, as reflected in a higher unpaid principal balance and a higher combined loan-to-value ratio, indicating that they finance a larger share of the property value. They also report markedly higher income and are charged slightly higher interest rates, and are somewhat more likely to purchase multi-unit properties. The debt-to-income ratio is somewhat lower for male borrowers, though the difference is only marginally significant. Loan purpose and the number of borrowers on the loan do not differ significantly between groups, and credit scores are statistically indistinguishable, suggesting that female and male borrowers are, on average, comparably creditworthy at origination. These imply that gender is associated with a distinct joint financial profile, so that even models that formally exclude sensitive attributes may still encode group membership indirectly through correlated predictors.

These distributional differences are consequential for our effort analysis, which is defined relative to a borrower's position in feature space and the feasible directions of change. Group gaps in loan size, leverage, and pricing therefore translate mechanically into group differences in distance to the decision boundary and in the minimum-cost improvement path, particularly among rejected applicants. In particular, a group carrying a higher loan-to-value ratio and a larger loan balance will typically require a larger reduction in leverage or a larger increase in income to reach approval, even when acceptance rates or error rates are equalised across groups.

The large standard deviation for income reflects the heavy-tailed nature of reported borrower income in the matched mortgage data. To limit the influence of extreme observations on model training and effort calculations, continuous features are winsorised at the 99th percentile in preprocessing, as detailed in Appendix~\ref{subsec:appendix_preprocessing}.

\begin{table}[H]
\footnotesize
\TABLE
{Summary Statistics for all, group $S=0$ and group $S=1$ borrowers.\label{tab:statistical_analysis}}
{\begin{tabular}{lllll}
\hline
 & All & $S=0$ & $S=1$ & $p$-value \\
\midrule
$Y$ & 0.931 (0.253) & 0.931 (0.254) & 0.932 (0.252) & $0.891$ \\
Credit score & 743.250 (49.665) & 742.556 (51.461) & 743.738 (48.363) & $0.266$ \\
Number of units & 1.098 (0.424) & 1.062 (0.332) & 1.124 (0.477) & $< 0.0001$ \\
CLTV & 67.513 (19.636) & 64.624 (20.816) & 69.543 (18.497) & $< 0.0001$ \\
DTI & 36.318 (9.543) & 36.569 (9.431) & 36.142 (9.618) & $< 0.05$ \\
UPB & 210454.051 (134621.820) & 192509.225 (128391.417) & 223059.619 (137449.024) & $< 0.0001$ \\
Interest rate & 4.932 (0.538) & 4.914 (0.534) & 4.944 (0.541) & $< 0.01$ \\
Loan purpose & 0.805 (0.771) & 0.823 (0.766) & 0.792 (0.774) & $0.057$ \\
Number of borrowers & 1.162 (0.388) & 1.154 (0.372) & 1.168 (0.399) & $0.107$ \\
Income & 113.615 (759.507) & 88.455 (544.868) & 131.289 (879.121) & $< 0.01$ \\
\hline
\end{tabular}}
{The table reports the p-values from two-sided t-tests to test for a difference in the mean of the focal variable between group $S=0$ and group $S=1$. Entries report mean (std).}
\end{table}

\subsection{Research Question 1: Does Masked Inequality Exist?}
\label{subsec:predictive_fairness_limitations}
We first examine whether satisfying predictive parity guarantees equitable opportunities for future approval. We investigate this by training classifiers that enforce three predictive parity criteria: Statistical Parity (SP), Equalised Odds (EO), and Positive Predictive Value Parity (PPV), alongside a baseline model that optimises only the accuracy loss. For each method, we select the configuration that offers the best trade-off and report its corresponding disparity. Table~\ref{tab:predictive_parity} confirms that predictive parity can be substantially improved without materially sacrificing predictive performance.

\begin{table}[H]
\TABLE
{Performance under predictive parity criteria.\label{tab:predictive_parity}}
{\begin{tabular}{lcccc}
\hline
Method & AUC & Accuracy & F1 & Pred.\ disparity \\
\hline
Baseline & $0.724\,(0.011)$ & $0.719\,(0.098)$ & $0.823\,(0.074)$ & --\\
SP \citep{kamishima2012fairness} & $0.724\,(0.011)$ & $0.719\,(0.098)$ & $0.823\,(0.074)$ & $1.73\times 10^{-5}\,(2.02\times 10^{-5})$ \\
EO \citep{hardt_equality_2016}& $0.722\,(0.012)$ & $0.711\,(0.010)$ & $0.817\,(0.076)$ & $0.013\,(0.005)$\\
PPV \citep{chouldechova2017fair} & $0.724\,(0.011)$ & $0.719\,(0.098)$ & $0.822\,(0.074)$ & $0.005\,(0.002)$\\
\hline
\end{tabular}}
{Entries report mean (std) across five runs.}
\end{table}

\begin{figure}[H]
\FIGURE
{\includegraphics[width=0.48\textwidth]{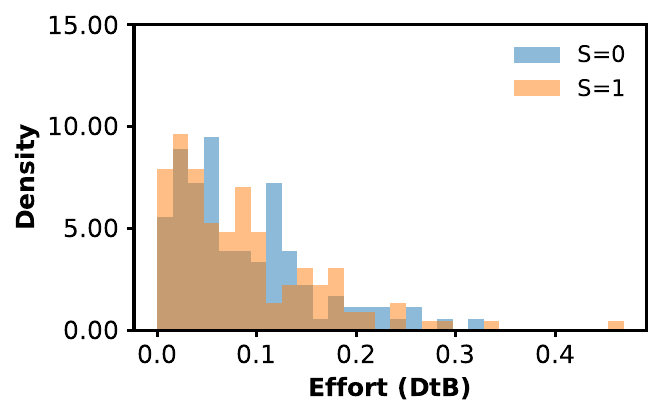}}
{Distribution of effort (distance to boundary) under the SP model.\label{fig:pp_dist_hist}}
{DtB denotes distance to boundary. The figure displays the effort distribution for rejected borrowers in each demographic group ($S=0$ and $S=1$). To quantify distributional differences beyond a single run, we report complementary distance metrics with mean (std) across five independent runs under SP: Kolmogorov--Smirnov (KS) statistic = 0.125 (0.093); Cram\'er--von Mises (CVM) statistic = 0.299 (0.291); Total Variation (TV) distance = 0.205 (0.110); Jensen--Shannon (JS) divergence = 0.044 (0.047); and Hellinger (HE) distance = 0.194 (0.123).
}
\end{figure}

\paragraph{Finding 1: Effort Disparities Persist Despite Predictive Parity.}
Despite achieving predictive parity, these criteria do not guarantee that female and male borrowers face comparable effort when attempting to overturn a loan rejection. Fig.~\ref{fig:pp_dist_hist} displays the distribution of distances to the decision boundary for rejected applicants under the model satisfying SP. Female applicants are systematically further from the boundary, indicating they must exert greater effort to obtain loans.
Due to space limitations, we present the histogram for a single run; however, we report multiple distance metrics computed across all independent runs. These aggregated metrics confirm that the observed separation is not driven by a particular experiment or by a specific choice of distance metric.

\paragraph{Answer to RQ1.}
These results provide clear evidence of masked inequality: satisfying predictive parity alone does not eliminate disparities in recourse effort. Rejected female applicants face greater effort even though the raw default rates and credit-score distributions of female and male borrowers are statistically similar (Table~\ref{tab:statistical_analysis}).
The same effort-gap pattern holds across all predictive parity criteria considered, with complete results reported in Appendix~\ref{sec:sm_pp_effort}.

\subsection{Research Question 2: Can Our Method Address Masked Inequality?}
\label{subsec:our_fairness}

We now evaluate how effectively the framework reduces these disparities and what predictive, risk, and profitability trade-offs result.

\subsubsection{Feature-independent Effort Parity}
\label{subsubsec:group_effort_parity}

\paragraph{Finding 2: Effort Parity Substantially Reduces Disparities.}
Fig.~\ref{fig:fi_group_tradeoff} displays the trade-off between predictive performance and effort disparity reduction as the regularisation weight $\lambda$ increases. Effort disparities decrease substantially, specifically by more than 50\% at higher $\lambda$ values, whilst AUC, accuracy, and F1 remain stable.

\begin{figure}[H]
\centering
\FIGURE
{
\subcaptionbox{Fairness improvement.}{
\includegraphics[height=5.0cm,keepaspectratio]{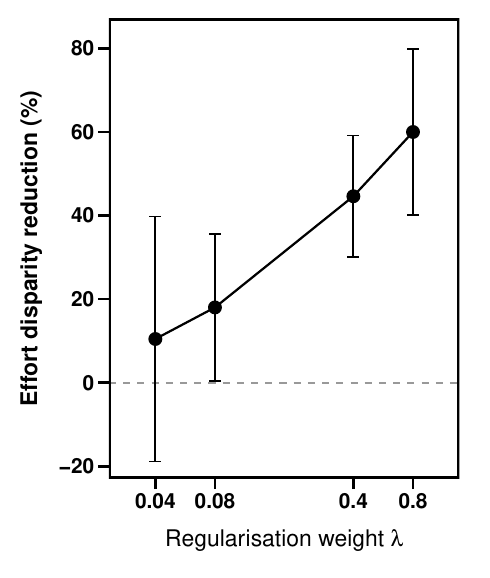}}
\subcaptionbox{Model performance stability.}{
\includegraphics[height=5.0cm,keepaspectratio]{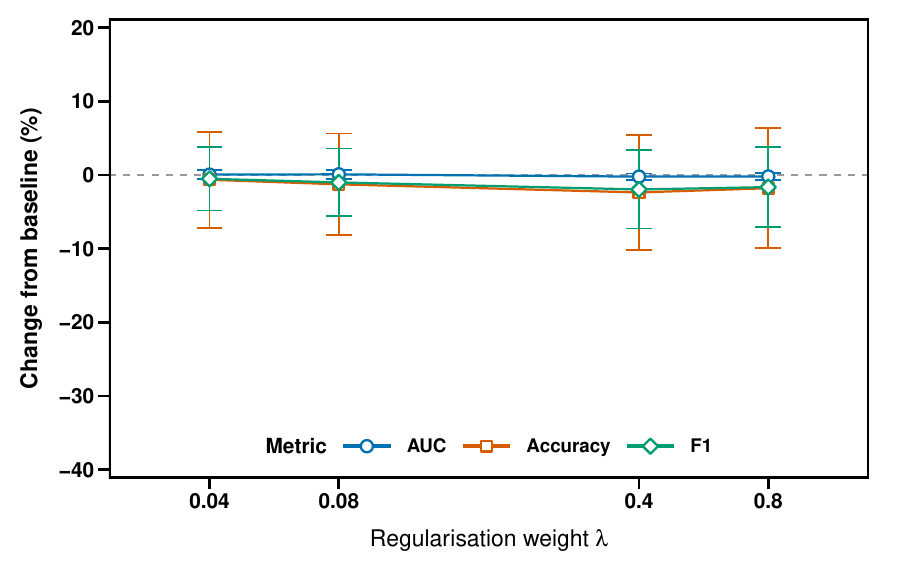}}
}
{Trade-off between predictive performance and effort disparity reduction under feature-independent effort parity.
\label{fig:fi_group_tradeoff}}
{Effort disparity reduction is measured as a percentage relative to the baseline model. Each point corresponds to a different value of $\lambda$; error bars show standard errors over five runs.}
\end{figure}

Fig.~\ref{fig:fi_group_dist_evolution} shows how the effort distributions change as $\lambda$ increases. At baseline ($\lambda = 0.00$), female applicants systematically require more effort than male applicants. The gap is visible at both ends of the distribution: among applicants requiring little effort, male borrowers account for a noticeably larger share, indicating that many of them can obtain loans with little change; meanwhile, female applicants are over-represented in the upper tail of required effort, implying substantially greater required changes. Once effort-parity regularisation is introduced, the female distribution begins to shift towards the male one, and this convergence becomes more pronounced as $\lambda$ increases towards 0.8.

\begin{figure}[H]
\FIGURE
{\includegraphics[width=0.98\textwidth]{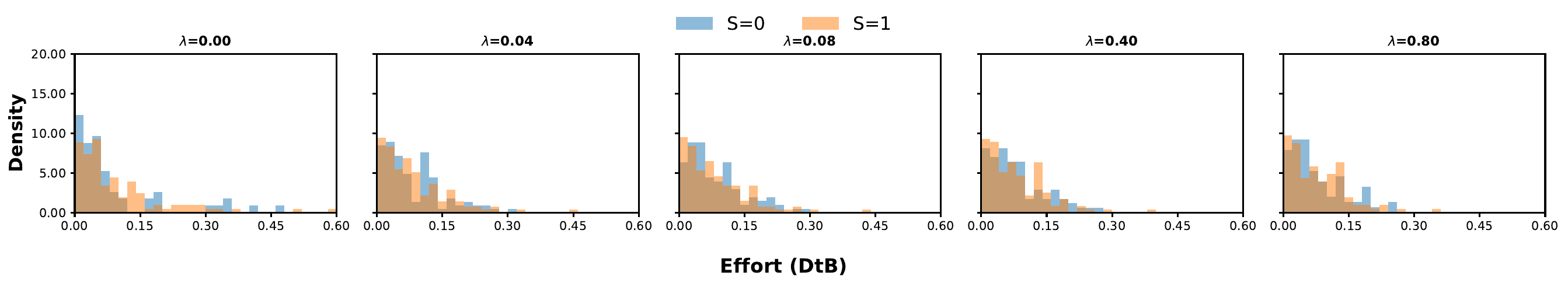}}
{Evolution of effort (distance to boundary) distributions under feature-independent effort parity.
\label{fig:fi_group_dist_evolution}}
{Each panel displays the distribution of effort for rejected borrowers in each demographic group ($S=0$ and $S=1$) at a given regularisation strength $\lambda$.}
\end{figure}

Specifically, we quantify the separation between the two distributions using diversified distance metrics in Table~\ref{tab:fi_dist_metrics_lambda}. It shows a broadly systematic reduction in cross-group discrepancy as $\lambda$ increases. The KS distance falls from 0.126 at baseline to approximately 0.075 once the smallest fairness weight is applied, and declines further to 0.066 at $\lambda = 0.8$. This pattern is broadly consistent with CVM, which drops sharply and monotonically from 0.314 at baseline to 0.076 at $\lambda = 0.8$. The histogram-based metrics provide complementary evidence. TV, JS, and HE quantify discrepancies in bin-wise probability mass under a fixed discretisation. The simultaneous decline in both distribution-based and histogram-based metrics suggests that cross-group differences narrow broadly across the effort distribution, rather than being driven solely by an extreme tail or a localised discrepancy. The consistency further indicates that this reduction is robust to the choice of distance measure.

\begin{table}[H]
\TABLE
{Distributional distance metrics between rejected female and male borrowers.\label{tab:fi_dist_metrics_lambda}}
{
\begin{tabular}{lccccc}
\hline
$\lambda$
& KS
& CVM
& TV
& JS
& HE \\
\hline
0.00 & 0.126 (0.095) & 0.314 (0.313) & 0.144 (0.086) & 0.021 (0.023) & 0.131 (0.077) \\
0.04 & 0.075 (0.019) & 0.218 (0.183) & 0.133 (0.071) & 0.015 (0.014) & 0.113 (0.057) \\
0.08 & 0.076 (0.019) & 0.174 (0.092) & 0.123 (0.055) & 0.013 (0.011) & 0.104 (0.047) \\
0.40 & 0.074 (0.029) & 0.106 (0.040) & 0.123 (0.061) & 0.015 (0.013) & 0.113 (0.053) \\
0.80 & 0.066 (0.035) & 0.076 (0.039) & 0.109 (0.047) & 0.010 (0.008) & 0.094 (0.039) \\
\hline
\end{tabular}
}
{Entries report mean (std) across five independent runs. Lower values indicate more similar distributions between groups for all metrics shown, including Kolmogorov--Smirnov (KS), Cram\'er--von Mises (CVM), Total Variation (TV), Jensen--Shannon (JS), and Hellinger (HE).}
\end{table}

\paragraph{Finding 3: Effort Parity Preserves Risk Controllability and Profitability.}

Fig.~\ref{fig:fi_risk_profit_tradeoff} reports the change in credit risk and profitability metrics relative to the unconstrained baseline as the feature-independent effort-parity penalty $\lambda$ increases. Both EL and UL remain broadly stable, with UL declining modestly as $\lambda$ increases (Panel a). This is because equalising the average distance to the decision boundary across the two groups may cause the model to tighten its approval standard for marginal applicants, so that some borrowers who would have been approved under the baseline model are now rejected. Since these excluded applicants are, on average, closer to the margin of creditworthiness, their exclusion lowers both the portfolio's expected loss and its exposure to correlated default risk (unexpected loss).
Interest revenue declines for the same reason (Panel b): tightening the approval standard shrinks the pool of approved loans.
Panel (c) reports that RAROC declines modestly as $\lambda$ increases, as the reduction in revenue is not fully offset by the accompanying reduction in expected and unexpected loss; risk-adjusted profitability is therefore somewhat lower under the fairness constraint, but it remains positive throughout the range of $\lambda$ we consider.
Feature-independent effort parity therefore reduces the lender's risk exposure, while RAROC remains positive despite its modest decline.
This demonstrates that substantial fairness gains are achievable under feature-independent effort parity without materially sacrificing financial profitability or losing control of credit risk.

\begin{figure}[H]
\centering
\FIGURE
{
\subcaptionbox{Expected and unexpected loss.}{
\includegraphics[width=0.32\textwidth,keepaspectratio]{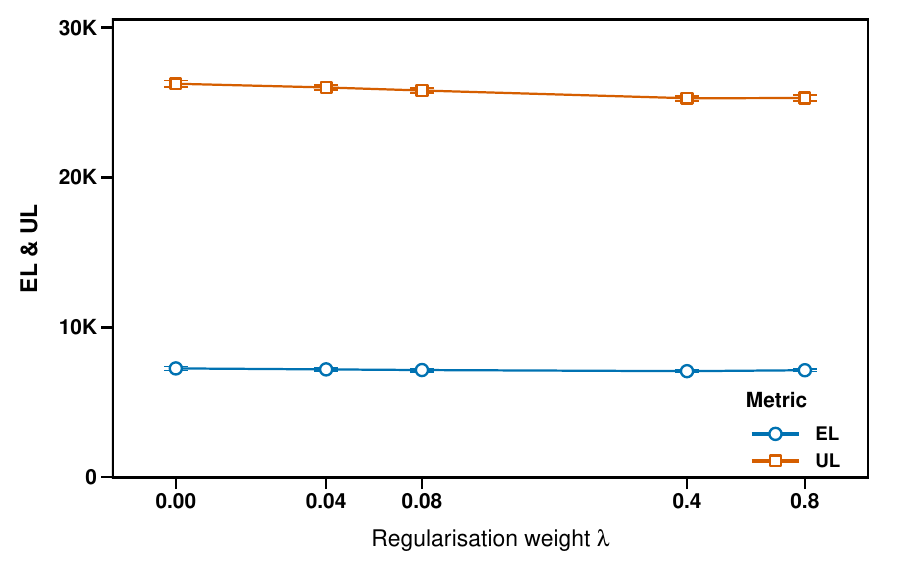}}
\subcaptionbox{Revenue.}{
\includegraphics[width=0.32\textwidth,keepaspectratio]{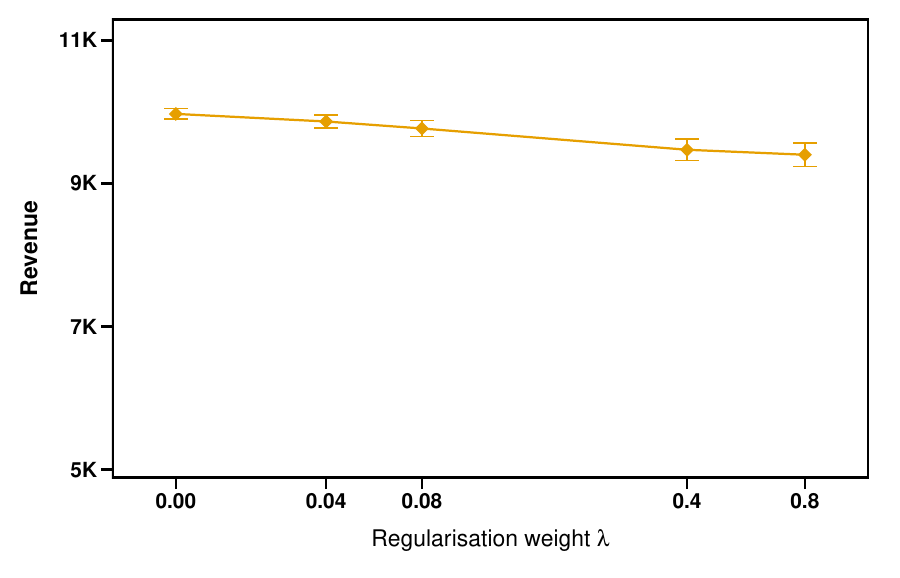}}
\subcaptionbox{Risk-adjusted return on capital.}{
\includegraphics[width=0.32\textwidth,keepaspectratio]{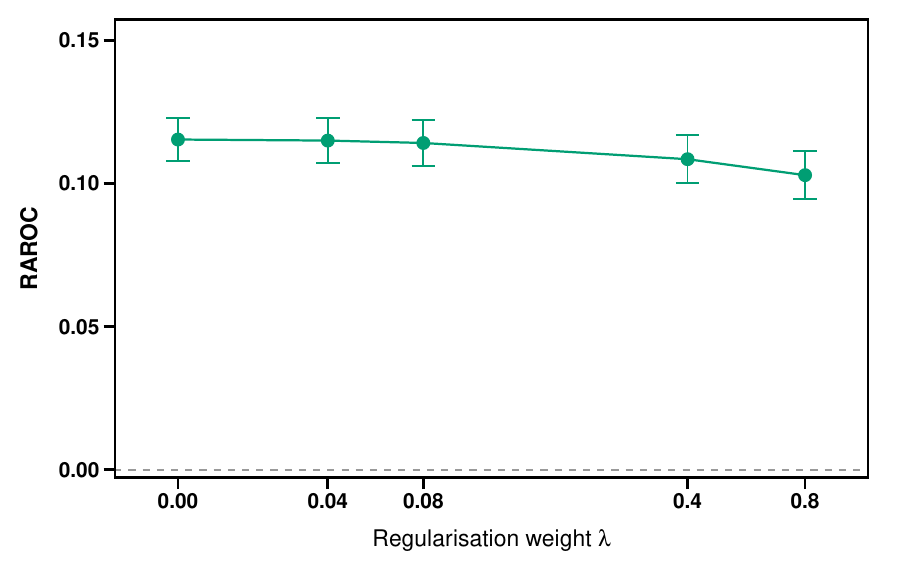}}
}
{Trade-off between credit risk and effort disparity reduction under feature-independent effort parity.
\label{fig:fi_risk_profit_tradeoff}}
{Each point corresponds to a different value of $\lambda$; error bars show standard errors over five runs.}
\end{figure}

\subsubsection{Causal Effort Parity}
\label{subsubsec:causal_group_effort_parity}

\paragraph{\zc{Finding 4: Causal Effort Parity Produces Larger Fairness Gains with Larger Trade-offs.}}

\zc{Causal effort parity responds strongly even at low regularisation.
Fig.~\ref{fig:cg_tradeoff} shows that the reduction exceeds 90\% at $\lambda=0.04$, peaks near 99\% at $\lambda=0.4$, and declines slightly at $\lambda=0.8$. These gains come with a more pronounced predictive trade-off than under feature-independent effort parity, including a lower AUC and, at larger $\lambda$, lower accuracy and F1.}

Fig.~\ref{fig:cg_group_dist_evolution} shows that, in the baseline model, female borrowers exhibit a heavier right tail than male borrowers, indicating that they require higher effort to obtain approval. As $\lambda$ increases, the two distributions move closer together, and once $\lambda \geq 0.4$, they overlap substantially across the support, including the right tail, suggesting that female and male borrowers face broadly comparable effort requirements to overturn a rejection. Table~\ref{tab:cg_dist_metrics_lambda} reports distance metrics under the causal effort parity, and it shows a similar pattern to that in Table~\ref{tab:fi_dist_metrics_lambda}.

\begin{figure}[H]
\centering
\FIGURE
{
\subcaptionbox{Fairness improvement.}{
\includegraphics[height=5.0cm,keepaspectratio]{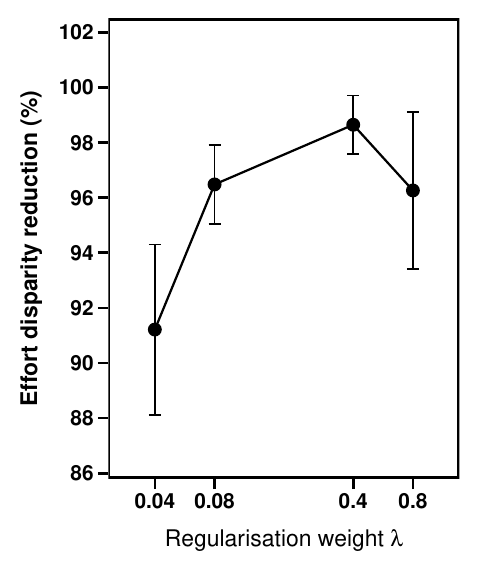}}
\subcaptionbox{Model performance stability.}{
\includegraphics[height=5.0cm,keepaspectratio]{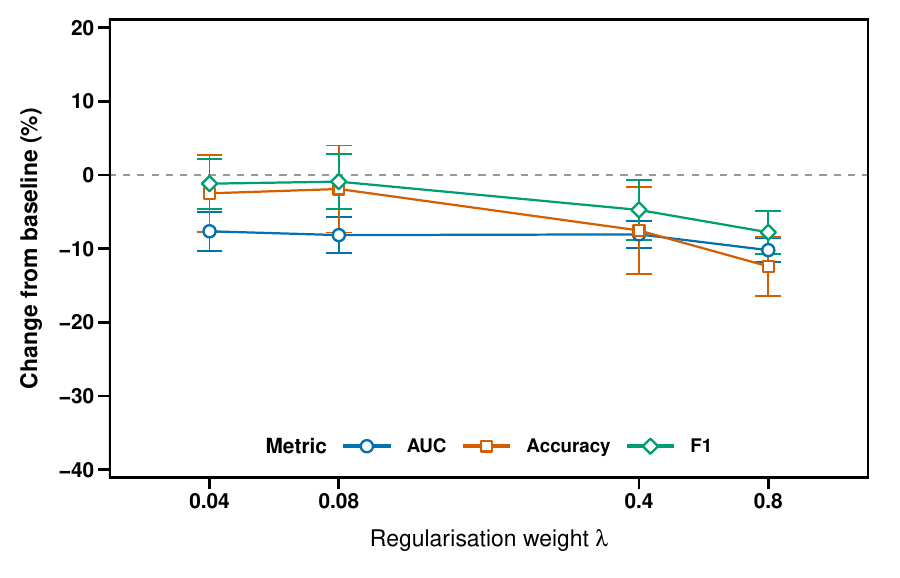}}
}
{Trade-off between predictive performance, profit, and effort disparity reduction under causal effort parity.
\label{fig:cg_tradeoff}}
{Effort disparity reduction is measured as a percentage relative to the baseline model. Each point corresponds to a different value of $\lambda$; error bars show standard errors over five runs.}
\end{figure}

\begin{figure}[H]
\FIGURE
{\includegraphics[width=0.98\textwidth]{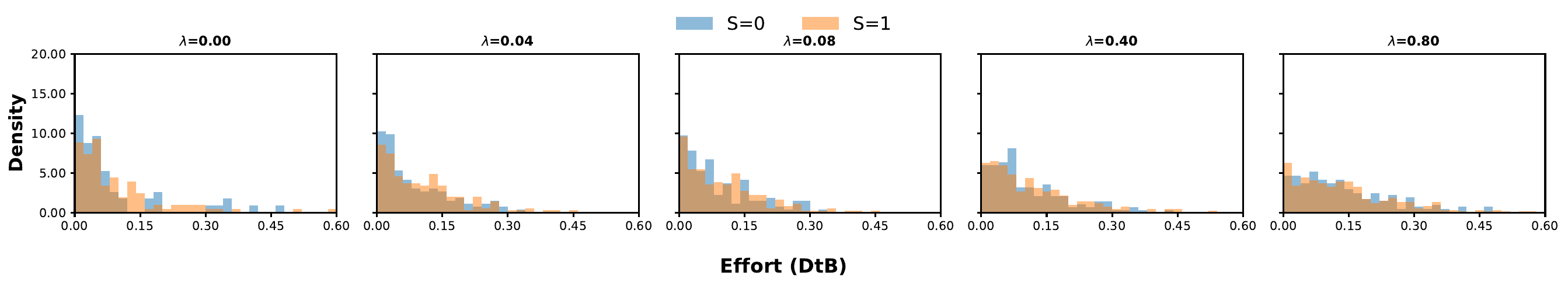}}
{Evolution of effort (distance to boundary) distributions under causal effort parity.
\label{fig:cg_group_dist_evolution}}
{Each panel displays the distribution of effort for rejected borrowers in each demographic group ($S=0$ and $S=1$) at a given regularisation strength $\lambda$.}
\end{figure}

\begin{table}[H]
\TABLE
{Distributional distance metrics between rejected female and male borrowers.\label{tab:cg_dist_metrics_lambda}}
{
\begin{tabular}{lccccc}
\hline
$\lambda$
& KS
& CVM
& TV
& JS
& HE \\
\hline
0.00 & 0.126 (0.095) & 0.314 (0.313) & 0.144 (0.086) & 0.021 (0.023) & 0.131 (0.077) \\
0.04 & 0.108 (0.058) & 0.375 (0.361) & 0.111 (0.051) & 0.011 (0.010) & 0.098 (0.046) \\
0.08 & 0.116 (0.083) & 0.471 (0.629) & 0.122 (0.071) & 0.013 (0.014) & 0.103 (0.056) \\
0.40 & 0.092 (0.043) & 0.285 (0.271) & 0.122 (0.051) & 0.013 (0.010) & 0.107 (0.042) \\
0.80 & 0.063 (0.022) & 0.172 (0.125) & 0.091 (0.037) & 0.007 (0.006) & 0.078 (0.031) \\
\hline
\end{tabular}
}
{Entries report mean (std) across five independent runs. Lower values indicate more similar distributions between groups for all metrics shown, including Kolmogorov--Smirnov (KS), Cram\'er--von Mises (CVM), Total Variation (TV), Jensen--Shannon (JS), and Hellinger (HE).}
\end{table}

We provide the detailed results in Appendix~\ref{sec:sm_effort_framework}. Our framework can also apply to alternative classifiers, and the results remain consistent with our findings. Moreover, it can be deployed at the individual level to identify applicants who face disproportionately high effort, enabling a more comprehensive diagnosis of effort disparities. We report these robustness tests in Appendix~\ref{sec:sm_svm}-\ref{sec:sm_individual_effort}.

Fig.~\ref{fig:cg_risk_profit_tradeoff} shows a risk and profitability pattern under causal effort parity that differs from the feature-independent case.
Both EL and UL increase relative to baseline.
Empirically, the gap relative to baseline is largely established at the smallest regularisation weight we consider ($\lambda = 0.04$) and does not widen further as $\lambda$ increases to 0.8.
This observed saturation is consistent with, but is not implied by, the finite-change guarantees.
Interest revenue remains close to baseline, but the higher losses reduce RAROC; it nevertheless remains positive throughout the considered range of $\lambda$.

\begin{figure}[H]
\centering
\FIGURE
{
\subcaptionbox{Expected and unexpected loss.}{
\includegraphics[width=0.32\textwidth,keepaspectratio]{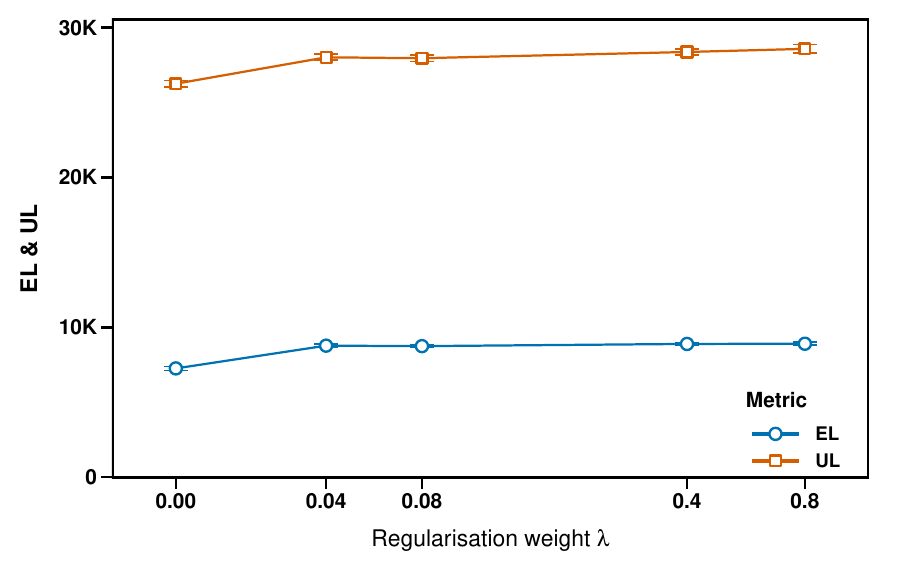}}
\subcaptionbox{Revenue.}{
\includegraphics[width=0.32\textwidth,keepaspectratio]{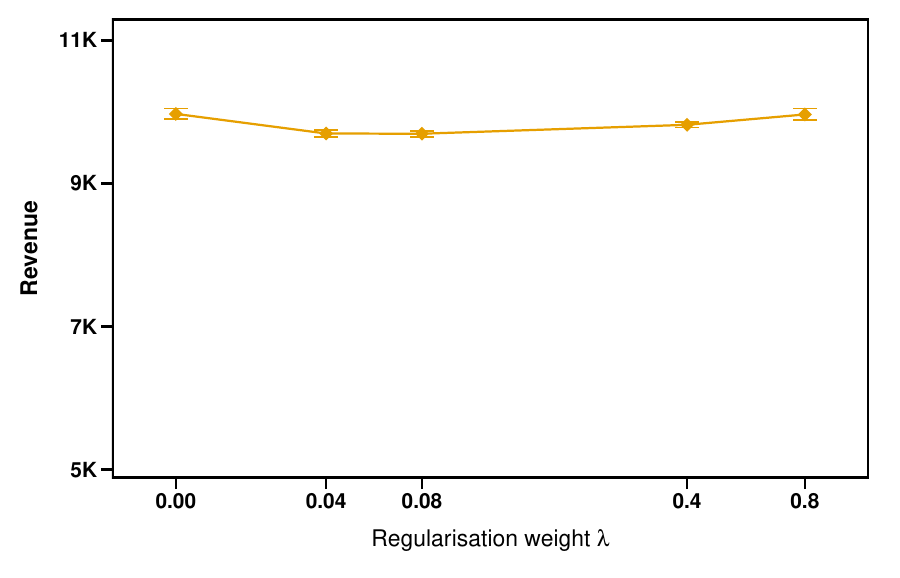}}
\subcaptionbox{Risk-adjusted return on capital.}{
\includegraphics[width=0.32\textwidth,keepaspectratio]{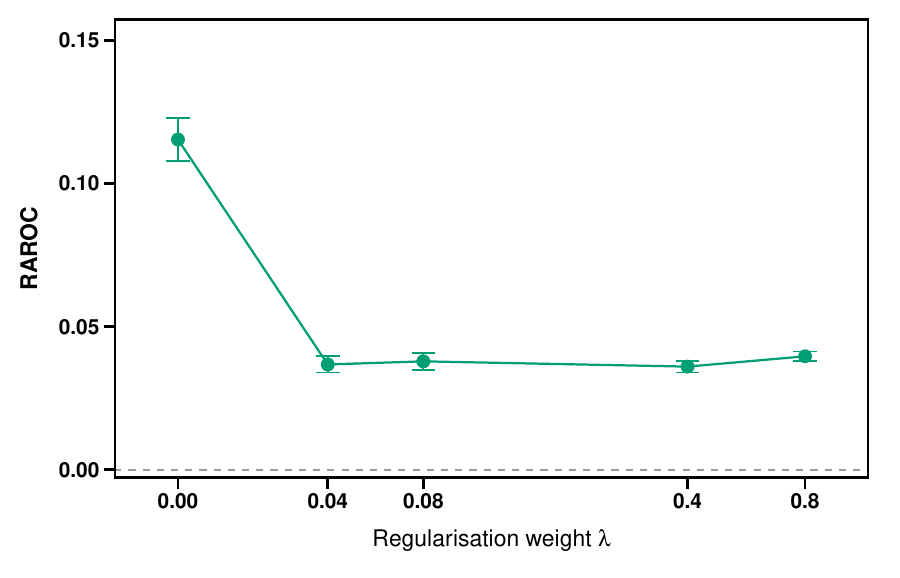}}
}
{Trade-off between credit risk, profitability, and effort disparity reduction under causal effort parity.
\label{fig:cg_risk_profit_tradeoff}}
{Each point corresponds to a different value of $\lambda$; error bars show standard errors over five runs.}
\end{figure}

\subsubsection{Effects on Predictive Parity Metrics}
\label{subsubsec:effort_pp_effects}

\paragraph{Finding 5: Effort Fairness Improves Predictive Parity as a By-Product.}
An interesting by-product of effort-centric fairness is its effect on standard predictive parity. Although our framework optimises solely for effort disparity, we observe that several predictive parity criteria improve as a consequence. Fig.~\ref{fig:pp_under_effort} illustrates this relationship under feature-independent effort parity; the causal results appear in Appendix~\ref{subsec:sm_pp_effects}. As $\lambda$ increases and effort disparities decline, EO disparity decreases. The absolute change in SP is modest, as the baseline model already starts from a relatively low SP disparity. PPV remains nearly flat, consistent with its dependence on both error rates and group-specific outcome prevalence.

\begin{figure}[H]
\FIGURE
{\includegraphics[width=0.4\textwidth]{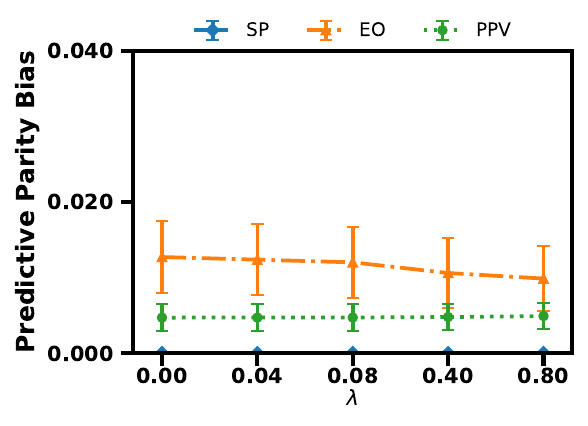}}
{Evolution of predictive parity metrics under feature-independent effort parity.
\label{fig:pp_under_effort}}
{Optimising for effort fairness often reduces predictive disparities as a by-product.}
\end{figure}

\paragraph{Answer to RQ2.}
Our effort-based regularisation method effectively addresses masked inequality. The framework substantially reduces effort disparities, achieving over 50\% reductions, with modest impact on predictive performance, profitability and risk control. The observed improvements in predictive parity under effort-based training suggest potential synergies between these fairness objectives, indicating that targeting effort-based disparities can simultaneously enhance conventional outcome-based fairness without explicitly constraining it.

\subsection{Illustrative Case Study: From ``Why Rejected'' to ``How to Improve''}
\label{subsec:case_study}
As part of our framework, each rejected applicant receives targeted feedback that enumerates actionable improvement plans for credit assessment. For every rejected individual, we compute feasible pathways under a classifier trained with causal effort parity by solving the optimisation formulation introduced in Section~\ref{subsec:beyond_xai}. We illustrate the procedure using a rejected applicant drawn from the test set; Table~\ref{tab:case_actions} reports five recommended improvement plans. The related feature change cost weights are provided in Appendix~\ref{sec:sm_weights}.

The individual is a rejected mortgage applicant with a credit score of 662, a single-unit property, an original combined loan-to-value ratio of $90\%$, an original debt-to-income ratio of $40\%$, an original unpaid balance of \$205K, an interest rate of $4.125\%$, and one borrower on the loan. The combination of a high loan-to-value ratio, an elevated debt burden, and a comparatively low credit score results in a risk score below the approval threshold. Several attributes are not mutable in the short run, including the number of units, interest rate, and loan purpose, which are fixed at origination. The recommended plans therefore combine an improved credit score with a lower debt-to-income ratio, a reduced unpaid balance through a smaller requested loan amount, higher income, and, in every plan, an additional co-borrower.

This case study complements the aggregate results in Section~\ref{subsec:our_fairness}, demonstrating how these reductions materialise at the level of specific applicants and concrete actions. In particular, it highlights that enforcing effort parity does not merely adjust summary statistics but can qualitatively alter the nature of the recommendations communicated to different groups of borrowers.
\begin{table}[H]
\TABLE
{Case study: rejected borrower and improvement plans.\label{tab:case_actions}}
{
\begin{tabular}{llccccc}
\hline
 & & Plan.\ 1 & Plan.\ 2 & Plan.\ 3 & Plan.\ 4 & Plan.\ 5 \\
\hline
Required changes & $\Delta$ Credit score $X_1$          & $+13.12$ & $+14.44$ & $+11.75$ & $+15.01$ & $+6.42$ \\
                  & $\Delta$ Number of units $X_2$       & 0       & 0       & 0       & 0       & 0       \\
                  & $\Delta$ CLTV (pp) $X_3$             & 0       & 0       & 0       & 0       & 0       \\
                  & $\Delta$ DTI (pp) $X_4$              & $-0.80$  & $-0.44$  & $-1.42$  & $-0.23$  & $-3.11$  \\
                  & $\Delta$ UPB (\$K) $X_5$             & $-1.74$ & $-1.92$ & $-1.56$ & $-2.00$ & $-0.85$ \\
                  & $\Delta$ Interest rate $X_6$         & 0       & 0       & 0       & 0       & 0       \\
                  & $\Delta$ Loan purpose $X_7$          & 0       & 0       & 0       & 0       & 0       \\
                  & $\Delta$ Number of borrowers $X_8$   & $+1$    & $+1$    & $+1$    & $+1$    & $+1$    \\
                  & $\Delta$ Income (\$K) $X_9$          & $+25.36$ & $+6.98$ & $+45.41$ & $+14.50$ & $+99.32$ \\
\hline
\multicolumn{2}{l}{FCR}            & 5/9    & 5/9    & 5/9    & 5/9    & 5/9    \\
\multicolumn{2}{l}{Effort (DtB)}   & 0.1586 & 0.1587 & 0.1588 & 0.1588 & 0.1602 \\
\hline
\end{tabular}
}
{The research object is a rejected mortgage applicant with a credit score of 662, one housing unit, CLTV 90\%, DTI 40\%, UPB \$205K, interest rate 4.125\%, one borrower, and income \$92K. Entries report the direct interventions required of the applicant; passively induced downstream changes are not shown. In particular, CLTV is not intervened upon directly (hence $\Delta$CLTV $=0$) but adjusts passively with UPB.}
\end{table}

\section{Discussion}
\label{sec:discussions}

The effort-centric perspective developed in this paper yields several implications for the design, evaluation, and governance of AI-based credit scoring.
The results support adding effort disparity to model-validation dashboards alongside predictive-parity measures.
For regulators, the measures expose masked unequal routes to future approval that predictive parity can miss. Applicant-level pathways translate that diagnosis into clearer customer communication and may inform targeted product design.

Implementation nevertheless requires both economic and legal judgement. Effort gaps should be assessed alongside risk, and $\lambda$ should be treated as a policy choice rather than a mandate for full equalisation. The theoretical risk analysis complements this governance process with a post-training certificate based on the fitted parameter segment. Institutions can compare this certificate with realised changes in expected and unexpected loss and with their risk tolerance.

In addition, the in-processing method uses protected-group membership during training even though the deployed score does not take it as an input. Since such use is operationally sensitive in regulated credit markets, institutions can first employ effort disparity for auditing and model validation, and use fairness-aware training only with appropriate legal authority and governance controls \citep{kozodoi_fairness_2022}.

We emphasise that the illustrative pathways generated by our framework are not guarantees of future approval; lending decisions depend on circumstances at the time of application, and applicant situations, model parameters, and institutional policies may all evolve. Rather, these pathways serve as transparency tools, helping applicants understand the factors most relevant to their assessment. Although we argue that effort-fair models produce more equitable pathways, we have not empirically validated whether applicants perceive these explanations as fairer. User studies or experiments would be needed. Lenders should treat pathways as informative illustrations rather than contractual commitments.

\section{Conclusions}
\label{sec:conclusions}

This paper develops an effort-centric complement to predictive fairness in credit scoring. We formalise feature-independent and causal effort parity, derive tractable training objectives and risk-change bounds, and use the same framework to generate applicant-level pathways to approval.

In the mortgage application, effort disparities persist despite predictive parity.
Feature-independent regularisation reduces the gap by about 60\% at the largest tested weight with modest predictive changes; causal regularisation yields reductions above 90\% at positive tested weights but entails larger predictive and risk-return trade-offs.
Conventional fairness metrics sometimes improve as a by-product.
Financial effects are explicitly accounted for: the theoretical analysis provides a finite post-training bound, while the empirical analysis reports realised trade-offs.
Under feature-independent effort parity, expected and unexpected losses remain broadly stable, while revenue and RAROC decline modestly.
Under causal effort parity, expected and unexpected losses increase at the smallest fairness weight but do not widen materially as the penalty increases; revenue remains close to baseline, and RAROC remains positive throughout.

The conclusions are limited by the estimated linear SCM and the static study design. Longitudinal borrower data are needed to evaluate implemented recommendations, repeated applications, feedback, and model drift. Future work should address these issues with dynamic models and longitudinal validation. Moreover, the 2018 cohort coincides with COVID‑era forbearance and reporting disruptions, so future work needs a more robust default label.

\clearpage
\appendix

\section{Differentiable Fairness Loss Formulations}
\label{sec:appendix_loss}

Using $e_{\mathrm{GC}}^i(\bmtheta)=r_{\mathrm{log}}^i$ for the empirical logistic model, and $e_{\mathrm{GC}}^i(\bmtheta)=\tilde r(\bmv^i;\bmtheta)$ for a general differentiable classifier, the causal soft disparity is
\begin{equation}
\label{eq:approx_group_causal_loss_appendix}
\tilde{\Delta}^{\text{GC}}(\bmtheta)
= \left| \frac{\sum_{i: s^i=0} \omega^i e_{\mathrm{GC}}^i(\bmtheta)}{\sum_{i: s^i=0} \omega^i} - \frac{\sum_{i: s^i=1} \omega^i e_{\mathrm{GC}}^i(\bmtheta)}{\sum_{i: s^i=1} \omega^i} \right|.
\end{equation}
which is locally Lipschitz and differentiable almost everywhere wherever the applicable denominator norm is nonzero. The soft rejection weights ensure continuous transitions as applicants cross the boundary.

\section{Discrete Feature Handling and Pathway Generation}
\label{sec:appendix_discrete}

\subsection{Problem Formulation}

Let us partition the feature space into two disjoint subsets: $\mathcal{I}_{\text{cont}}$ (indices of continuous features) and $\mathcal{I}_{\text{ord}}$ (indices of ordinal discrete features). For each discrete feature $j \in \mathcal{I}_{\text{ord}}$, we introduce a binary decision variable $a_j \in \{0, 1\}$ selecting between floor rounding ($a_j = 0$) and ceiling rounding ($a_j = 1$). The full feasible action vector $\bm{\delta}_{\text{feas}} \in \mathds{R}^d$ is defined component-wise as:
\begin{equation}
\label{eq:discrete_action_signed_appendix}
\delta_{\text{feas},j} =
\begin{cases}
\zc{p_j}, & j \in \mathcal{I}_{\text{cont}}, \\
\lfloor \delta^*_j \rfloor + a_j\left(\lceil \delta^*_j \rceil - \lfloor \delta^*_j \rfloor\right), & j \in \mathcal{I}_{\text{ord}}, \, g_j \geq 0, \\
\lceil \delta^*_j \rceil + a_j\left(\lfloor \delta^*_j \rfloor - \lceil \delta^*_j \rceil\right), & j \in \mathcal{I}_{\text{ord}}, \, g_j < 0,
\end{cases}
\end{equation}
\zc{where $\mathbf p = (p_j \in \mathds{R} : j \in \mathcal{I}_{\text{cont}})$ are continuous decision variables, $\bma = (a_j \in \{0, 1\} : j \in \mathcal{I}_{\text{ord}})$ are binary decision variables, and $g_j = [\nabla_{\bmx} \hat{h}(\bmx^i; \bmtheta)]_j$ is the $j$-th element of the gradient vector.}

The mixed-integer optimisation problem seeks:
\begin{equation}
\label{eq:mix_optimisation_appendix}
\zc{\min_{\bma, \mathbf p} \quad \|\bm{\delta}_{\text{feas}}\|_{\bmW} \quad \text{s.t. } \quad \nabla_{\bmx} \hat{h}(\bmx^i; \bmtheta)^\top (\bm{\delta}_{\text{feas}} - \bm{\delta}^*) \geq 0.}
\end{equation}

\subsection{Greedy Algorithm}

We propose a greedy algorithm that ranks features by their utility-per-effort ratio and explores adjustments to find feasible actions.

\begin{algorithm}[H]
\small
\caption{\small{Greedy Algorithm for Feasible Actions}}
\label{alg:greedy_mixed_appendix}
\begin{algorithmic}[1]
\Require Individual features $\bmx^i$, continuous optimal action $\bm{\delta}^*$, classifier $\hat{h}(\cdot; \bmtheta)$, gradient $\mathbf{g} = \nabla_{\bmx} \hat{h}(\bmx^i; \bmtheta)$, weight matrix $\bmW$, threshold $\tau$, safety margin $\varepsilon > 0$
\Ensure Set of feasible actions $\mathcal{A}$

\State \textbf{Initialisation:}
\State \zc{Set $a_j = 0$ for all $j \in \mathcal{I}_{\text{ord}}$ and $p_j = \delta^*_j$ for all $j \in \mathcal{I}_{\text{cont}}$}
\State \zc{Compute utility-per-effort: $\bm{\upsilon} = |\bmW^{-1} \mathbf{g}|$ with components $\upsilon_j$}
\State \zc{Sort all features in descending order of $\upsilon_j$: obtain ordering $(j_1, j_2, \ldots, j_{d})$}
\State Initialise candidate set: $\mathcal{A} = \emptyset$

\State \textbf{Greedy Feature Exploration:}
\For{$l = 1$ to $d$}
    \If{$j_l \in \mathcal{I}_{\text{cont}}$} \Comment{Continuous feature}
        \State \zc{Optimise $p_{j_l}$ subject to: $\mathbf{g}^\top (\bm{\delta}_{\text{feas}} - \bm{\delta}^*) \geq \varepsilon$}
        \If{feasible}
            \State \zc{Compute cost: $\mathsf c_l = \|\bm{\delta}_{\text{feas}}\|_{\bmW}$}
            \State \zc{Add to candidate set: $\mathcal{A} \leftarrow \mathcal{A} \cup \{(\bm{\delta}_{\text{feas}}, \mathsf c_l)\}$}
        \EndIf
        \State \zc{Reset: $p_{j_l} = \delta^*_{j_l}$}
    \ElsIf{$j_l \in \mathcal{I}_{\text{ord}}$} \Comment{Discrete feature}
        \State Set $a_{j_l} = 1$
        \State Construct $\bm{\delta}_{\text{feas}}$ via Eq.~\eqref{eq:discrete_action_signed_appendix}
        \If{$\mathbf{g}^\top (\bm{\delta}_{\text{feas}} - \bm{\delta}^*) \geq \varepsilon$}
            \State \zc{Compute cost: $\mathsf c_l = \|\bm{\delta}_{\text{feas}}\|_{\bmW}$}
            \State \zc{Add to candidate set: $\mathcal{A} \leftarrow \mathcal{A} \cup \{(\bm{\delta}_{\text{feas}}, \mathsf c_l)\}$}
        \EndIf
        \State Reset: $a_{j_l} = 0$
    \EndIf
\EndFor

\State \textbf{Validation:}
\State \zc{Filter candidate set: $\mathcal{A} \leftarrow \{(\bm{\delta}, \mathsf c) \in \mathcal{A} : \hat{h}(\bmx^i + \bm{\delta}; \bmtheta) > \tau \}$}

\State \textbf{Output:}
\State \zc{\Return candidate set $\mathcal{A}$ sorted by cost $\mathsf c$ in ascending order}
\end{algorithmic}
\end{algorithm}

The algorithm explores each feature independently by adjusting it from the baseline configuration whilst keeping all other features at their initial values. For continuous features, we optimise the specific component $p_{j_l}$ to satisfy the constraint. For discrete features, we test flipping $a_{j_l}$ to 1 and check if the constraint holds. All feasible configurations are collected in $\mathcal{A}$, allowing selection of the minimal-effort solution.

The validation step filters the candidate set to retain only action plans that achieve the required classifier output $\hat{h}(\bmx^i + \bm{\delta}_{\text{feas}}; \bmtheta) \geq \tau + \varepsilon$, where $\varepsilon$ provides a safety margin to account for approximation errors in the linearisation.

\subsection{Extension to Causal Effort}

For causal effort calculations, the algorithm applies directly with the gradient $\nabla_{\bmx} \hat{h}(\bmx^i; \bmtheta)$ replaced by the effective causal gradient $(\mathbb{I}_d - \bmA)^{-\top} \nabla_{\bmx} \hat{h}(\bmx^i; \bmtheta)$. All solution algorithms remain applicable with this modified gradient. The causal framework may yield different discrete actions compared to the feature-independent case, as the structural dependencies encoded in $\bmA$ alter the effective benefit of each feature change.

\section{Structural Causal Model}
\label{sec:appendix_scm}

This section collects the formal definitions of the structural causal model (SCM) and the associated intervention and counterfactual operators used in the main text. In the application, the endogenous variables specialise to $(\bmX, S)$, but we keep the presentation general here.

\subsection{General Structural Causal Model}
\label{subsec:appendix_scm}

Let $\bmV = (V_1, \ldots, V_{d+1})$ denote a vector of \emph{endogenous} variables (determined inside the modelled system) and let $\bmU = (U_1, \ldots, U_{d+1})$ denote a vector of \emph{exogenous} variables (unobserved noise terms external to the system). An SCM is defined as the pair
\begin{equation*}
\label{eq:scm_definition}
\mathcal{M} = (\bmB, \prob_{\bmU}),
\end{equation*}
where $\bmB = \{ V_i = f_i(\bmPA_i, U_i) \}_{i=1}^{d+1}$ is a collection of \emph{structural equations}. Here $\bmPA_i \subseteq \bmV \setminus \{V_i\}$ denotes the set of direct causal parents of $V_i$, and $f_i$ is a measurable function encoding how $V_i$ is generated from its parents and the exogenous disturbance $U_i$. The distribution $\prob_{\bmU}$ determines the joint law of the exogenous variables.

We assume \emph{causal sufficiency}, i.e.
\begin{equation*}
\prob_{\bmU} = \prod_{i=1}^{d+1} \prob_{U_i},
\end{equation*}
so that there are no unmodelled common causes of the endogenous variables. In the credit-scoring application, we instantiate $\bmV = (\bmX, S)$, where $\bmX$ collects the non-protected features and $S$ is the protected attribute.

The structural equations induce a directed graph $\mathcal{G}$ with vertex set $\bmV$ and an edge $V_j \to V_i$ whenever $V_j \in \bmPA_i$. We assume that $\mathcal{G}$ is a directed acyclic graph (DAG). Under acyclicity, the system of structural equations admits a unique solution for $\bmV$ given $\bmU$, and the induced observational distribution $\prob_{\bmV}$ factorises according to the causal Markov property \citep{pearl2009causality}:
\begin{equation*}
\prob_{\bmV}(\bmv)
= \prod_{i=1}^{d+1} \prob_{V_i \mid \bmPA_i}(v_i \mid \bmpa_i),
\end{equation*}
where each conditional distribution $\prob_{V_i \mid \bmPA_i}$ is implicitly defined by $f_i$ and $\prob_{U_i}$.

\subsection{Linear SCM and Propagation Matrix}
\label{subsec:appendix_scm_matrix}

For analytical tractability in the main text, we focus on a \emph{linear} SCM, while the general framework extends to non-linear specifications \citep{pearl2009causality}. A linear SCM over $\bmV \in \mathbb{R}^{d+1}$ is defined by
\begin{equation}
\label{eq:linear_scm_supp}
\bmV = \bmF \bmV + \bmU,
\end{equation}
where $\bmF \in \mathbb{R}^{(d+1)\times(d+1)}$ is the \emph{structural coefficient matrix}. The entry $F_{ij}$ quantifies the direct linear effect of $V_j$ on $V_i$, and by construction $F_{ij} = 0$ whenever $V_j \notin \bmPA_i$.

Under an appropriate ordering of the variables, acyclicity of the associated graph is equivalent to nilpotency of $\bmF$ and guarantees that $(\mathbb{I}_{d+1} - \bmF)$ is invertible. Equation~\eqref{eq:linear_scm_supp} then implies
\begin{equation*}
\bmV = (\mathbb{I}_{d+1} - \bmF)^{-1} \bmU.
\end{equation*}
We refer to $(\mathbb{I}_{d+1} - \bmF)^{-1}$
as the \emph{propagation matrix}. Its $(i,j)$-th entry captures the total (direct and indirect) causal effect on $V_i$ of a unit perturbation in $U_j$, aggregated along all directed paths in the graph.
In the main text we use the relevant block of $(\mathbb I_{d+1}-\bmF)^{-1}$ to propagate additive structural shifts applied to the equations for $\bmX$.

\subsection{Interventions}
\label{subsec:appendix_interventions}

\zc{An intervention replaces the structural equations of a subset of variables by externally imposed assignments. Let $\mathcal{I} \subseteq \{1, \ldots, d+1\}$ denote a set of indices and write $\bmV_{\mathcal{I}}$ for the associated subvector of endogenous variables. For a vector of imposed values $\bm{\chi}_{\mathcal{I}}$, the intervention $do(\bmV_{\mathcal{I}} = \bm{\chi}_{\mathcal{I}})$ is represented by the manipulated SCM}
\begin{equation*}
\zc{\mathcal{M}^{do(\bm{\chi}_{\mathcal{I}})} = \bigl( \bmB^{do(\bm{\chi}_{\mathcal{I}})}, \prob_{\bmU} \bigr)},
\end{equation*}
\zc{where, for each $i \in \mathcal{I}$, the structural equation $V_i = f_i(\bmPA_i, U_i)$ is replaced by $V_i = \chi_i$, whilst the equations for $i \notin \mathcal{I}$ remain unchanged. At the graphical level, this corresponds to deleting all incoming edges into nodes in $\mathcal{I}$.}

The interventional distribution
\begin{equation*}
\zc{\prob_{\bmV \mid do(\bm{\chi}_{\mathcal{I}})} := \prob_{\mathcal{M}^{do(\bm{\chi}_{\mathcal{I}})}}}
\end{equation*}
satisfies the \emph{truncated factorisation} \citep{pearl2009causality}:
\begin{equation}
\label{eq:truncated_factorisation_supp}
\zc{\prob_{\bmV_{-\mathcal{I}} \mid do(\bm{\chi}_{\mathcal{I}})}(\bmv_{-\mathcal{I}})}
= \prod_{j \notin \mathcal{I}} \prob_{V_j \mid \bmPA_j}(v_j \mid \bmpa_j),
\end{equation}
\zc{where $\bmV_{\mathcal{I}}$ is fixed at $\bm{\chi}_{\mathcal{I}}$ and $\bmV_{-\mathcal{I}}$ denotes the remaining variables. This is a \emph{hard} value-setting intervention. Its incoming edges are removed, so simultaneous hard interventions do not in general propagate through the original matrix $(\mathbb I-\bmF)^{-1}$.}

The causal-effort model in the main text instead uses a soft \emph{additive shift intervention}. For a sparse vector $\bmxi$ supported on actionable coordinates, it changes the linear SCM to $\bmV^{\bmxi}=\bmF\bmV^{\bmxi}+\bmU+\bmxi$ while preserving all incoming edges. This distinction is essential: the original propagation matrix applies to these additive shifts, whereas a hard intervention requires a propagation operator constructed after cutting the intervened rows.

\subsection{Counterfactuals Twin Constructions}
\label{subsec:appendix_counterfactuals}

Counterfactuals describe hypothetical values of $\bmV$ under interventions, conditional on a factual realisation. Let $\bmv^{\texto} = (\bmx^{\texto}, s^{\texto})$ denote the observed profile of an individual. Pearl’s abduction–action–prediction procedure constructs counterfactuals in three steps:
\begin{enumerate}
    \item \textbf{Abduction:} infer a (possibly set-valued) realisation $\bmu$ of the exogenous variables that is compatible with the factual observation $\bmV = \bmv^{\texto}$ under $\mathcal{M}$;
    \item \textbf{Action:} form the manipulated model $\mathcal{M}^{do(\bm{\chi}_{\mathcal{I}})}$ corresponding to the desired intervention $do(\bmV_{\mathcal{I}} = \bm{\chi}_{\mathcal{I}})$;
    \item \textbf{Prediction:} solve the structural equations of $\mathcal{M}^{do(\bm{\chi}_{\mathcal{I}})}$ using $\bmu$ to obtain the counterfactual vector $\bmV_{\bm{\chi}_{\mathcal{I}}}(\bmu)$.
\end{enumerate}
We write this compactly as
\begin{equation*}
\bmV_{\bm{\chi}_{\mathcal{I}}}(\bmu)
\equiv \bmV \mid do(\bmV_{\mathcal{I}} = \bm{\chi}_{\mathcal{I}}),\; \bmV = \bmv^{\texto}.
\end{equation*}

For the additive action used by causal effort, let $\mathbb J_{\mathcal I}\bmxi_{\mathcal I}$ insert the direct shifts into the full vector. After abduction, the action step adds this vector to the selected structural equations without deleting their incoming edges. We denote the resulting action-based counterfactual profile by
\begin{equation*}
\bmv_{\bmxi}^{\texto}
\equiv \bmV_{\operatorname{shift}(\mathbb J_{\mathcal I}\bmxi_{\mathcal I})}(\bmu).
\end{equation*}
This profile captures the direct shifts and all downstream changes. Hard $do$-counterfactuals remain available for value-setting questions, such as the $S$-twins considered later, but they are not the intervention semantics used in the causal-effort closed form.

\clearpage

\section{Proofs of Main Results}
\label{sec:appendix_proofs}

\subsection{Total Feature Change Under a Linear Additive Shift}
\label{subsec:sm_total_fi_scm}

Before proving the main results, we establish how an additive action propagates through the causal structure.

Following the discussion on linear SCM, we partition $\bmV$ into non-protected features $\bmX$ and the protected attribute $S$. The structural equations for $\bmX$ can be written as
\begin{equation}
\label{eq:linear_SCM_partition_appendix}
    \zc{\bmX = \bmF_{\bmX\bmX} \bmX + \bmF_{\bmX S} S + \bmU_{\bmX}},
\end{equation}
where $\bmF_{\bmX\bmX}$ encodes the dependencies among the features in $\bmX$, $\bmF_{\bmX S}$ captures the direct influence of the protected attribute $S$ on $\bmX$, and \zc{$\bmU_{\bmX}$} denotes the corresponding exogenous noise terms.

For causal effort, additive shifts target the non-protected features $\bmX$ while $S$ is held fixed. Hard interventions on $S$ are considered separately only when constructing $S$-twins. With $S$ fixed, Eq.~\eqref{eq:linear_SCM_partition_appendix} can be rewritten as
\begin{equation}
\label{eq:causal_equilibrium_appendix}
    \bmX
    = \big(\mathbb{I}_d - \bmF_{\bmX\bmX}\big)^{-1}
      \big( \bmF_{\bmX S} S + \bmU_{\bmX} \big)
    = \big(\mathbb{I}_d - \bmA\big)^{-1} \mathbf{B},
\end{equation}
where we define $\bmA := \bmF_{\bmX\bmX} \in \mathbb{R}^{d \times d}$ as the structural matrix and $\mathbf{B} := \bmF_{\bmX S} S + \bmU_{\bmX}$ for shorthand.

This representation allows us to derive the total feature change produced by an additive structural shift.

\begin{lemma}[Total Feature Change Under a Linear Additive Shift]
\label{lem:causal_change_appendix}
Let the factual linear SCM be $\bmX=\bmA\bmX+\mathbf B$. For a set $\bmI\subseteq\{1,\ldots,d\}$, let $\bmxi_{\bmI}\in\mathbb R^{|\bmI|}$ be additive shifts to the selected structural equations, and let $\mathbb J_{\bmI}$ be the corresponding selection matrix. The shifted system is $\bmX^{\bmxi}=\bmA\bmX^{\bmxi}+\mathbf B+\mathbb J_{\bmI}\bmxi_{\bmI}$. Its total change relative to the factual equilibrium is
\begin{equation}
\label{eq:total_feature_to_linear_causal_intervention_appendix}
\bm\delta
=\bmX^{\bmxi}-\bmX
=(\mathbb I_d-\bmA)^{-1}\mathbb J_{\bmI}\bmxi_{\bmI}.
\end{equation}
Equivalently, if $\bmxi_{\bmX}:=\mathbb J_{\bmI}\bmxi_{\bmI}\in\mathbb R^d$ is the sparse full action vector, then $\bm\delta=(\mathbb I_d-\bmA)^{-1}\bmxi_{\bmX}$.
\end{lemma}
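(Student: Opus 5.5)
The plan is to treat the factual and shifted structural systems as two linear equations that share the same coefficient matrix $\bmA$ and the same exogenous term $\mathbf B$. The only difference between them is the additive vector $\mathbb J_{\bmI}\bmxi_{\bmI}$, so subtracting one from the other isolates the change.

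The first step is to establish that both systems have unique solutions. Under the acyclicity assumption of Appendix~\ref{subsec:appendix_scm_matrix}, $\bmA=\bmF_{\bmX\bmX}$ is the principal submatrix of the nilpotent matrix $\bmF$ indexed by $\bmX$, taken under a topological ordering. It is therefore strictly lower triangular in that ordering, hence nilpotent. Consequently $\mathbb I_d-\bmA$ is invertible, with $(\mathbb I_d-\bmA)^{-1}=\sum_{k=0}^{d-1}\bmA^k$. This gives
\[
\bmX=(\mathbb I_d-\bmA)^{-1}\mathbf B
\qquad\text{and}\qquad
\bmX^{\bmxi}=(\mathbb I_d-\bmA)^{-1}\bigl(\mathbf B+\mathbb J_{\bmI}\bmxi_{\bmI}\bigr),
\]
which matches Eq.~\eqref{eq:causal_equilibrium_appendix}. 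Here $S$ is held fixed, so $\mathbf B=\bmF_{\bmX S}S+\bmU_{\bmX}$ is identical in both systems. The same holds for $\bmU_{\bmX}$ after the abduction step of Appendix~\ref{subsec:appendix_counterfactuals}.

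The second step is to subtract the two equations:
\[
(\mathbb I_d-\bmA)(\bmX^{\bmxi}-\bmX)=\mathbb J_{\bmI}\bmxi_{\bmI}.
\]
Multiplying by the inverse then yields Eq.~\eqref{eq:total_feature_to_linear_causal_intervention_appendix}. The equivalent form follows immediately by setting $\bmxi_{\bmX}:=\mathbb J_{\bmI}\bmxi_{\bmI}$. This is the sparse vector of length $d$ with the entries of $\bmxi_{\bmI}$ placed at positions $\bmI$ and zeros elsewhere.

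For interpretation, a final remark will note the Neumann series form:
\[
\bm\delta=\sum_{k\ge0}\bmA^k\bmxi_{\bmX}.
\]
This shows that $\bm\delta$ collects the direct shift together with its propagation along all directed paths of each length $k$. This is the total-effect reading of $\mathbf P$ used in Section~\ref{subsec:setup}.

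The only step needing care is invertibility. Specifically, one must check that the submatrix $\bmF_{\bmX\bmX}$ inherits nilpotency from $\bmF$. This holds because the subgraph induced on $\bmX$ is still a DAG. One must also stress why the original $\mathbf P$ applies at all: the shift is soft, and no rows of $\bmA$ are deleted. This contrasts with the hard intervention discussed in Appendix~\ref{subsec:appendix_interventions}, whose propagation operator would differ.
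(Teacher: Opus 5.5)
Your proposal is correct and matches the paper's proof: subtract the factual system from the shifted one to obtain $\bm\delta=\bmA\bm\delta+\mathbb J_{\bmI}\bmxi_{\bmI}$, then invert $\mathbb I_d-\bmA$, which acyclicity makes invertible. Your extra points, that $\bmA=\bmF_{\bmX\bmX}$ inherits nilpotency from $\bmF$ and the Neumann-series total-effect reading, only spell out what the paper asserts in one line.
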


\begin{proof}
Subtracting the factual equation $\bmX=\bmA\bmX+\mathbf B$ from the shifted equation gives $\bm\delta=\bmA\bm\delta+\mathbb J_{\bmI}\bmxi_{\bmI}$. Since acyclicity makes $\mathbb I_d-\bmA$ invertible, multiplying by its inverse proves Eq.~\eqref{eq:total_feature_to_linear_causal_intervention_appendix}. No equality between the propagation matrices of an original and an edge-cut graph is invoked.
\end{proof}

\subsection{Proof of Lemma~\ref{lem:boundary_crossing} (Optimality of Boundary Crossing)}

\begin{proof}
\zc{Let $\bm\delta^*$ be an optimal feasible action for $\bmx^{\texto}$. If $\hat h(\bmx^{\texto}+\bm\delta^*;\bmtheta)>\tau$, continuity of $t\mapsto\hat h(\bmx^{\texto}+t\bm\delta^*;\bmtheta)$, together with $\hat h(\bmx^{\texto};\bmtheta)<\tau$, gives some $t\in(0,1)$ for which $\hat h(\bmx^{\texto}+t\bm\delta^*;\bmtheta)=\tau$. Assumption~\ref{ass:boundary_concentration} makes $t\bm\delta^*$ feasible. Positive homogeneity of the weighted norm gives $\|t\bm\delta^*\|_{\bmW}=t\|\bm\delta^*\|_{\bmW}<\|\bm\delta^*\|_{\bmW}$, contradicting optimality. Hence every optimum reaches the boundary exactly.}
\end{proof}

\subsection{\texorpdfstring{\zc{Proof of Theorem~\ref{thm:fi_approx} (Feature-Independent Local Effort)}}{Proof of Theorem 1 (Feature-Independent Local Effort)}}

\begin{proof}
\zc{Write $q^i=\tau-\hat h(\bmx^i;\bmtheta)>0$ and $\bmg^i=\nabla_{\bmx}\hat h(\bmx^i;\bmtheta)$. For any action satisfying the linearised approval constraint, weighted Cauchy--Schwarz gives}
\begin{equation*}
\zc{q^i\le {\bmg^i}^{\top}\bm\delta
\le \|\bmW^{-1/2}\bmg^i\|_2\|\bm\delta\|_{\bmW}.}
\end{equation*}
\zc{Equality is attained uniquely by Eq.~\eqref{eq:fi_direction}, which proves Eq.~\eqref{eq:fi_effort_approx}. Assumption~\ref{ass:boundary_concentration} states when this unconstrained local solution is admissible for approximating the original constrained problem; otherwise, the omitted constraints must remain in the linearised optimisation.}
\end{proof}

\subsection{Proof of Theorem~\ref{thm:causal_approx} (Causal Local Effort)}

\begin{proof}
\zc{Lemma~\ref{lem:causal_change_appendix} gives $\bm\delta=\mathbf P\bmxi_{\bmX}$. Hence the linearised approval constraint is ${\bmg^i}^\top\mathbf P\bmxi_{\bmX}\ge q^i$. Applying the preceding weighted Cauchy--Schwarz argument to $\mathbf P^\top\bmg^i$ proves Eqs.~\eqref{eq:causal_effort_approx}--\eqref{eq:causal_direction}. The conclusion concerns additive structural shifts; it does not apply to simultaneous hard interventions that cut incoming edges.}
\end{proof}

\subsection{Derivation of the Exact Logistic Expressions}

\zc{For logistic regression, strict monotonicity of $\sigma$ makes $\hat h(\bmx^i+\bm\delta;\bmtheta)\ge\tau$ equivalent to $\bmw^\top\bm\delta\ge\tau_{\mathrm{logit}}-z^i$. The feature-independent expression in Eq.~\eqref{eq:exact_logistic_efforts} and its direction therefore follow from the same weighted Cauchy--Schwarz argument applied to $\bmw$. Under an additive causal shift, the constraint becomes $\bmw^\top\mathbf P\bmxi_{\bmX}\ge\tau_{\mathrm{logit}}-z^i$, so applying it to $\mathbf P^\top\bmw$ gives the causal expression and direction. These derivations are exact whenever the displayed unconstrained actions satisfy the feasible set.}

\subsection{Proof of Corollary~\ref{cor:causal_amplification} (Causal Amplification)}

\begin{proof}
\zc{Dividing the right-hand side of Eq.~\eqref{eq:causal_effort_approx} by that of Eq.~\eqref{eq:fi_effort_approx} gives}
\begin{equation}
\frac{\tilde{r}(\bmv^i;\bmtheta)}{\tilde{c}(\bmx^i;\bmtheta)}
=
\frac{\|\bmW^{-1/2}\bmg^i\|_2}
{\|\bmW^{-1/2}\mathbf{P}^{\top}\bmg^i\|_2}
=\zc{\frac{1}{\gamma^i}}.
\end{equation}
\zc{Since $\bmW\succ0$, $\mathbf{P}$ is invertible, and $\bmg^i\neq\mathbf{0}$, both norms are positive. The three directional cases therefore follow directly from the value of $\gamma^i$. Moreover,}
\begin{align}
&\|\bmW^{-1/2}\mathbf{P}^{\top}\bmg^i\|_2^2
-\|\bmW^{-1/2}\bmg^i\|_2^2 \nonumber\\
&\qquad=
{\bmg^i}^{\top}
\left(\mathbf{P}\bmW^{-1}\mathbf{P}^{\top}-\bmW^{-1}\right)
\bmg^i,
\end{align}
\zc{which establishes the equivalent condition in the corollary. For the exact logistic expressions, $c_{\mathrm{log}}^i$ and $r_{\mathrm{log}}^i$ have the same numerator $[\tau_{\mathrm{logit}}-z^i]_+$, while the ratio of their denominators is $\gamma(\bmw)$. Hence $r_{\mathrm{log}}^i=c_{\mathrm{log}}^i/\gamma(\bmw)$ for every applicant. Taking group means and then their absolute difference establishes Eq.~\eqref{eq:logistic_effort_scaling}.}
\end{proof}

\subsection{Proof of Lemma~\ref{lem:displacement} (Parameter Displacement)}

\begin{proof}
Since $\mathcal{L}_{\mathrm{fair}}$ is locally Lipschitz, Clarke's
necessary condition at the local minimiser $\bmtheta_\lambda$ gives a
subgradient
\zc{$\bmzeta_\lambda\in\partial_C\mathcal{L}_{\mathrm{fair}}(\bmtheta_\lambda)$}
such that
\begin{equation}
\label{eq:kkt_lambda}
\nabla\mathcal{L}_{\mathrm{acc}}(\bmtheta_\lambda)
+\zc{\lambda\bmzeta_\lambda}
=\mathbf 0.
\end{equation}
\zc{At the fairness-unregularised minimiser,}
\begin{equation}
\label{eq:kkt_0}
\nabla\mathcal{L}_{\mathrm{acc}}(\bmtheta_0)=\mathbf 0.
\end{equation}
Subtracting Eq.~\eqref{eq:kkt_0} from Eq.~\eqref{eq:kkt_lambda} and taking
the inner product with $\bmtheta_\lambda-\bmtheta_0$ yields
\begin{align*}
&\left\langle
\nabla\mathcal{L}_{\mathrm{acc}}(\bmtheta_\lambda)
-\nabla\mathcal{L}_{\mathrm{acc}}(\bmtheta_0),
\bmtheta_\lambda-\bmtheta_0
\right\rangle \\
&\qquad =
-\lambda\left\langle
\zc{\bmzeta_\lambda},
\bmtheta_\lambda-\bmtheta_0
\right\rangle.
\end{align*}
Strong convexity bounds the left-hand side below by
$\mu\|\bmtheta_\lambda-\bmtheta_0\|_2^2$, while Cauchy--Schwarz bounds the
absolute value of the right-hand side above by
\zc{$\lambda\|\bmzeta_\lambda\|_2\|\bmtheta_\lambda-\bmtheta_0\|_2$.} Cancelling
the displacement norm when it is nonzero proves
\begin{equation}
\label{eq:displacement_supp}
\|\bmtheta_\lambda-\bmtheta_0\|_2
\leq
\zc{\frac{\lambda}{\mu}\|\bmzeta_\lambda\|_2}.
\end{equation}
The result is immediate when $\bmtheta_\lambda=\bmtheta_0$.
\end{proof}

\subsection{Proof of Lemma~\ref{lem:kfair} (Fairness Gradient Bound)}

\begin{proof}
\zc{Write $\tilde{\bmx}^{\,i}=(\bmx^i,1)$, $z^i=\bmtheta^\top\tilde{\bmx}^{\,i}$, $u^i=[\tau_{\mathrm{logit}}-z^i]_+$, and $d_{\bmW}(\bmw)=\|\bmW^{-1/2}\bmw\|_2$. The exact feature-independent logistic effort is $e^i=u^i/d_{\bmW}(\bmw)$. We establish the bounds at $\bmtheta_\lambda$.}

\smallskip
\noindent \textit{Step 1: effort is bounded.}
\zc{The parameter bound implies $|z^i|\le C_0\tilde R=M$, hence $0\le u^i\le u_{\max}$.} Moreover,
\begin{equation*}
d_{\min}=\sigma_{\min}(\bmW^{-1/2})c_0
\le \zc{d_{\bmW}(\bmw_\lambda)}
\le\sigma_{\max}(\bmW^{-1/2})C_0.
\end{equation*}
\zc{Consequently, $0\le e^i(\bmtheta_\lambda)\le u_{\max}/d_{\min}$.}

\smallskip
\noindent \textit{Step 2: individual-effort subgradients are bounded.}
\zc{Every $\bm{\nu}^i\in\partial_Cu^i(\bmtheta_\lambda)$ satisfies $\|\bm{\nu}^i\|_2\le\|\tilde{\bmx}^{\,i}\|_2$, while $d_{\bmW}(\bmw_\lambda)\le\sigma_{\max}(\bmW^{-1/2})C_0$ and $\|\nabla_{\bmw}d_{\bmW}(\bmw_\lambda)\|_2\le\bar G$. The Clarke quotient rule therefore gives, for every $\bmzeta^i\in\partial_Ce^i(\bmtheta_\lambda)$,}
\begin{equation*}
\zc{\|\bmzeta^i\|_2}
\le\frac{\bar B^i+u_{\max}\bar G}{d_{\min}^2}.
\end{equation*}

\smallskip
\noindent \textit{Step 3: group-effort subgradients are bounded.}
\zc{Let $E_s=\sum_{i:s^i=s}\omega^ie^i$, $Z_s=\sum_{i:s^i=s}\omega^i$, and $\tilde F_s=E_s/Z_s$. The soft rejection weight satisfies $\|\nabla_{\bmtheta}\omega^i\|_2\le(\kappa/16)\|\tilde{\bmx}^{\,i}\|_2$. Every rejected applicant has $\omega^i>1/2$, so $Z_s>n_s^-/2$, and $0\le\tilde F_s\le u_{\max}/d_{\min}$. Applying the Clarke product, sum, and quotient rules with the bounds above yields, for every $\bmzeta_s\in\partial_C\tilde F_s(\bmtheta_\lambda)$,}
\begin{equation*}
\zc{\|\bmzeta_s\|_2}
\le\frac{2}{n_s^-}\sum_{i:s^i=s}
\left[
\frac{\kappa u_{\max}}{8d_{\min}}\|\tilde{\bmx}^{\,i}\|_2
+\frac{\bar B^i+u_{\max}\bar G}{d_{\min}^2}
\right]
=\Psi_s.
\end{equation*}

\smallskip
\noindent \textit{Step 4: the fairness-loss subgradient is bounded.}
\zc{The Clarke chain and sum rules for $\mathcal L_{\mathrm{fair}}=|\tilde F_0-\tilde F_1|$ give $\|\bmzeta\|_2\le\Psi_0+\Psi_1=K_{\mathrm{fair}}$ for every $\bmzeta\in\partial_C\mathcal L_{\mathrm{fair}}(\bmtheta_\lambda)$. For exact causal logistic effort, replace $\bmW^{-1/2}$ by $\bmT=\bmW^{-1/2}(\mathbb I_d-\bmA)^{-\top}$ throughout. Invertibility of $\bmT$ follows from $\bmW\succ0$ and acyclicity, and the same argument gives the causal constants stated in Lemma~\ref{lem:kfair}.}
\end{proof}

\subsection{Proof of Theorem~\ref{thm:risk_bounds} (Risk Bounds under Fairness Regularisation)}

\begin{proof}
For a locally Lipschitz function $R$, Lebourg's mean value theorem and the
definition of $K_R$ give
\begin{equation}
\label{eq:risk-lipschitz-supp}
|R(\bmtheta_\lambda)-R(\bmtheta_0)|
\leq
K_R\|\bmtheta_\lambda-\bmtheta_0\|_2.
\end{equation}
The parameter segment is compact. For the logistic model, every predicted
probability is strictly between zero and one along this segment. Thus the
summands in $\EL$ are smooth there. The components inside the Euclidean norm
defining $\UL$ are also smooth there, so $\UL$ is locally Lipschitz even at
a point where that vector might vanish. Consequently, the constants in the
theorem are finite under its stated neighbourhood assumption.

Lemma~\ref{lem:displacement} supplies a Clarke subgradient
\zc{$\bmzeta_\lambda\in\partial_C\mathcal L_{\mathrm{fair}}(\bmtheta_\lambda)$}
such that
\begin{equation*}
\|\bmtheta_\lambda-\bmtheta_0\|_2
\leq
\zc{\frac{\lambda}{\mu}\|\bmzeta_\lambda\|_2}.
\end{equation*}
Lemma~\ref{lem:kfair} gives
\zc{$\|\bmzeta_\lambda\|_2\leq K_{\mathrm{fair}}$.} Substituting these two bounds
into Eq.~\eqref{eq:risk-lipschitz-supp}, first with $R=\EL$ and then with
$R=\UL$, proves Eqs.~\eqref{eq:bound_el} and~\eqref{eq:bound_ul}.
\end{proof}

\section{Dataset Detail}
\label{sec:appendix_data}

\subsection{Data Source and Description}
\label{subsec:sm_data_description}

We construct our dataset by linking two complementary data sources: the
\emph{Home Mortgage Disclosure Act} (HMDA) dataset and the \emph{Freddie
Mac Single Family Loan-Level} dataset.

\paragraph{HMDA Dataset.}
HMDA, mandated under the Home Mortgage Disclosure Act of 1975 and administered by the Consumer Financial Protection Bureau, requires financial institutions to publicly disclose loan-level information on
mortgage applications. It is the only publicly available dataset containing loan-level applicant demographics, including gender, race, and ethnicity, for both approved and denied mortgage applications in the United States. HMDA also records loan characteristics such as loan amount, loan purpose,
occupancy type, and geographic identifiers at the census tract level. However, HMDA lacks detailed credit risk variables and contains no post-origination
performance information, making it insufficient on its own for default prediction modelling.

\paragraph{Freddie Mac Single Family Loan-Level Dataset.}
The Freddie Mac Single Family Loan-Level dataset is a publicly available dataset covering fixed-rate mortgages purchased by Freddie Mac. It provides
rich origination-level characteristics, including interest rate, loan-to-value ratio, debt-to-income ratio, credit score, loan purpose, and occupancy type, together with monthly performance records that enable the construction of default outcomes at the loan level. Crucially, however, Freddie Mac does not report borrower demographics such as gender or race, precluding fairness analysis without an external demographic source.

\paragraph{Dataset Linkage.}
Linking HMDA and Freddie Mac allows us to combine applicant demographic information with credit risk characteristics and post-origination performance
data, yielding a dataset suitable for studying fairness in mortgage default prediction. We construct the matched dataset following \citet{saadi2020role,kielty2023simplifying}, who establish a deterministic record linkage procedure for merging HMDA and Freddie Mac data in the absence of a common
identifier. The matching is performed on five loan-level variables that appear in both datasets: the three-digit ZIP code prefix (\texttt{zip3}), occupancy type, loan purpose, unpaid balance, and interest rate. The geographic identifier is coarsened to the three-digit ZIP code prefix to align with the level of geographic granularity available in HMDA,
following \citet{saadi2020role}. To mitigate erroneous matches, we restrict the sample to 30-year fixed-rate mortgages and retain only \emph{uniquely matched} records, that is, observations for which the combination of matching keys appears exactly once in each dataset independently, ensuring a one-to-one correspondence between HMDA and Freddie Mac records. The final matched sample comprises $16{,}250$ loan-level observations for the 2018 origination cohort. Restricting the fairness analysis to applicants with reported binary gender information yields a final analytic sample of $9{,}195$ observations, consisting of female and male borrowers.

\paragraph{Label Definition.}
The outcome variable is mortgage default, defined as the borrower becoming $90{+}$ days past due at any point during the observation window. For our effort-centric fairness analysis, we use the complement of this label:
\begin{equation}
Y = \begin{cases}
1 & \text{if no default (approval)}, \\
0 & \text{if default (rejection)}.
\end{cases}
\end{equation}
This monotone relabelling preserves the learning problem and aligns with standard underwriting practice based on risk, where approval is granted when the predicted delinquency risk is sufficiently low.

\paragraph{Protected Attribute.}
We define gender as the protected attribute $S$, sourced from HMDA applicant records:
\begin{equation}
S = \begin{cases}
1 & \text{if male (privileged group)}, \\
0 & \text{if female (unprivileged group)}.
\end{cases}
\end{equation}
Gender disparities in mortgage lending have received sustained regulatory and academic attention. Under the Equal Credit Opportunity Act and the Fair Housing Act, gender is an explicitly protected characteristic in credit decisions. Empirical evidence documents that borrowers of different genders face different treatment in mortgage markets. The disparities motivate examining whether credit scoring models impose disproportionate effort burdens on some applicants seeking mortgage approval.

\subsection{Feature Descriptions}
\label{subsec:sm_feature}

Table~\ref{tab:sm_feature_details} provides detailed descriptions of all features used in our analysis, including their economic interpretation and source dataset.

\begin{table}[H]
\renewcommand{\arraystretch}{1.0}
\footnotesize
\centering
\caption{Detailed feature descriptions for the matched HMDA--Freddie Mac
dataset.}
\label{tab:sm_feature_details}
\begin{tabular}{>{\raggedright\arraybackslash}p{1.2cm}
>{\raggedright\arraybackslash}p{3.6cm}
>{\raggedright\arraybackslash}p{5.0cm}
>{\raggedright\arraybackslash}p{2.0cm}
>{\raggedright\arraybackslash}p{1.5cm}}
\toprule
\textbf{Symbol} & \textbf{Feature Name} & \textbf{Description} &
\textbf{Source} & \textbf{Range}\\
\midrule
$S$ & Gender & Binary indicator: 1 if male, 0 if female & HMDA & $\{0,1\}$\\
$X_1$ & Credit Score & Borrower's credit score at origination & Freddie Mac
& $[300, 850]$\\
$X_2$ & Number of Units & Number of units in the mortgaged property &
Freddie Mac & $\mathbb{N}$\\
$X_3$ & CLTV & Combined loan-to-value ratio at origination
& Freddie Mac & $[0, \infty)$\\
$X_4$ & DTI & Debt-to-income ratio at origination &
Freddie Mac & $[0, \infty)$\\
$X_5$ & Unpaid Balance & Unpaid principal balance of the
loan at origination (USD) & Freddie Mac & $[0, \infty)$\\
$X_6$ & Interest Rate & Interest rate on the mortgage loan &
Freddie Mac & $[0, \infty)$\\
$X_7$ & Loan Purpose & Purpose of the mortgage loan (purchase, no-cash-out refinance, or cash-out refinance, refinance without specified reason) & Freddie Mac & nominal\\
$X_8$ & Number of Borrowers & Number of borrowers obligated on the mortgage note & Freddie Mac & $\mathbb{N}$\\
$X_9$ & Income & Annual gross income of the applicant (USD) & HMDA &
$[0, \infty)$\\
\bottomrule
\end{tabular}
\end{table}

\subsection{Mutable vs Immutable Features}
\label{subsec:sm_mutable}

We distinguish features based on whether they can plausibly be changed through applicant action:

\paragraph{Immutable Features.}
The following features are treated as immutable in our framework:
\begin{itemize}
    \item \textbf{Protected attribute} ($S$): Gender is a legally protected characteristic and cannot be the basis for required change.
    \item \textbf{Number of units} ($X_2$): The number of units in the mortgaged property is a fixed physical characteristic of the collateral
    and cannot be changed by the applicant.
    \item \textbf{Interest rate} ($X_6$): The contract interest rate is determined by the lender at origination and is not subject to applicant
    modification.
    \item \textbf{Loan purpose} ($X_7$): The purpose of the loan reflects the applicant's transaction intent and is fixed at the time of application.
\end{itemize}
In our cost function, immutable features receive weights $W_{jj} = 10^8$, effectively preventing any modification.
\paragraph{Mutable Features.}
The following features are treated as mutable:
\begin{itemize}
    \item \textbf{Credit score} ($X_1$): Credit score can be improved through concrete, verifiable actions (e.g., timely repayment of existing obligations), and is therefore treated as mutable.
    \item \textbf{CLTV} ($X_3$): The combined loan-to-value ratio can be reduced by increasing the down payment or paying down existing
    liens. Constrained to $[0, \infty)$.
    \item \textbf{DTI} ($X_4$): The debt-to-income ratio can be improved by reducing outstanding debt obligations or increasing income.
    Constrained to $[0, \infty)$.
    \item \textbf{Unpaid balance} ($X_5$): The requested loan amount can be reduced by increasing the down payment or applying for a smaller loan. Constrained to $[0, \infty)$.
    \item \textbf{Number of borrowers} ($X_8$): An applicant may add a
    co-borrower to strengthen the application. Integer-valued and constrained
    to $\mathbb{N}$.
    \item \textbf{Income} ($X_9$): Annual gross income can be increased
    through career advancement, additional employment, or other income
    sources. Constrained to $[0, \infty)$.
\end{itemize}

\subsection{Preprocessing}
\label{subsec:appendix_preprocessing}
\label{subsec:sm_proprocessing}

\paragraph{Outlier Treatment.}
Extreme values in continuous features (beyond the 99th percentile) are winsorised to reduce the influence of outliers on model training and effort calculations.

\paragraph{Normalisation.}
For effort calculations involving the weighted norm $\|\cdot\|_{\bmW}$, continuous features are standardised to have zero mean and unit variance (computed on the training set). This ensures that effort is measured on a comparable scale across features. The normalisation parameters are stored and applied consistently to the test set.

\paragraph{Group Composition.}
The sample comprises:
\begin{itemize}
    \item $3{,}794$ observations with $S = 0$ (female applicants), of which $262$ defaults within next two years ($Y = 0$) and $3{,}532$ have no defaults ($Y = 1$).
    \item $5{,}401$ observations with $S = 1$ (male applicants), of which $369$ defaults within next two years ($Y = 0$) and $5{,}032$ have no defaults ($Y = 1$).
\end{itemize}
The baseline rejection rate is thus $6.91\%$ for female applicants and $6.83\%$ for male applicants.

\paragraph{Training and Test Split.}
The data are split into training (80\%, $n = 7{,}356$) and testing (20\%, $n = 1{,}839$) sets using stratified sampling to preserve the joint distribution of labels $Y$ and protected attributes $S$ across splits.

\clearpage
\section{Causal Structure Learning}
\label{sec:sm_causal}

This section provides full details on the causal discovery procedure used to estimate the SCM.

\subsection{Learning Procedure}
\label{subsec:sm_learning_procedure}

We learn the causal graph $\mathcal{G}$ over the feature space $\bmV = (\bmX, S)$ using a two-stage hybrid approach that combines constraint-based and score-based methods.

\paragraph{Stage 1: Skeleton Discovery.}
We first learn the undirected skeleton of the causal graph using the PC-stable algorithm \citep{colombo2014order, kalisch2007estimating}. This constraint-based method identifies conditional independencies through partial correlation tests. We set the significance level to $\alpha = 0.01$ to balance sensitivity and specificity. The PC-stable variant ensures order-independent results, providing robustness against the arbitrary ordering of variables.

\paragraph{Stage 2: Edge Orientation.}
Given the learned skeleton, we orient edges using a forward and backward greedy search scored by the Bayesian Information Criterion (BIC) \citep{schwarz1978estimating, chickering2002optimal}. Local models are fitted using generalised linear models (GLM) \citep{nelder1972generalized} appropriate to each variable's type (Gaussian for continuous, Poisson for count variables).

\paragraph{Domain Knowledge Priors.}
To enhance interpretability and ensure economic plausibility, we enforce hard priors based on domain knowledge:
\begin{itemize}
    \item Unaware unfairness, Immutability constraints: The protected attribute $S$ (gender) may causally affect all other features \citep{chiappa_path-specific_2019, dwork_fairness_2012} but cannot be caused by them.
    \item Definitional relations: Income ($X_9$) affect DTI ($X_4$) and UPB ($X_5$) affects CLTV ($X_3$).
\end{itemize}

\paragraph{Bootstrap Aggregation.}
To assess stability, we generate $B = 100$ bootstrap replicates and retain edges appearing in at least 60\% of replicates. This aggregation procedure reduces sensitivity to sampling variability.

\subsection{SCM Parameter Estimation}
\label{subsec:sm_scm_parameter}

Given the learned graph $\mathcal{G}$, we estimate the linear SCM $\mathcal{M} = (\bmB, \prob_{\bmU})$ by fitting structural equations for each endogenous variable. For each feature $X_j$ with parents $\bmpa_j$ in the graph, we estimate:
\begin{equation}
X_j = \sum_{k \in \bmpa_j} A_{jk} X_k + U_j,
\end{equation}
using Bayesian Ridge Regression to provide regularisation and uncertainty quantification. The residuals are assumed to follow independent Gaussian distributions $U_j \sim \mathcal{N}(0, \sigma_j^2)$, with variances estimated from the data.
The resulting adjacency matrix $\bmA \in \mathbb{R}^{d \times d}$ will be used in its derived propagation matrix $(\mathbb{I}_d - \bmA)^{-1}$.

\clearpage

\section{Effort Disparities Under Predictive Parity: Full Results}
\label{sec:sm_pp_effort}

This section provides comprehensive statistical evidence that effort disparities persist across all predictive parity criteria examined in the main text.

\subsection{Predictive Parity Definitions}
\label{subsec:sm_pp_definitions}

For completeness, we provide formal definitions of the predictive parity criteria used as benchmarks in our empirical analysis:

\begin{itemize}
    \item \textbf{Statistical Parity} (SP) \citep{kamishima2012fairness} requires equal approval rates: $\prob(\hat{Y} = 1 \mid S = 0) = \prob(\hat{Y} = 1 \mid S = 1)$.

    \item \textbf{Equalised Odds} (EO) \citep{hardt_equality_2016} requires equal true positive and false positive rates across groups, conditional on the true outcome $Y$.

    \item \textbf{Positive Predictive Value Parity} (PPV) \citep{chouldechova2017fair} requires equal precision amongst those predicted positive.

\end{itemize}

\subsection{Experimental Protocol}
\label{subsec:sm_experiment_protocol}

For each predictive parity criterion (SP, EO, PPV), we train classifiers with different fairness regularisation weights $\lambda$. Each configuration is run with five independent random runs. We select the best balance between predictive performance (Accuracy, AUC, F1) and fairness (the disparity metric for each criterion), then analyse effort disparities for the selected model.

Since these classifiers are logistic, effort is measured using the exact logit-scale weighted distance in Eq.~\eqref{eq:exact_logistic_efforts}, with $\bmW=[1,10^8,10^8,1,1,10^8,10^8,1,1]$. CLTV ($X_3$) receives weight $10^8$ not because it is immutable in principle, but to exclude CLTV and UPB ($X_5$) as separate direct levers for the same underlying financing adjustment. In the causal specification, a direct shift to UPB may still propagate to CLTV through the SCM; in the feature-independent benchmark, CLTV is held fixed as a direct-action coordinate. The matrix $\bmW$ can be adapted to reflect heterogeneous modification difficulty. The expression is the exact minimum cost for the unconstrained continuous action problem; when an actionability constraint binds, it is an unconstrained benchmark and the constrained problem must be solved explicitly.

\subsection{Illustration of Effort Distributions}
\label{subsec:sm_effort_visualisation}
Table~\ref{tab:dist_metrics_pp_family} summarises distance metrics between the effort distributions of rejected female and male borrowers under three predictive parity criteria. The consistency across both distribution-based metrics (KS, CVM) and histogram-based metrics (TV, JS, HE) indicates that predictive parity does not eliminate the disparities of effort required to overturn a loan rejection.

\begin{table}[H]
\centering
\footnotesize
\caption{Distance metrics across predictive parity criteria.}
\label{tab:dist_metrics_pp_family}
\resizebox{\textwidth}{!}{
\begin{tabular}{lccccc}
\hline
Criterion
& KS
& CVM
& TV
& JS
& HE\\
\hline
SP  & 0.125 (0.093) & 0.299 (0.291) & 0.205 (0.110) & 0.044 (0.047) & 0.194 (0.123) \\
EO  & 0.097 (0.059) & 0.257 (0.350) & 0.149 (0.059) & 0.018 (0.012) & 0.130 (0.045) \\
PPV & 0.130 (0.100) & 0.313 (0.330) & 0.215 (0.113) & 0.048 (0.046) & 0.206 (0.117) \\
\hline
\end{tabular}
}
\parbox{\textwidth}{\footnotesize \textit{Notes:} Entries report mean (std) across five independent runs at $\lambda=0.8$. Lower values indicate more similar distributions between groups for all metrics shown, including Kolmogorov--Smirnov (KS), Cram\'er--von Mises (CVM), Total Variation (TV), Jensen--Shannon (JS), and Hellinger (HE).}
\end{table}

\clearpage

\section{Effort-Centric Fairness Framework: Full Results}
\label{sec:sm_effort_framework}

This section provides comprehensive results for our effort-centric fairness framework, including detailed performance metrics, statistical tests, financial profit, and analysis of predictive parity effects.

\subsection{Experimental Protocol}
\label{subsec:sm_experiment_protocol_2}

We train classifiers with different fairness regularisation weights $\lambda$. Each configuration is run with five independent runs. We evaluate:
\begin{itemize}
    \item \textbf{Feature-independent effort parity}: Penalises group-level differences in mean effort, assuming features can be modified independently.
    \item \textbf{Causal effort parity}: Penalises group-level differences in mean effort, accounting for causal propagation through the structural model.
\end{itemize}

\subsection{Feature-Independent Effort Parity}

\paragraph{Performance Metrics Across $\lambda$.}
Table~\ref{tab:sm_fi_group_performance} shows a gradual trade-off under
feature-independent effort parity. The effort gap falls by 59.98\% at the
largest $\lambda$, while AUC remains stable, the other predictive and revenue
measures decline modestly, and expected and unexpected loss decrease slightly.

\begin{table}[H]
\centering
\footnotesize
\caption{Performance metrics under feature-independent effort parity across $\lambda$ values. Mean (std) over five runs. Gap reduction is computed relative to $\lambda = 0$.}
\label{tab:sm_fi_group_performance}
\resizebox{\textwidth}{!}{
\begin{tabular}{lccccccccc}
\toprule
$\lambda$ & AUC & Accuracy & F1 & EL & UL & Revenue & RAROC & Gap Reduction (\%) \\
\midrule
0.00 & 0.724 (0.011) & 0.719 (0.098) & 0.823 (0.074) & 7251.48 (302.92) & 26258.26 (497.44) & 9972.07 (169.85) & 0.115 (0.017) & 0.00 \\
0.04 & 0.724 (0.010) & 0.714 (0.105) & 0.819 (0.079) & 7186.47 (260.49) & 26008.51 (381.10) & 9865.81 (210.54) & 0.115 (0.018) & 10.46 \\
0.08 & 0.724 (0.010) & 0.710 (0.111) & 0.815 (0.084) & 7143.39 (245.79) & 25805.41 (376.06) & 9768.19 (249.15) & 0.114 (0.018) & 18.01 \\
0.40 & 0.722 (0.007) & 0.702 (0.126) & 0.807 (0.098) & 7067.96 (196.80) & 25290.07 (387.50) & 9470.89 (337.28) & 0.108 (0.019) & 44.60 \\
0.80 & 0.722 (0.007) & 0.706 (0.131) & 0.809 (0.100) & 7130.55 (187.72) & 25302.13 (438.03) & 9401.26 (370.57) & 0.103 (0.019) & 59.98 \\
\bottomrule
\end{tabular}
}
\end{table}

\paragraph{Full Distribution Evolution.}
Fig.~\ref{fig:sm_fi_group_dist_full} displays the effort distributions across all $\lambda$ values and five independent runs. At $\lambda = 0.8$, the distributions for female and male borrowers move noticeably closer.

\begin{figure}[H]
\centering
{
\begin{tabular}{c}
{
\includegraphics[width=0.95\textwidth]{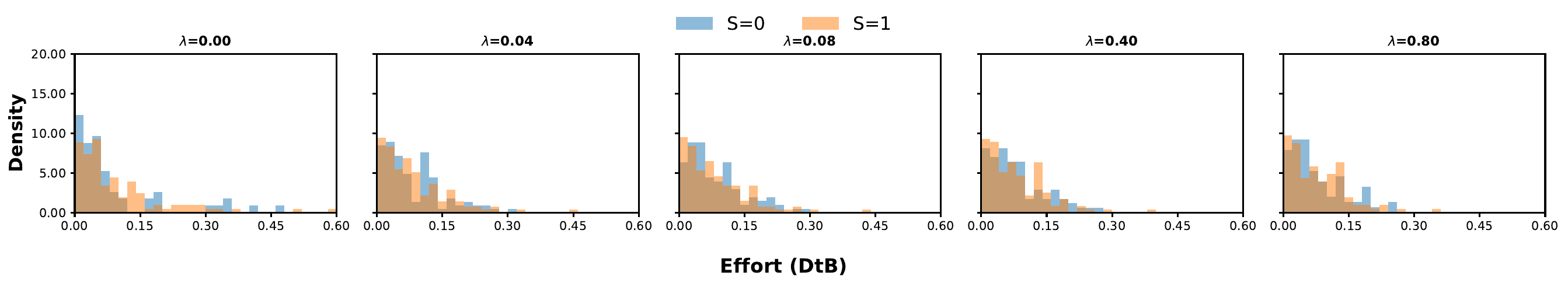}}\\[0.6ex]
{
\includegraphics[width=0.95\textwidth]{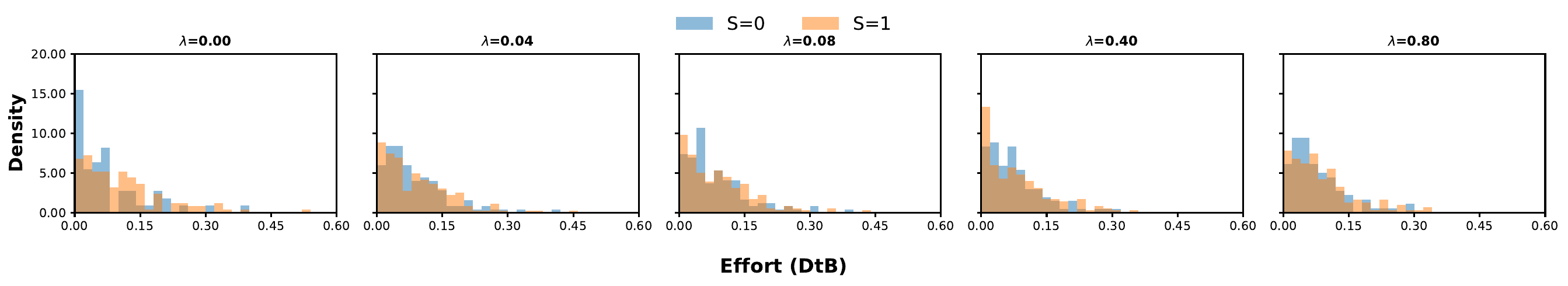}}\\[0.6ex]
{
\includegraphics[width=0.95\textwidth]{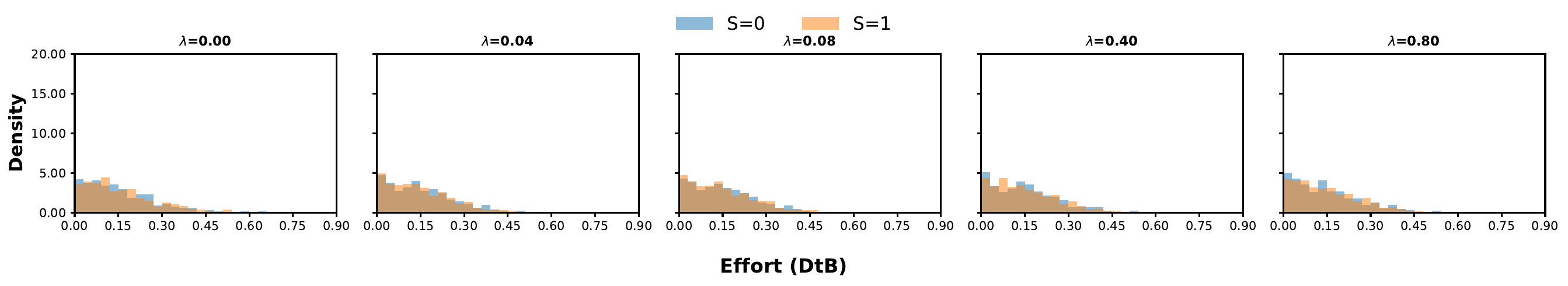}}\\[0.6ex]
{
\includegraphics[width=0.95\textwidth]{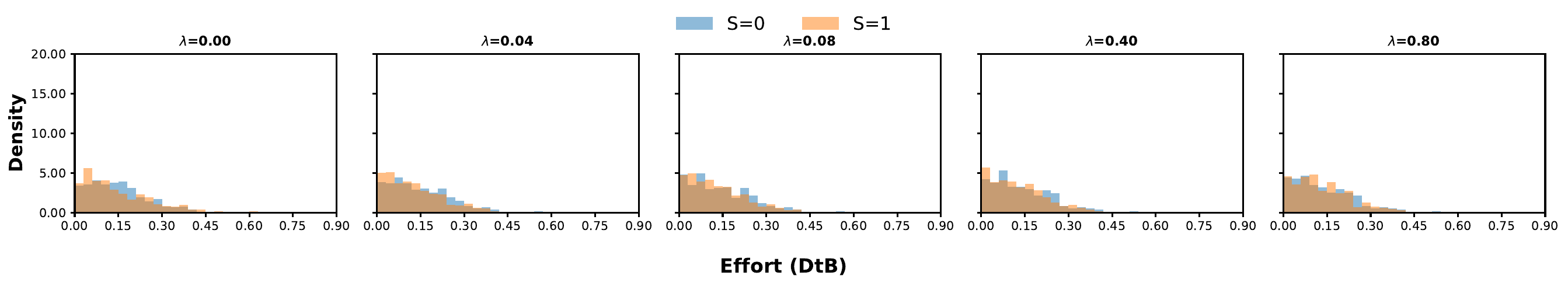}}\\[0.6ex]
{
\includegraphics[width=0.95\textwidth]{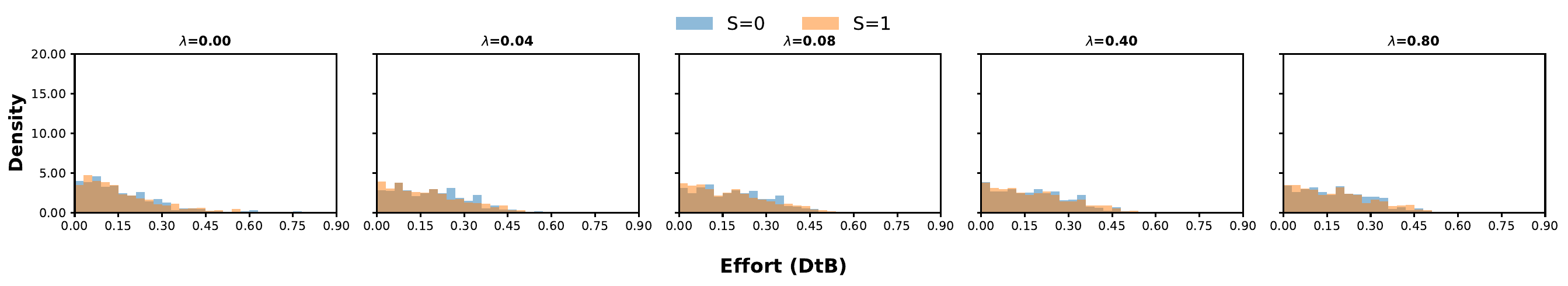}}
\end{tabular}
}
\caption{Evolution of effort distributions under feature-independent effort parity. Each row corresponds to an independent run.}
\label{fig:sm_fi_group_dist_full}
\end{figure}

\subsection{Causal Effort Parity}

\paragraph{Performance Metrics Across $\lambda$.}

Table~\ref{tab:sm_causal_group_performance} reports full performance metrics for causal effort parity.

\begin{table}[H]
\centering
\footnotesize
\caption{Performance metrics under causal effort parity across $\lambda$ values. Mean (std) over five runs. Gap reduction is computed relative to $\lambda = 0$.}
\label{tab:sm_causal_group_performance}
\resizebox{\textwidth}{!}{
\begin{tabular}{lccccccccc}
\toprule
$\lambda$ & AUC & Accuracy & F1 & EL & UL & Revenue & RAROC & Gap Reduction (\%) \\
\midrule
0.00 & 0.724 (0.011) & 0.719 (0.098) & 0.823 (0.074) & 7251.48 (302.92) & 26258.26 (497.44) & 9972.07 (169.85) & 0.115 (0.017) & 0.00 \\
0.04 & 0.669 (0.043) & 0.701 (0.084) & 0.813 (0.062) & 8770.43 (214.27) & 28021.17 (452.74) & 9697.82 (99.30) & 0.037 (0.006) & 91.21 \\
0.08 & 0.665 (0.039) & 0.705 (0.095) & 0.816 (0.069) & 8739.40 (219.38) & 27961.48 (479.60) & 9694.95 (97.00) & 0.038 (0.007) & 96.48 \\
0.40 & 0.665 (0.029) & 0.665 (0.095) & 0.784 (0.075) & 8885.24 (163.61) & 28376.81 (432.19) & 9819.56 (87.37) & 0.036 (0.005) & 98.65 \\
0.80 & 0.650 (0.026) & 0.630 (0.064) & 0.759 (0.053) & 8899.27 (220.62) & 28586.39 (639.48) & 9964.26 (184.29) & 0.040 (0.004) & 96.26 \\
\bottomrule
\end{tabular}
}
\end{table}

Causal effort parity responds more strongly at low regularisation: at
$\lambda=0.04$, the effort gap falls by approximately 90\%, compared with
about 10\% under feature-independent effort parity.

\paragraph{Full Distribution Evolution.}
Fig.~\ref{fig:sm_cg_group_dist_full} shows how the causal-effort
distributions evolve with $\lambda$.

\begin{figure}[H]
\centering
{
\begin{tabular}{c}
{
\includegraphics[width=0.95\textwidth]{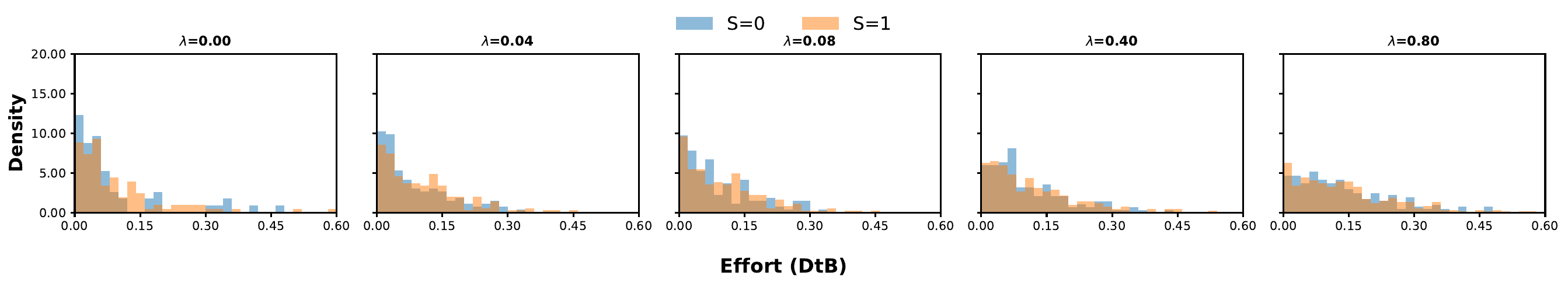}}\\[0.6ex]
{
\includegraphics[width=0.95\textwidth]{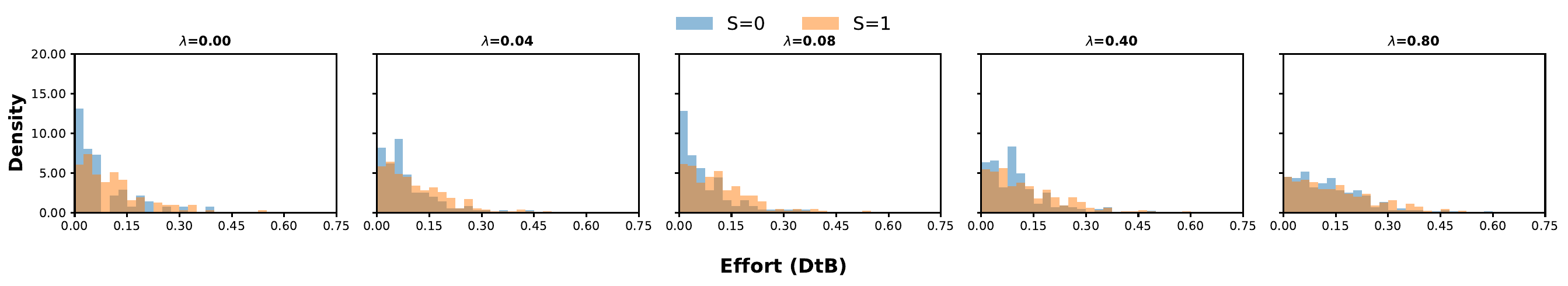}}\\[0.6ex]
{
\includegraphics[width=0.95\textwidth]{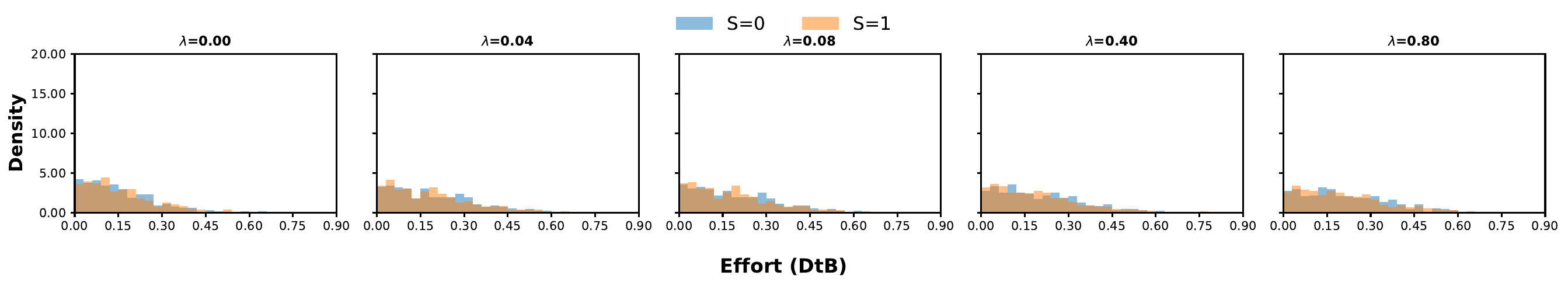}}\\[0.6ex]
{
\includegraphics[width=0.95\textwidth]{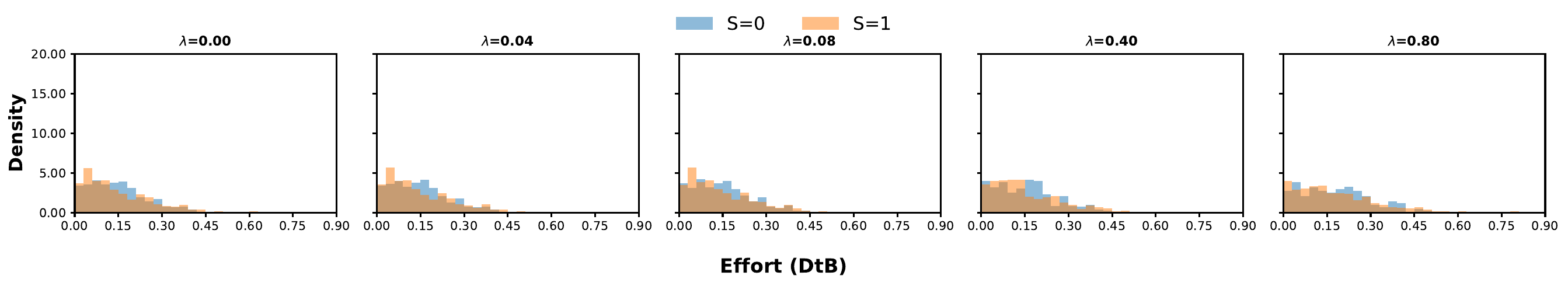}}\\[0.6ex]
{
\includegraphics[width=0.95\textwidth]{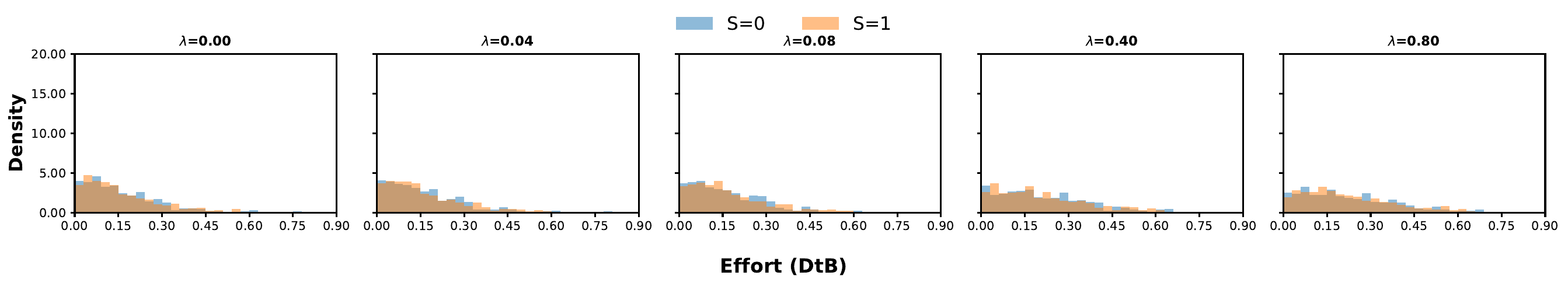}}
\end{tabular}
}
\caption{Evolution of effort distributions under causal effort parity. Each row corresponds to an independent run.}
\label{fig:sm_cg_group_dist_full}
\end{figure}

\subsection{Effects on Predictive Parity Metrics}
\label{subsec:sm_pp_effects}

Tables~\ref{tab:sm_fi_pp} and~\ref{tab:sm_cg_pp} indicate that EO disparity falls under feature‑independent effort parity, SP disparity falls under causal effort parity, and PPV disparity remains largely unchanged. It therefore improves several predictive fairness measures.

\begin{table}[H]
\centering
\footnotesize
\caption{Predictive parity metrics under feature-independent effort parity. Entries report mean (std) over five runs.}
\label{tab:sm_fi_pp}
\begin{tabular}{lccc}
\toprule
$\lambda$ & SP & EO & PPV \\
\midrule
0.00 & 0.0000173 (0.0000202) & 0.01273 (0.00472) & 0.00471 (0.00179) \\
0.04 & 0.0000174 (0.0000203) & 0.01238 (0.00472) & 0.00471 (0.00178) \\
0.08 & 0.0000180 (0.0000216) & 0.01202 (0.00469) & 0.00472 (0.00178) \\
0.40 & 0.0000236 (0.0000245) & 0.01061 (0.00458) & 0.00479 (0.00176) \\
0.80 & 0.0000245 (0.0000272) & 0.00988 (0.00430) & 0.00491 (0.00171) \\
\bottomrule
\end{tabular}
\end{table}

\begin{table}[H]
\centering
\footnotesize
\caption{Predictive parity metrics under causal effort parity. Entries report mean (std) over five runs.}
\label{tab:sm_cg_pp}
\begin{tabular}{lccc}
\toprule
$\lambda$ & SP & EO & PPV \\
\midrule
0.00 & 0.0000173 (0.0000202) & 0.01273 (0.00472) & 0.00471 (0.00179) \\
0.04 & 0.0000019 (0.0000014) & 0.00210 (0.00074) & 0.00572 (0.00141) \\
0.08 & 0.0000022 (0.0000016) & 0.00206 (0.00067) & 0.00572 (0.00141) \\
0.40 & 0.0000014 (0.0000013) & 0.00180 (0.00096) & 0.00567 (0.00141) \\
0.80 & 0.0000015 (0.0000015) & 0.00226 (0.00146) & 0.00560 (0.00138) \\
\bottomrule
\end{tabular}
\end{table}

\section{Cost Weight Specifications}
\label{sec:sm_weights}

The weight matrix $\bmW$ encodes the relative difficulty of changing each feature. We consider five configurations representing different applicant circumstances and preferences:

\begin{table}[H]
\centering
\footnotesize
\caption{Cost weight configurations. Higher values indicate greater difficulty of change. Immutable features receive weight $10^8$ (effectively infinite).}
\label{tab:sm_cost_weights}
\begin{tabular}{lccccc}
\toprule
Feature & Config.\ 1 & Config.\ 2 & Config.\ 3 & Config.\ 4 & Config.\ 5 \\
\midrule
$X_1$ (Credit score) & 0.2 & 0.1 & 0.1 & 0.4 & 0.2 \\
$X_2$ (Number of units) & $10^8$ & $10^8$ & $10^8$ & $10^8$ & $10^8$ \\
$X_3$ (CLTV) & $10^8$ & $10^8$ & $10^8$ & $10^8$ & $10^8$ \\
$X_4$ (DTI) & 0.2 & 0.2 & 0.4 & 0.05 & 0.1 \\
$X_5$ (UPB) & 0.2 & 0.1 & 0.1 & 0.4 & 0.2 \\
$X_6$ (Interest rate) & $10^8$ & $10^8$ & $10^8$ & $10^8$ & $10^8$ \\
$X_7$ (Loan purpose) & $10^8$ & $10^8$ & $10^8$ & $10^8$ & $10^8$ \\
$X_8$ (Number of borrowers) & 0.2 & 0.2 & 0.2 & 0.15 & 0.4 \\
$X_9$ (Income) & 0.2 & 0.4 & 0.2 & 0.05 & 0.1 \\
\bottomrule
\end{tabular}
\end{table}
\begin{itemize}
\item Configuration 1 represents a balanced view;
\item Configuration 2 penalises income changes;
\item Configuration 3 penalises DTI changes;
\item Configuration 4 penalises credit score and UPB changes;
\item Configuration 5 penalises changes to the number of borrowers.
\end{itemize}
\zc{The framework can be adapted to alternative cost environments; these configurations demonstrate that flexibility, while the treatment of CLTV and UPB follows Section~\ref{subsec:sm_experiment_protocol}.}

\clearpage

\section{Robustness Tests--SVM}
\label{sec:sm_svm}
In this section, we replace the classifier to assess robustness. The findings remain unchanged.

\subsection{Effort Parity Performance}

\subsubsection{Feature-independent Effort Parity}

Figs.~\ref{fig:sm_svm_fi_tradeoff} and~\ref{fig:sm_svm_fi_risk_profit_tradeoff} show that, despite mild non-monotonicity at low $\lambda$, feature-independent effort disparity achieves approximately 60--70\% at higher weights, while predictive performance, risk, revenue, and RAROC remain broadly stable.

\begin{figure}[H]
\centering
\subcaptionbox{Fairness improvement.}{
\includegraphics[height=5.0cm,keepaspectratio]{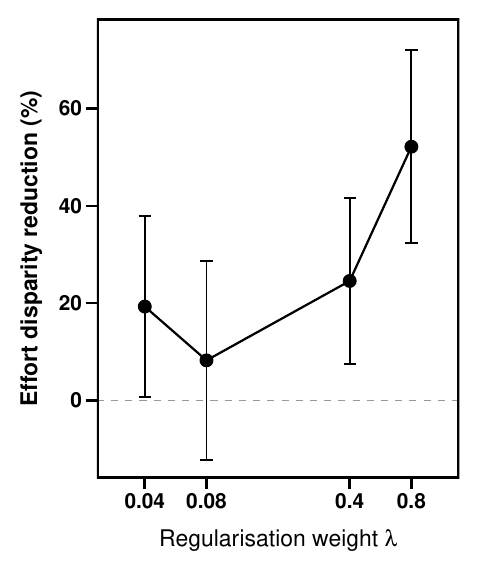}}
\subcaptionbox{Model performance stability.}{
\includegraphics[height=5.0cm,keepaspectratio]{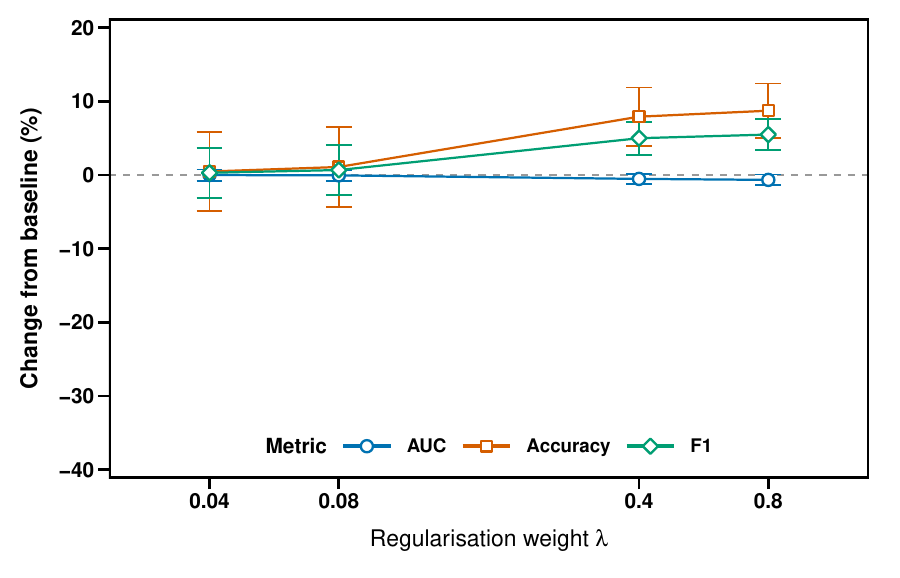}}
\caption{Relationship between predictive performance and effort disparity reduction under feature-independent effort parity. Effort disparity reduction is measured as a percentage relative to the baseline model. Each point corresponds to a different value of $\lambda$; error bars show standard errors over five runs.
\label{fig:sm_svm_fi_tradeoff}}
\end{figure}

\begin{figure}[H]
\centering
\subcaptionbox{Expected and unexpected loss.}{
\includegraphics[width=0.33\textwidth,keepaspectratio]{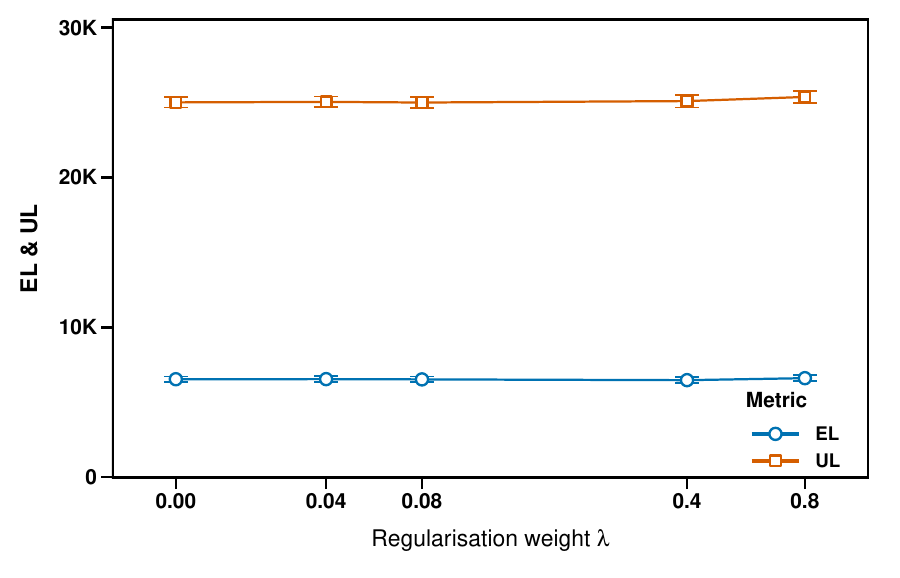}}
\subcaptionbox{Revenue.}{
\includegraphics[width=0.33\textwidth,keepaspectratio]{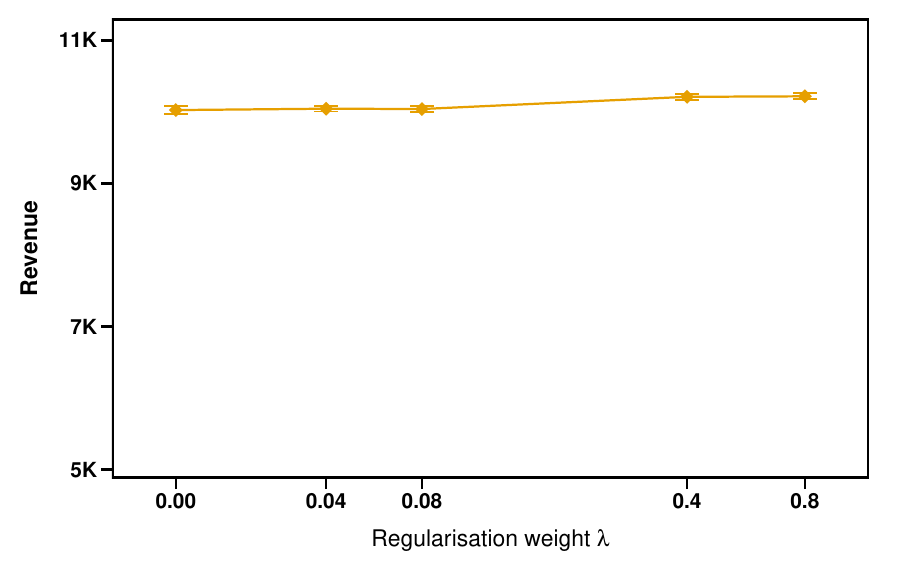}}
\subcaptionbox{Risk-adjusted return on capital.}{
\includegraphics[width=0.33\textwidth,keepaspectratio]{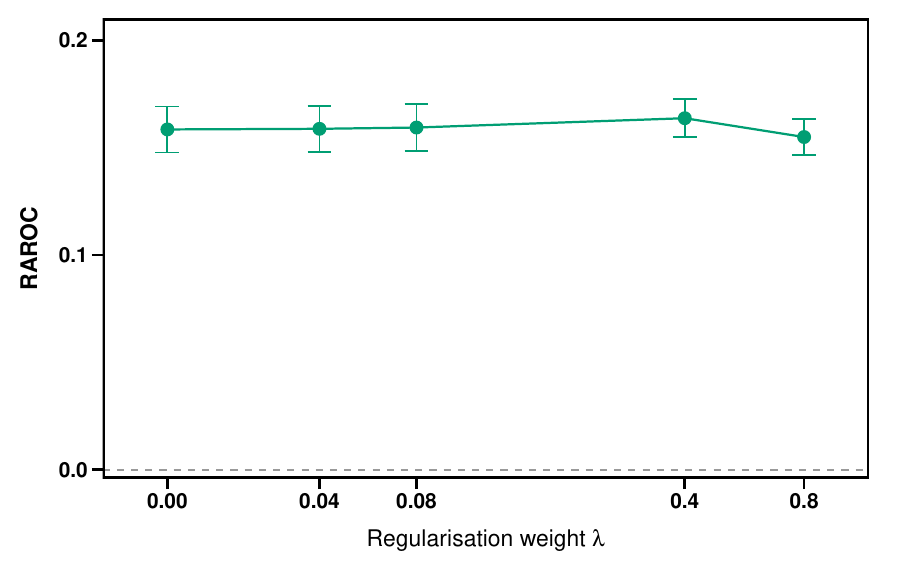}}
\caption{Relationship between credit risk, profitability, and effort disparity reduction under feature-independent effort parity. Each point corresponds to a different value of $\lambda$; error bars show standard errors over five runs.
\label{fig:sm_svm_fi_risk_profit_tradeoff}}
\end{figure}

\subsubsection{Causal Effort Parity}

The SVM results for causal effort parity show the same robustness pattern:
the disparity reduction rises approximately monotonically to 60--70\%, while
predictive performance and the reported financial measures remain broadly
stable across $\lambda$ (Figs.~\ref{fig:sm_svm_cg_tradeoff}
and~\ref{fig:cg_risk_profit_tradeoff_svm}).

\begin{figure}[H]
\centering
\subcaptionbox{Fairness improvement.}{
\includegraphics[height=5.0cm,keepaspectratio]{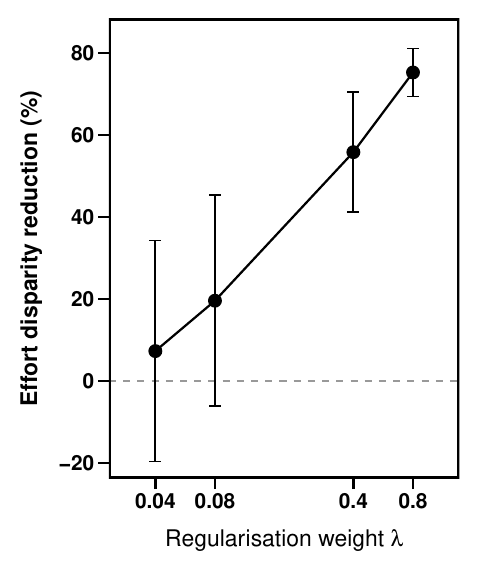}}
\subcaptionbox{Model performance stability.}{
\includegraphics[height=5.0cm,keepaspectratio]{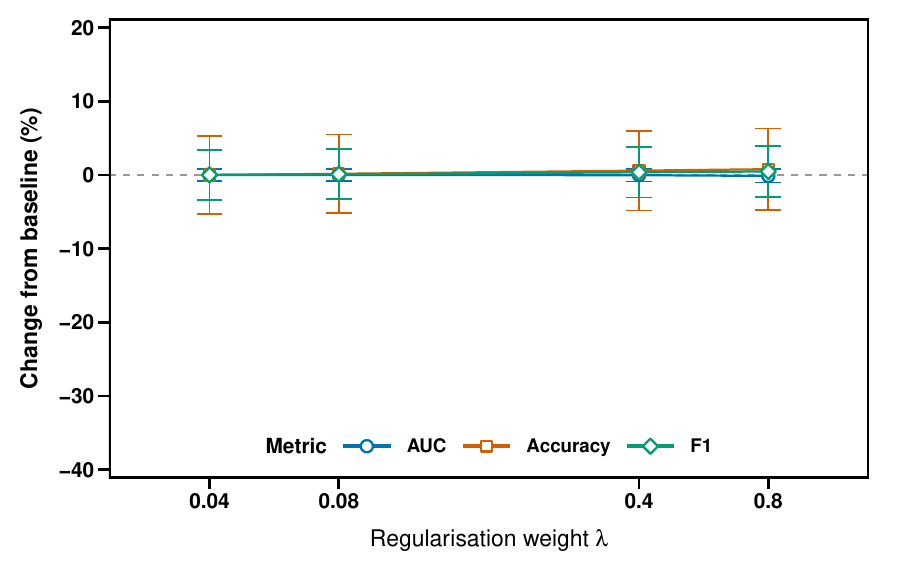}}
\caption{Relationship between predictive performance and effort disparity reduction under causal effort parity. Effort disparity reduction is measured as a percentage relative to the baseline model. Each point corresponds to a different value of $\lambda$; error bars show standard errors over five runs.
\label{fig:sm_svm_cg_tradeoff}}
\end{figure}

\begin{figure}[H]
\centering
\subcaptionbox{Expected and unexpected loss.}{
\includegraphics[width=0.33\textwidth,keepaspectratio]{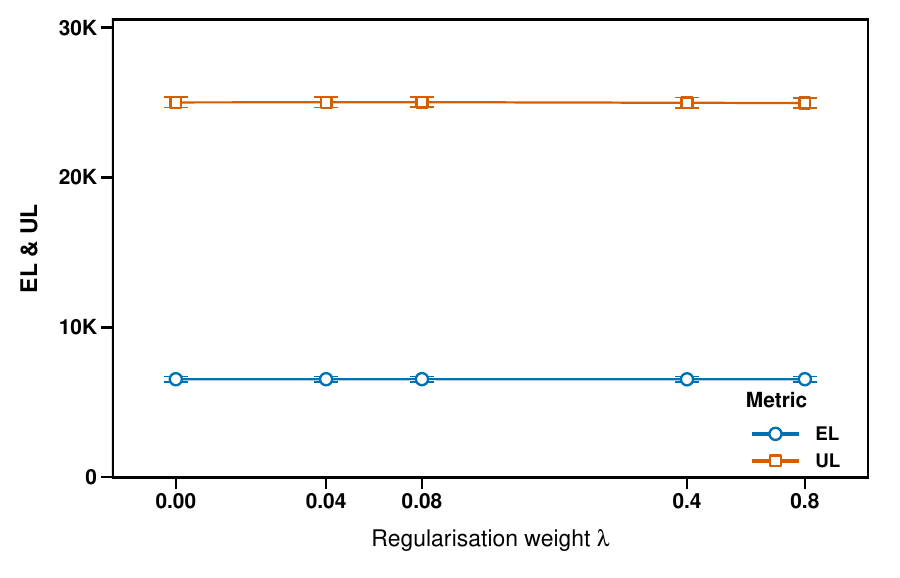}}
\subcaptionbox{Revenue.}{
\includegraphics[width=0.33\textwidth,keepaspectratio]{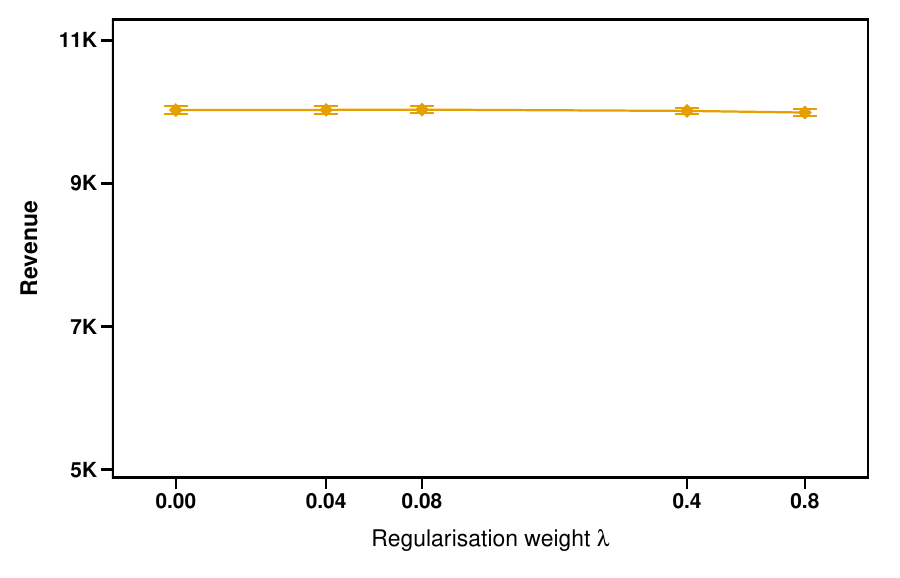}}
\subcaptionbox{Risk-adjusted return on capital.}{
\includegraphics[width=0.33\textwidth,keepaspectratio]{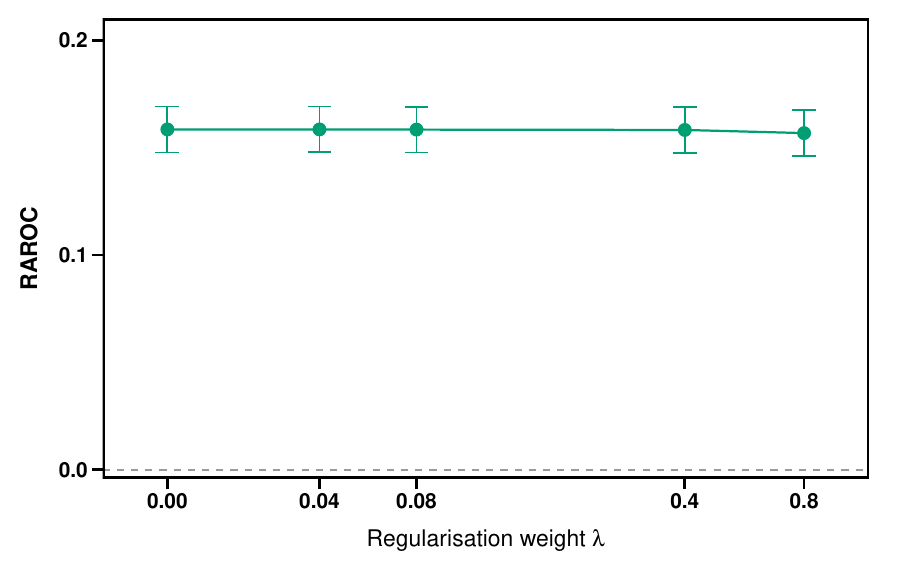}}
\caption{Relationship between credit risk, profitability, and effort disparity reduction under causal effort parity. Each point corresponds to a different value of $\lambda$; error bars show standard errors over five runs.
\label{fig:cg_risk_profit_tradeoff_svm}}
\end{figure}

\subsection{Effects on Predictive Parity Metrics}
Under the SVM classifier, improvements in predictive parity also emerge when effort parity is enforced (see Table~\ref{tab:sm_svm_fi_pp}-\ref{tab:sm_svm_cg_pp}).

\begin{table}[H]
\centering
\footnotesize
\caption{Predictive parity metrics under feature-independent effort parity. Entries report mean (std) over five runs.}
\label{tab:sm_svm_fi_pp}
\begin{tabular}{lccc}
\toprule
$\lambda$ & SP & EO & PPV \\
\midrule
0.00 & 0.00003 (0.00003) & 0.01430 (0.00551) & 0.00452 (0.00173) \\
0.04 & 0.00002 (0.00003) & 0.01418 (0.00548) & 0.00453 (0.00175) \\
0.08 & 0.00002 (0.00003) & 0.01399 (0.00533) & 0.00455 (0.00177) \\
0.40 & 0.00002 (0.00002) & 0.01265 (0.00474) & 0.00475 (0.00165) \\
0.80 & 0.00002 (0.00002) & 0.01166 (0.00396) & 0.00482 (0.00157) \\
\bottomrule
\end{tabular}
\end{table}

\begin{table}[H]
\centering
\footnotesize
\caption{Predictive parity metrics under causal effort parity. Entries report mean (std) over five runs.}
\label{tab:sm_svm_cg_pp}
\begin{tabular}{lccc}
\toprule
$\lambda$ & SP & EO & PPV \\
\midrule
0.00 & 0.00003 (0.00003) & 0.01430 (0.00551) & 0.00452 (0.00173) \\
0.04 & 0.00003 (0.00003) & 0.01434 (0.00550) & 0.00452 (0.00174) \\
0.08 & 0.00003 (0.00003) & 0.01437 (0.00550) & 0.00452 (0.00175) \\
0.40 & 0.00003 (0.00003) & 0.01447 (0.00546) & 0.00452 (0.00176) \\
0.80 & 0.00002 (0.00003) & 0.01444 (0.00527) & 0.00452 (0.00173) \\
\bottomrule
\end{tabular}
\end{table}

\section{Robustness Tests--Individual Effort Parity}
\label{sec:sm_individual_effort}
To test the robustness from comprehensive aspects, we also test the effort disparity and the mitigation at the individual level.

We define \emph{counterfactual $S$-twin}, denoted $\bmv_S^{\texto} \equiv \bmV_{\chi_S}(\bmu)$, as representing what an individual's features would be under a counterfactual change to their protected attribute, holding all exogenous factors constant. For example, the $S$-twin of a female applicant ($S=0$) is the hypothetical version of that individual had they been male ($S=1$), with all causally downstream features adjusted according to the structural equations. This construct underlies counterfactual fairness \citep{kusner_counterfactual_2017}, which requires that predictions be invariant to counterfactual changes in the protected attribute. Whilst counterfactual fairness concerns the \emph{prediction} an individual receives, our framework employs the $S$-twin to assess the \emph{effort} required to change that prediction.

\subsection{Individual Effort Parity}

Group-level criteria ensure that \emph{average} efforts are balanced but permit individual-level disparities. A stronger notion requires that each rejected applicant face the same effort as their counterfactual $S$-twin.

\begin{definition}[Individual Effort Parity]
\label{def:causal_individual_parity}
A classifier satisfies \emph{individual effort parity} if, for every rejected applicant $\bmv^{\texto} \in \mathcal{D}^-$, the minimal causal effort equals that of their counterfactual $S$-twin:
\begin{equation}
r^*(\bmv^{\texto}) = r^*(\bmv_S^{\texto}), \quad \forall \, \bmv^{\texto} \in \mathcal{D}^-.
\end{equation}
\end{definition}

This embodies a strong notion of fairness; an individual's path to approval should not depend on their demographic group membership, conditional on all other causally relevant factors. The corresponding \emph{individual effort disparity} is:
\begin{equation}\label{eq:ic_disparity}
\Delta^{\text{IC}}(\bmtheta) = \frac{1}{|\mathcal{D}^-|} \sum_{\bmv^i \in \mathcal{D}^-} \left| r^*(\bmv^i) - r^*(\bmv_S^i) \right|.
\end{equation}

\subsection{Individual Effort Parity Performance}
Fig.~\ref{fig:sm_ci_tradeoff} shows that individual effort regularisation achieves substantial disparity reductions, exceeding 80\% at $\lambda=0.40$, with minimal loss in predictive performance or profit. Profit decreases only modestly relative to baseline, demonstrating that even stringent individual-level fairness remains financially viable. Table~\ref{tab:sm_ci_performance} reports detailed performance metrics.

\begin{figure}[H]
\centering
\subcaptionbox{Fairness improvement.}{
\includegraphics[height=5.0cm,keepaspectratio]{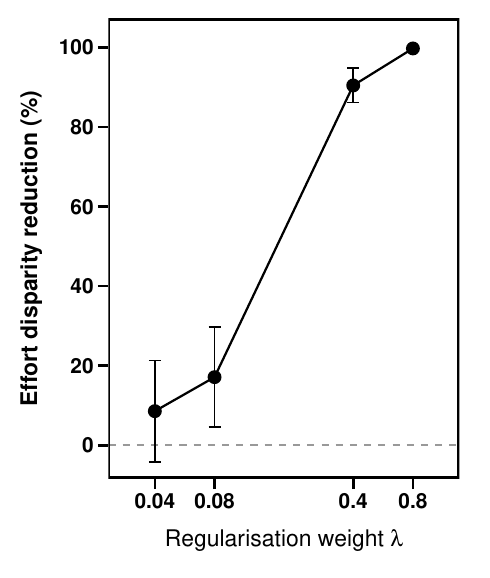}}
\subcaptionbox{Model performance stability.}{
\includegraphics[height=5.0cm,keepaspectratio]{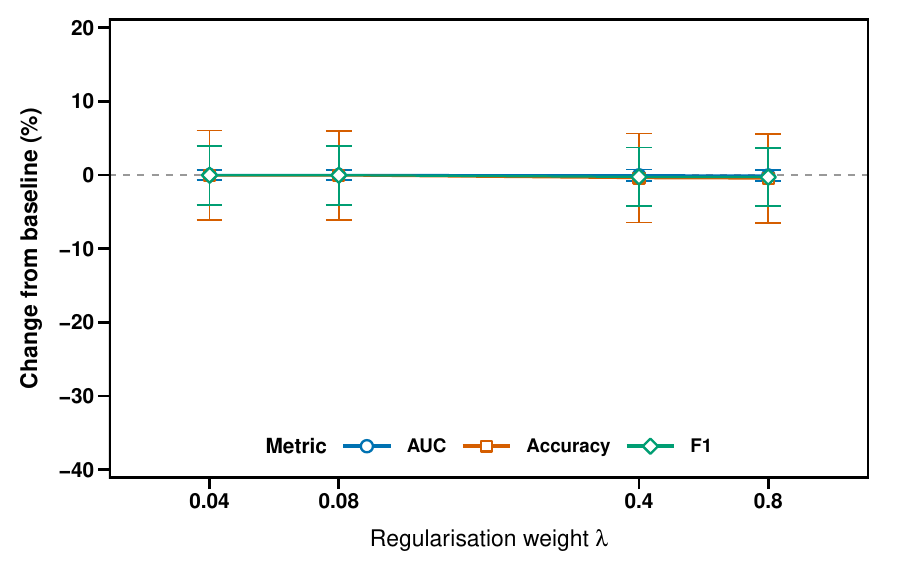}}
\caption{Relationship between predictive performance and effort disparity reduction under individual effort parity. Effort disparity reduction is measured as a percentage relative to the baseline model. Each point corresponds to a different value of $\lambda$; error bars show standard errors over five runs. For readability, the horizontal axis is displayed on a log scale.
\label{fig:sm_ci_tradeoff}}
\end{figure}

\begin{figure}[H]
\centering
\subcaptionbox{Expected and unexpected loss.}{
\includegraphics[width=0.33\textwidth,keepaspectratio]{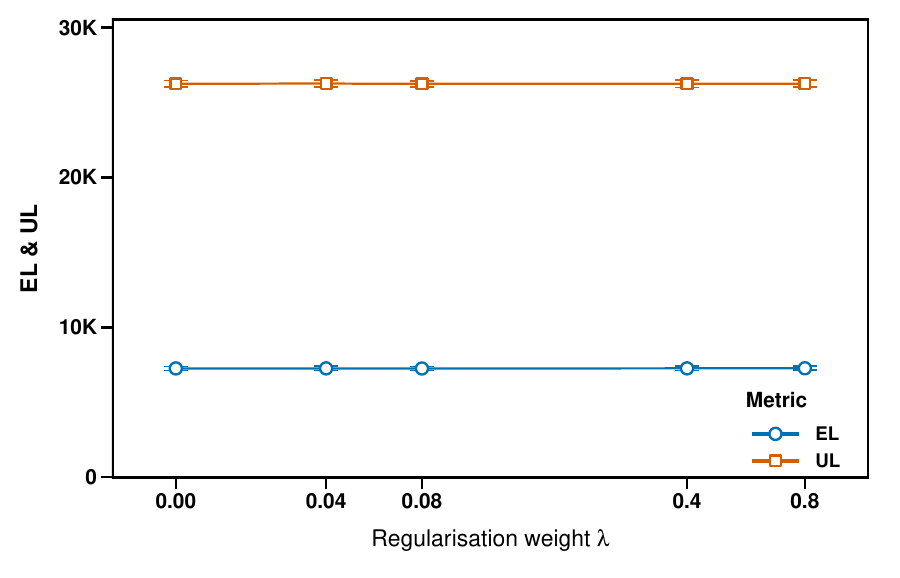}}
\subcaptionbox{Revenue.}{
\includegraphics[width=0.33\textwidth,keepaspectratio]{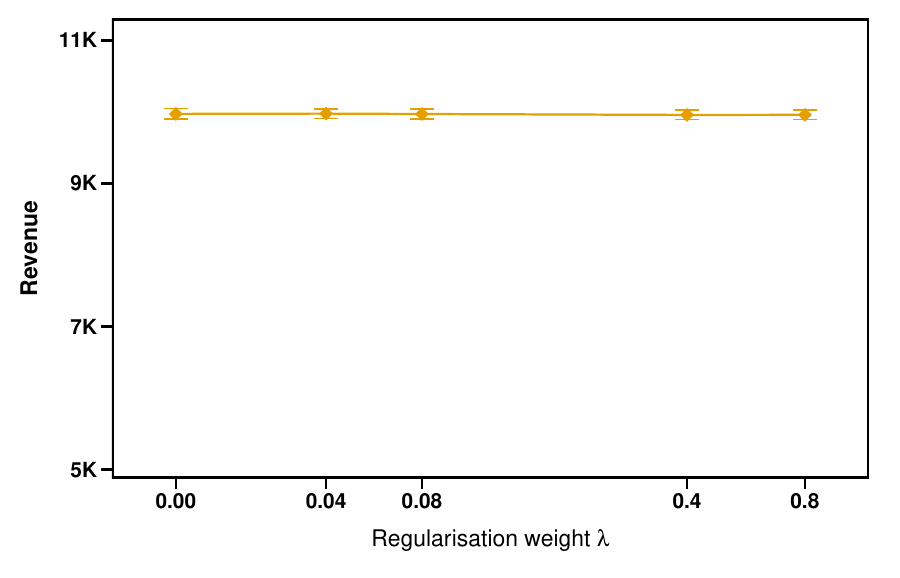}}
\subcaptionbox{Risk-adjusted return on capital.}{
\includegraphics[width=0.33\textwidth,keepaspectratio]{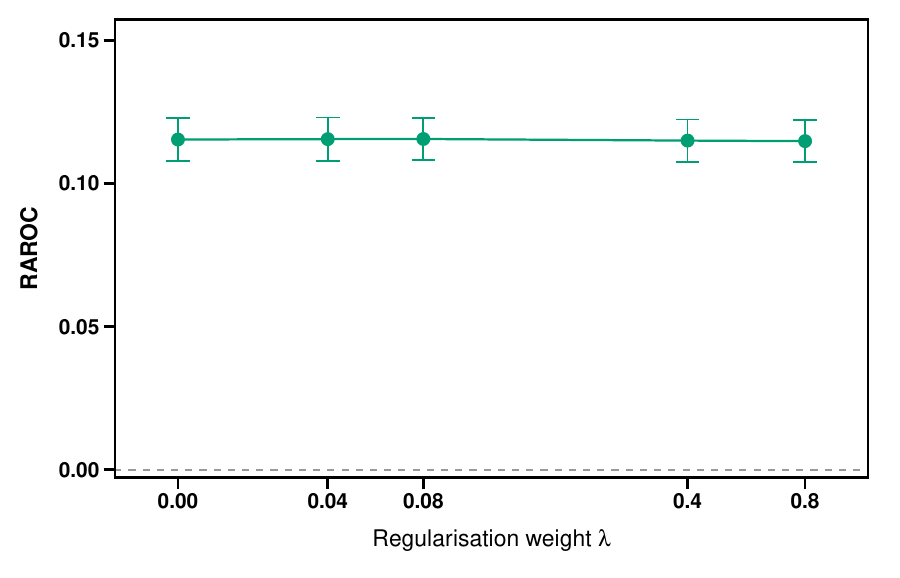}}
\caption{Relationship between credit risk, profitability, and effort disparity reduction under individual effort parity. Each point corresponds to a different value of $\lambda$; error bars show standard errors over five runs.
\label{fig:sm_ci_risk_profit_tradeoff}}
\end{figure}

\begin{table}[H]
\centering
\footnotesize
\caption{Performance metrics under individual effort parity across $\lambda$ values. Mean (std) over five runs. Gap reduction is computed relative to $\lambda = 0$.}
\label{tab:sm_ci_performance}
\resizebox{\textwidth}{!}{
\begin{tabular}{lccccccccc}
\toprule
$\lambda$ & AUC & Accuracy & F1 & EL & UL & Revenue & RAROC & Gap Reduction (\%) \\
\midrule
0.00 & 0.724 (0.011) & 0.719 (0.098) & 0.823 (0.074) & 7251.48 (302.92) & 26258.26 (497.44) & 9972.07 (169.85) & 0.115 (0.017) & 0.00 \\
0.04 & 0.724 (0.011) & 0.718 (0.098) & 0.823 (0.074) & 7253.06 (301.35) & 26269.14 (474.09) & 9976.00 (152.28) & 0.115 (0.017) & 8.56 \\
0.08 & 0.724 (0.012) & 0.718 (0.098) & 0.823 (0.074) & 7247.09 (292.00) & 26250.76 (469.50) & 9970.48 (161.08) & 0.116 (0.017) & 17.11 \\
0.40 & 0.724 (0.013) & 0.716 (0.097) & 0.821 (0.073) & 7255.69 (313.28) & 26250.93 (544.52) & 9957.38 (144.74) & 0.115 (0.017) & 90.43 \\
0.80 & 0.723 (0.013) & 0.715 (0.097) & 0.821 (0.073) & 7261.58 (313.82) & 26263.75 (545.54) & 9958.90 (146.37) & 0.115 (0.017) & 99.74 \\
\bottomrule
\end{tabular}
}
\end{table}

\subsection{Effects on Predictive Parity Metrics}

Table~\ref{tab:sm_ci_pp} reports predictive parity metrics under individual effort regularisation. The results are consistent with those in the main text.

\begin{table}[H]
\centering
\footnotesize
\caption{Predictive parity metrics under individual effort parity.}
\label{tab:sm_ci_pp}
\begin{tabular}{lccc}
\toprule
$\lambda$ & SP & EO & PPV \\
\midrule
0.00 & 0.00002 (0.00002) & 0.01273 (0.00472) & 0.00471 (0.00179) \\
0.04 & 0.00002 (0.00002) & 0.01284 (0.00467) & 0.00469 (0.00179) \\
0.08 & 0.00002 (0.00002) & 0.01295 (0.00462) & 0.00468 (0.00179) \\
0.40 & 0.00002 (0.00002) & 0.01382 (0.00414) & 0.00461 (0.00185) \\
0.80 & 0.00002 (0.00002) & 0.01389 (0.00399) & 0.00460 (0.00185) \\
\bottomrule
\end{tabular}
\end{table}

\bibliographystyle{plainnat}
\bibliography{reference}

\end{document}